\def\phi{\varphi}
\DeclareMathOperator{\U}{\text{\sf U}}
\DeclareMathOperator{\X}{\text{\sf X}}
\newtheorem{definition}{Definition}
\newtheorem{example}{Example}
\newtheorem{theorem}{Theorem}
\newtheorem{lemma}{Lemma}
\newcommand{\TRAN}[1]{\xrightarrow[]{#1}}
\newcommand{\TRANP}[1]{\xrightarrow[]{#1}_{\text{P}}}
\newcommand{\AP}{\mathit{AP}}
\newcommand{\WBS}{\approx}
\newcommand{\iBS}[1]{\sim_{#1}}
\newcommand{\DIRAC}[1]{\mathcal{D}_{#1}}
\newcommand{\TR}{\mathit{tr}}
\newcommand{\RPLUS}{\mathit{R}^{>0}}
\newcommand{\Real}{\mathit{R}^{\ge 0}}
\newcommand{\DIST}{\mathit{Dist}}
\newcommand{\MC}[1]{\mathcal{#1}}
\newcommand{\XI}[1][I]{\X^{#1}}
\newcommand{\UI}[1][I]{\U^{#1}}
\newcommand{\Path}[1]{\mathit{Paths}^{#1}}
\newcommand{\Len}[1]{|#1|}
\newcommand{\iPath}[2]{#1[#2]}
\newcommand{\lsPath}[1]{#1\!\downarrow}
\newcommand{\Time}{\mathit{time}}
\newcommand{\CONCAT}{\!\!\smallfrown\!\!}
\newcommand{\AT}{@}
\newcommand{\NEXT}{\mathit{Steps}}
\newcommand{\RATE}{\lambda}
\newcommand{\Sch}{\pi}
\newcommand{\Field}{\mathfrak{F}}
\newcommand{\PM}{\mathit{Pr}}
\newcommand{\PRE}[2][\omega]{#1|^{#2}}
\newcommand{\SUF}[2][\omega]{#1|_{#2}}
\newcommand{\INFPATH}{\infty}
\newcommand{\SUPP}{\mathit{Supp}}
\newcommand{\CSL}{\text{CSL}}
\newcommand{\CTMDP}{\text{CTMDP}}
\newcommand{\CTMC}{\text{CTMC}}
\newcommand{\MA}{\text{MA}}
\newcommand{\PA}{\text{PA}}
\newcommand{\MDP}{\text{MDP}}
\newcommand{\IMC}{\text{IMC}}
\newcommand{\PCTL}{\text{PCTL}}
\newcommand{\CTL}{\text{CTL}}
\newcommand{\MTL}{\text{MTL}}
\newcommand{\CSLstar}{\text{CSL}^{*}}
\newcommand{\INTER}[2]{#1\parallel#2}
\newcommand{\PAR}[2]{#1\parallel#2}
\newcommand{\SAT}{\mathit{Sat}}
\newcommand{\MI}[1]{\mathit{#1}}
\newcommand{\ESET}[2][\MC{R}]{[#2]_{#1}}
\newcommand{\ABSORB}[1]{#1_{\!\bot}}
\newcommand{\UNIFORM}[1]{\bar{#1}}
\newcommand{\ABS}[1]{|#1|\ }
\newcommand{\SUC}{\MI{Suc}}
\begin{document}
\title{Bisimulations and Logical Characterizations on Continuous-time Markov Decision Processes}
\titlerunning{Bisimulations and Logical Characterizations on $\CTMDP$s}
\author{Lei Song\inst{1}
\and
Lijun Zhang\inst{2} \and Jens Chr. Godskesen\inst{3}
}

\institute
{
\inst{}
  Max-Planck-Institut f\"{u}r Informatik and   Saarland University, Saarbr\"{u}cken, Germany
  \and
  \inst{}
  State Key Laboratory of Computer Science, 
Institute of Software, Chinese Academy of Sciences 
  \and
 \inst{}%
  Programming, Logic, and Semantics Group,
  IT University of Copenhagen, Denmark  
}

\maketitle

\begin{abstract}
  In this paper we study strong and weak bisimulation equivalences for
  continuous-time Markov decision processes ($\CTMDP$s) and the logical
  characterizations of these relations with respect to the
  continuous-time stochastic logic ($\CSL$). For strong bisimulation,
  it is well known that it is strictly finer than $\CSL$
  equivalence. In this paper we propose strong and weak bisimulations for
  $\CTMDP$s and show that for a subclass of $\CTMDP$s, strong and weak
  bisimulations are both sound and complete with respect to the
  equivalences induced by $\CSL$ and the sub-logic  of $\CSL$ without next
  operator respectively. We then consider a standard extension of $\CSL$,
  and show that it and its sub-logic without $\X$ 
  can be fully characterized by strong and weak bisimulations 
  respectively over arbitrary $\CTMDP$s.
\end{abstract}

\section{Introduction}
Recently, continuous-time Markov decision processes ($\CTMDP$s) have
received extensive attention in the model checking community, see for
example
\cite{BaierHKH05,WolovickJ2007,MartinK2007,NeuhausserSK09,BuSc10,RabeS11}.
%,BrazdilFKKK09,NeuhausserZ10,BuchholzHHZ11,fssttcs}.
Analysis techniques for $\CTMDP$s suffer especially from the state space explosion problem. 
Thus, as for other stochastic models,
bisimulation relations have been proposed for $\CTMDP$s. In
\cite{MartinK2007}, strong bisimulation was shown to be sound with respect to the
continuous-time stochastic logic~\cite{Aziz1996VCT} ($\CSL$). This result guarantees that
one can first reduce a $\CTMDP$ up to bisimulation equivalence before analysing it.
On the other hand,
as indicated in~\cite{MartinK2007}, strong bisimulation is
not complete with respect to $\CSL$, i.e., logically equivalent states
might be not bisimilar.

$\CTMDP$s extend Markov decision processes ($\MDP$s) with
exponential sojourn time distributions, and subsume models such as
labelled transition systems and Markov chains as well.  While linear
and branching time equivalences have been studied for
these sub-models \cite{Glabbeek1,Glabbeek93,BaierKHW05,SongZG2011},
we extend these results to the setting of $\CTMDP$s.  In this paper we study
strong and weak bisimulation relations for $\CTMDP$s, and the
logical characterization problem of these relations with respect to $\CSL$ and its sub-logics.

We start with a slightly coarser notion of strong bisimulation than
the one in~\cite{MartinK2007}, and then propose \emph{weak
  bisimulation} for $\CTMDP$s. We study the relationship between
strong and weak bisimulations and the logical equivalences induced by
$\CSL$ and $\CSL_{\backslash \!\X}$ -- the sub-logic of $\CSL$ without
next operators. Our first contribution is to identify a subclass of
$\CTMDP$s under which our strong and weak bisimulations coincide with
$\CSL$ and $\CSL_{\backslash \!\X}$ equivalences respectively. We
discuss then how this class of $\CTMDP$s can be efficiently
determined, and moreover, we argue that most models arising in 
practice are among this class.

As for labelled transition systems and $\MDP$s, we also 
define an extension of $\CSL$, called $\CSLstar$, which 
is more distinguishable than $\CSL$. Surprisingly, $\CSLstar$ is able to
fully characterize strong bisimulation over arbitrary $\CTMDP$s, similarly for the
sub-logic without next operator and weak bisimulation.

Since $\CTMDP$s can be seen as models combining $\MDP$s and continuous-time Markov Chains
($\CTMC$s), we will discuss the downward compatibility of the relations
with those for $\MDP$s~\cite{SegalaL95} and $\CTMC$s in
\cite{BaierKHW05}.
Summarizing, the paper contains the following contributions:
\begin{enumerate}
\item We extend strong probabilistic bisimulation defined in~\cite{SegalaL95}
over probabilistic automata to $\CTMDP$s, and then prove that
it coincides with $\CSL$ equivalence for a subclass of $\CTMDP$s;
\item We propose a scheme to determine the subclass of $\CTMDP$s efficiently, and
show that many models in practice are in this subclass;
\item We introduce a new notion of weak bisimulation for $\CTMDP$s, 
and show its characterization results with respect to $\CSL_{\backslash\!\X}$;
\item We present a standard extension of $\CSL$ that is shown to be both sound and complete
 with respect to strong and weak bisimulations for arbitrary $\CTMDP$s.
\end{enumerate}

\paragraph{Related work.}
Logical characterizations of bisimulations have been studied
extensively for stochastic models.  For $\CTMC$s, $\CSL$ characterizes
strong bisimulation, while $\CSL$ without next operator characterizes
weak bisimulation~\cite{BaierKHW05}. Our results in this paper are
conservative extensions for both strong and weak bisimulations from $\CTMC$s
to $\CTMDP$s.
In~\cite{DesharnaisP03}, the results are extended to $\CTMC$s with
continuous state spaces.

For $\CTMDP$s, the first logical characterization result is presented
in \cite{MartinK2007}. It is shown that strong bisimulation is sound, but not
complete with respect to $\CSL$ equivalence. 
In this paper, we introduce strong and weak bisimulation relations for
$\CTMDP$s. For a subclass of $\CTMDP$s, i.e., those without \textit{2-step recurrent}
states, we show that they are also complete for $\CSL$ and
$\CSL_{\backslash\!\X}$ equivalences respectively.

For probabilistic automata ($\PA$s), Hennessy-Milner logic has been
extended to characterize bisimulations
in~\cite{Jonsson,DArgenioWTC09,HermannsPSWZ11}.
In~\cite{DesharnaisGJP10}, Desharnais \emph{et al.} have shown that
weak bisimulation agrees with $\PCTL^*$ equivalence for alternative $\PA$s.  
Another related paper for $\PA$s is our previous paper \cite{SongZG2011}, in
which we have introduced $i$-depth bisimulations to
characterize logical equivalences induced by $\PCTL^*$ and its
sub-logics. 

All proofs are found in the full version of this paper~\cite{SongBTS2012}.

\paragraph*{Organization of the paper.}
Section~\ref{sec:pre} recalls the definition of $\CTMDP$s and the
logic $\CSL$. Variants of bisimulation relations and their
corresponding logical characterization results are studied in Section
\ref{sec:bisimulation}. In Section~\ref{sec:mtl} we present
the extension of $\CSL$ that fully characterizes strong and weak
bisimulations. We discuss in Section~\ref{sec:relation} related work
with $\MDP$s and $\CTMC$s.  Section~\ref{sec:conclusion} concludes the
paper.

\section{Preliminaries}\label{sec:pre}
%\paragraph{Distributions.}
For a finite set $S$, a distribution is  a function $\mu:S\to [0,1]$ satisfying $\ABS{\mu}:=\sum_{s\in S}\mu(s)= 1$.
We denote by $\DIST(S)$ the set of distributions
over $S$.  We shall use $s,r,t,\ldots$ and $\mu,\nu\ldots$ to range
over $S$ and $\DIST(S)$, respectively. The support of $\mu$ is
defined by $\SUPP(\mu)=\{s\in S \mid \mu(s)>0\}$. 
Given a finite set of non-negative real numbers $\{p_j\}_{j\in J}$ and distributions $\{\mu_j\}_{j\in J}$ 
such that $\sum_{j\in J}p_i=1$ for each $j\in J$, 
$\sum_{j\in J}p_j\cdot\mu_j$ is the
distribution such that $(\sum_{j\in J}p_j\cdot\mu_j)(s)=\sum_{j\in J}p_j\cdot\mu_j(s)$ for each
$s\in S$.
For an equivalence relation $\mathcal{R}$ over $S$, 
we write $\mu~\mathcal{R}~\nu$ if it holds that $\mu(C)=\nu(C)$ for all equivalence classes 
$C\in S/\mathcal{R}$ where $\mu(C)=\sum_{s\in C}\mu(s)$,
and moreover $\ESET{s}=\{r\mid s~\MC{R}~r\}$ is the equivalence class of 
$S/\MC{R}$ containing $s$. The subscript $\MC{R}$ will be omitted if it is clear from the context. A 
distribution $\mu$ is called \emph{Dirac} if $|\SUPP(\mu)|=1$, and we let $\DIRAC{s}$ denote the Dirac 
distribution such that $\DIRAC{s}(s)=1$. 
We let $\Real$ and $\RPLUS$ denote the set of non-negative and positive real numbers respectively.
%Let $\MC{R}$ be a relation over $S$, define
%$\UPWARD{\MC{R}}{C}=\{r\mid s~\MC{R}~r\land s\in C\}$ and $\DOWNWARD{\MC{R}}{C}=\{r\mid %r~\MC{R}~s\land s\in C\}$. We say $C$ is $\MC{R}$ \emph{upward closed} iff $C=\UPWARD{\MC{R}}{C}$, %and similarly $C$ is $\MC{R}$ \emph{downward closed} iff $C=\DOWNWARD{\MC{R}}{C}$. 

\subsection{Continuous-time Markov Decision Processes}
Below follows the definition of $\CTMDP$s, which subsume both $\MDP$s and $\CTMC$s.  

\begin{definition}[Continuous-time Markov Decision Processes]\label{def:ctmdp}
A tuple $\MC{C}=(S, \rightarrow, \AP, L, s_0)$ is a \emph{$\CTMDP$} where $s_0\in S$ is the initial state,
$S$ is a finite but non-empty set of states,
$\AP$ is a finite set of atomic propositions, 
$L:S\mapsto 2^{\AP}$ is a labelling function, and
$\rightarrow\subseteq S\times\RPLUS\times\DIST(S)$ is a finite transition relation 
such that for each $s\in S$, there exists $\lambda$ and $\mu$ with $(s,\lambda,\mu)\in\rightarrow$.
\end{definition}
From Definition~\ref{def:ctmdp} we can see that
there are both non-deterministic and probabilistic transitions in a $\CTMDP$.
We write $s\TRAN{\lambda}\mu$ if $(s,\lambda,\mu)\in\ \rightarrow$, where 
$\lambda$ is called exit rate of the transition.
Let $\SUC(s)=\{r\mid\exists(s\TRAN{\lambda}\mu).\mu(r)>0\}$ denote the
successor states of $s$, and let $\SUC^*(s)$ be its transitive closure. 
A state $s$ is said to be \emph{silent} iff for all
$s_1,s_2\in\SUC^*(s)$, $L(s_1)=L(s_2)$ and $s_1\TRAN{\lambda}\mu_1$ implies $s_2\TRAN{\lambda}\mu_2$.
Intuitively, a state $s$ is silent if all its reachable states have the same labels as $s$. In addition,
they have transitions with the same exit rates as transitions of $s$. States like $s$ are called silent, since
it is not distinguishable from all its successors, either by labels or sojourn time
of states. Therefore a silent state $s$ and all its successors can be represented 
by a single state which is the same as $s$ but with all its outgoing transitions 
leading to itself.
A $\CTMC$ is a deterministic $\CTMDP$ satisfying the
condition: $s\TRAN{\lambda}\mu$ and $s\TRAN{\lambda'}\mu'$ imply
$\lambda=\lambda'$ and $\mu=\mu'$ for any $s\in S$.

\subsection{Paths, Uniformization, and Measurable Schedulers}
Let $\MC{C}=(S,\rightarrow,\AP,L,s_0)$ be a $\CTMDP$ fixed for the remainder of the paper.
Let $\Path{n}(\MC{C})=S\times(\RPLUS\times S)^n$ denote the set containing paths of 
$\MC{C}$ with length $n$. The set of all finite paths of
$\MC{C}$ is the union of all finite paths $\Path{*}(\MC{C})=\cup_{n\geq0}\Path{n}(\MC{C})$. Moreover,
$\Path{\INFPATH}(\MC{C})=S\times(\RPLUS\times S)^{\INFPATH}$ contains
all infinite paths and
$\Path{}(\MC{C})=\Path{*}(\MC{C})\cup\Path{\INFPATH}(\MC{C})$ is the
set of all (finite and infinite) paths of $\MC{C}$. 
Intuitively, a path is comprised of an alternation of states and their sojourn time. 
To simplify the discussion we introduce some  notations. Given a path
$\omega=s_0,t_0,s_1,t_1\cdots s_{n}\in\Path{n}(\MC{C})$,
$\Len{\omega}=n$ is the length of $\omega$, $\lsPath{\omega}=s_{n}$ is
the last state of $\omega$, $\PRE{i}=s_0,t_0,\cdots,s_i$ is the prefix
of $\omega$ ending at the $(i+1)$-th state, and
$\SUF{i}=s_i,t_i,s_{i+1},\cdots$ is the suffix of $\omega$ starting
from the $(i+1)$-th state, and  $\omega\CONCAT(t_{n},s_{n+1})$ is the
path obtained by extending $\omega$ with $(t_{n},s_{n+1})$. Let $\iPath{\omega}{i}=s_i$
denote the $(i+1)$-th state where $i\le n$ and $\Time(\omega,i)=t_i$ the sojourn time in the
$(i+1)$-th state with $i<n$. Let $\omega\AT t$ be the state
at time $t$ in $\omega$, that is, $\omega\AT
t=\iPath{\omega}{j}$ where $j$ is the smallest index such that
$\sum_{i=0}^j t_i>t$. Moreover,
$\NEXT(s)=\{(\RATE,\mu)\mid(s,\RATE,\mu)\in\ \rightarrow\}$ is
the set of all available choices at state $s$. 
Let $\{I_i\subseteq[0,\infty)\}_{0\leq i\leq k}$ denote a set of non-empty closed intervals, 
then $C(s_0,I_0,\cdots,I_{k},s_{k+1})$ is the \emph{cylinder set} of paths
$\omega\in\Path{\INFPATH}(\MC{C})$ such that $\iPath{\omega}{i}=s_i$ for 
$0\le i\le k+1$
and $\Time(\omega,i)\in I_i$ for $0\le i\le k$. Let $\Field_{\Path{\INFPATH}(\MC{C})}$
be the smallest $\sigma$ algebra on $\Path{\INFPATH}(\MC{C})$
containing all cylinder sets.

As shown in~\cite{Baier2003MAC}, model checking of $\CTMC$s 
can be reduced to the problem of computing transient state probabilities,
which can be solved efficiently, for instance by uniformization.
In a uniformized $\CTMC$, all states will evolve at the same speed, i.e.,
all transitions have the same exit rates.   
Similarly, we can also define uniformization of a $\CTMDP$ by uniformizing
the exit rate of all its transitions. 
Below we recall the notion of \emph{uniformization} for $\CTMDP$s \cite{BuSc10,NeuhausserSK09}.
\begin{definition}[Uniformization]
\label{def:uniformization}
Given a \emph{$\CTMDP$} $\MC{C}=(S,\rightarrow,\AP,L,s_0)$, the uniformized \emph{$\CTMDP$} is 
denoted as $\UNIFORM{\MC{C}}=(\UNIFORM{S},\rightarrow',\AP,\UNIFORM{L},\UNIFORM{s_0})$ 
where 
\begin{enumerate}
\item $\UNIFORM{S}=\{\UNIFORM{s}\mid s\in S\}$,  $\UNIFORM{s_0}\in\UNIFORM{S}$ is the initial state,
\item $\UNIFORM{L}(\UNIFORM{s})=L(s)$ for each $s\in S$, and
\item $(\UNIFORM{s},E,\UNIFORM{\mu})\in\rightarrow'$ iff there exists 
$(s,\lambda,\mu)\in\rightarrow$ and $\UNIFORM{\mu}=\frac{\lambda}{E}\cdot\mu' + (1-\frac{\lambda}{E})\cdot\DIRAC{\UNIFORM{s}}$ such that $\mu'(\UNIFORM{r})=\mu(r)$ for each $r\in\SUPP(\mu)$,
\end{enumerate}
Here  $E$ 
is the uniformization rate for $\UNIFORM{\MC{C}}$, which is 
a real number equal or greater than all the rates appearing in $\MC{C}$. 
%A $\CTMDP$ $\MC{C}$ is uniformized iff for any $(s_1,\lambda_1,\mu_1)\in\rightarrow$ and $(s_2,\lambda_2,\mu_2)\in\rightarrow$, $\lambda_1=\lambda_2$.
\end{definition}
By uniformization for each transition $(s,\lambda,\mu)$ we add a self loop to $s$ with rate equal to 
$E$ minus the original rate $\lambda$.
After uniformization every state will have a unique exit rate on all its transitions. 
As we will show later, this transformation
will not change the properties we are interested in 
under certain classes of schedulers.

Due to the existence of non-deterministic choices in $\CTMDP$s, we need to resolve
them to define
probability measures. As usual, 
non-deterministic choices in $\CTMDP$s are resolved by schedulers (or policies or adversaries), 
which generate a distribution over the available transitions based on the
given history information. Different classes of schedulers can be defined depending on
the information a scheduler can use in order to choose the next transition. 
However not all of them are suitable for our purposes, which we will explain later.
In this paper, we shall focus on one specific class of schedulers, 
called \textit{measurable total time positional schedulers} (TTP)~\cite{NeuhausserSK09},
which is defined as follows:
\begin{definition}[Schedulers]\label{def:scheduler}
A scheduler $\Sch:S\times\Real\times(\RPLUS\times\DIST(S))\mapsto [0,1]$ is measurable if 
$\Sch(s,t,\cdot)\in\DIST(\NEXT(s))$ for all $(s,t)\in S\times\Real$ and 
$\Sch(\cdot,\TR)$ are measurable for all $\TR\in 2^{(\RPLUS\times\DIST(S))}$,
where 
\begin{itemize}
\item $\Sch(s,t,\cdot)$ is a distribution such that $\Sch(s,t,\cdot)(\lambda,\mu) = \Sch(s,t,\lambda,\mu)$, and
\item $\Sch(\cdot,\TR):(S\times\Real)\mapsto [0,1]$ is a function such
  that
 for each $(s,t)\in S\times\Real$, it holds   $\Sch(\cdot,\TR)(s,t)=\sum_{(\lambda,\mu)\in\TR}\Sch(s,t,\lambda,\mu)$.
\end{itemize}

\end{definition}

The schedulers defined in Definition~\ref{def:scheduler} are total time positional,
since they make decisions only based on the current state and total elapsed time,
which are the first and second parameters of $\pi$ respectively.
The third parameter and fourth parameter of $\pi$ denote the rate and the resulting distribution
of the chosen transition respectively.  Given the current state $s$, the total elapsed time $t$, and a
transition $(\lambda,\mu)$, $\pi$ will return the probability with which $(\lambda,\mu)$ will be chosen. 
This is a special case of the general definition of schedulers, which can make decisions
based on the full history, for instance visited states and the sojourn time at each state. 
Given a scheduler $\Sch$, a unique probability measure $\PM_{\Sch,s}$ can be 
determined on the $\sigma$-algebra $\Field_{\Path{\INFPATH}(\MC{C})}$ inductively as below: 
$\PM_{\Sch,s}(C(s_0,I_0,\cdots,s_n),tt)=$
\begin{subnumcases}{}
1 & $n=0 \land s=s_0$\\
0 & $s\neq s_0$\\
\mbox{$\int\limits_{t\in I_0}\sum\limits_{(\lambda,\mu)\in\MI{tr}}\Sch(s_0,tt)(\lambda,\mu)\cdot\mu(s_1)
\cdot\lambda e^{-\lambda t}\cdot\PM_{\Sch,s_1}dt$} & otherwise\label{eq:measure}
\end{subnumcases}
where $\PM_{\Sch,s_1}$ is an abbreviation of $\PM_{\Sch,s_1}(C(s_1,\ldots,s_n),tt+t)$,
$\MI{tr}=\NEXT(s_0)$ and  $tt$ is the parameter denoting the total elapsed time.
One nice property of TTP schedulers is that uniformization does not change time-bounded reachability 
under TTP schedulers~\cite{NeuhausserSK09,RabeS11}.
This result can be extended to cover more properties like $\CSL_{\backslash\!\X}$
and $\CSLstar_{\backslash\!\X}$, which shall be introduced soon.

Besides TTP schedulers, there are other different classes of
schedulers for $\CTMDP$s, some of which are insensitive to
uniformization, whereas some of which may gain or lose information after
uniformization, i.e., properties of a $\CTMDP$ may be changed by
uniformization.  To avoid technical overhead in the presentation, we
refer to~\cite{NeuhausserSK09} for an in-depth discussion of these
different classes of schedulers and their relation to uniformization.

\subsection{Continuous Stochastic Logic}
Logical formulas are important for verification purpose, since they offer a
rigorous and unambiguous way to express properties one may want to
check. Probabilistic computation tree logic ($\PCTL$)~\cite{hansson1994logic}
is often used to express properties of probabilistic systems.
In order to deal with probabilistic systems with exponential sojourn time distributions like $\CTMC$s and $\CTMDP$s,
the continuous stochastic logic ($\CSL$) was introduced to reason about
$\CTMC$s~\cite{Aziz1996VCT,Baier2003MAC}, and recently extended to reason about
$\CTMDP$s in~\cite{MartinK2007}. $\CSL$ contains both
state\footnote{The steady-state operator is omitted in this paper for
  simplicity of presentation. } and path formulas whose syntax is
defined by the following BNFs: 
\begin{align*}
 \phi &::= a\mid\neg\phi\mid\phi\land\phi\mid\MC{P}_{\bowtie p}(\psi), \\
  \psi &::= \XI\phi\mid\phi\UI\phi,
\end{align*}
where $a\in\AP$, $p\in[0,1]$, $\bowtie\ \in\{<,\leq,\geq,>\}$, and $I\subseteq[0,\infty)$ 
is a non-empty closed interval. 

We use $s\models\phi$ to denote that $s$ satisfies the state formula $\phi$, 
while $\omega\models\psi$ denotes that $\omega$ satisfies the path formula $\psi$. 
The satisfaction relation for atomic proposition and Boolean operators is standard. 
Below we give the satisfaction relation for the remaining state and path formulas:
\begin{align*}
s_0\models\MC{P}_{\bowtie p}(\psi) &\text{ iff }\forall\Sch.\PM_{\Sch,s_0}(\{\omega\in\Path{\INFPATH}(\MC{C})\mid\omega\models\psi\})\bowtie p,\\
\omega\models\X^I\phi &\text{ iff }\omega[1]\models\phi\land\Time(\omega,0)\in I,\\
\omega\models\phi_1\UI\phi_2&\text{ iff }\exists i.(\sum_{0\le j< i}\Time(\omega,j)\in I\land\omega[i]\models\phi_2\land(\forall 0\le j<i.\omega[j]\models\phi_1)).
\end{align*}
Intuitively, a state $s_0$ satisfies $\MC{P}_{\bowtie p}(\psi)$ iff no matter how we schedule the transitions of $s_0$ and
its successors, the probability of paths starting from $s_0$ and satisfying $\psi$ is always $\bowtie p$.
This operator has the same semantics as in $\PCTL$. Compared to $\PCTL$, 
the main difference arises in the semantics of the path formulas. 
Given a path $\omega$, we say $\omega\models\X^I\phi$, iff the second state in $\omega$ satisfies $\phi$,
moreover the sojourn time in the first state of $\omega$ is within the time interval $I$.
We say $\omega\models\phi_1\U^I\phi_2$, iff along $\omega$, a state satisfying $\phi_2$ can be reached
at some time point in $I$, and all the preceding states if any satisfy $\phi_1$.
If all time bounds are defined to be equal to $[0,\infty)$, i.e., removing time restrictions, $\CSL$ will degenerate to $\PCTL$. 

Different from~\cite{Baier2003MAC} where the semantics of $\CSL$ is continuous,
in this paper we consider pointwise semantics of $\CSL$. 
This is mainly because the semantics of $\CSLstar$ introduced in Section~\ref{sec:mtl} is also pointwise. 
However, results in Section~\ref{sec:bisimulation} are also valid if we consider continuous semantics.

\paragraph*{Logic Equivalences.}
 Let $\MC{L}$ denote some logic.
 We say that $s$ and $r$ are $\MC{L}$-equivalent, denoted by $s~\sim_{\MC{L}}~r$,
 if they satisfy the same set of $\MC{L}$ state formulas, that is,
 $s\models\phi$ iff $r\models\phi$ for all state formulas $\phi$ in $\MC{L}$,
 similarly for $\sim_{\MC{L}_{\backslash\!\X}}$, where $\MC{L}_{\backslash\!\X}$ denotes the sub-logic 
 of $\MC{L}$ without the $\X^I$ operator. In this paper, $\MC{L}$ will denote 
 either $\CSL$ or $\CSLstar$, which we shall introduce in Section~\ref{sec:mtl}.

\section{Bisimilarity and $\CSL$ Equivalence}\label{sec:bisimulation}
In this section, we first introduce the concept of strong bisimulation for $\CTMDP$s,
which can be seen as a variant of strong bisimulation for $\MDP$s. Then 
we define a sub-class of $\CTMDP$s, called non 2-step recurrent $\CTMDP$s, and 
show that strong bisimulation can be fully characterized by $\CSL$ for non
2-step recurrent $\CTMDP$s. We extend the work to the weak setting and show
similar results for weak bisimulation. Finally, we propose an efficient 
scheme to determine non 2-step recurrent $\CTMDP$s and we show 
that almost all $\CTMDP$ models in practice fall into this class.

\subsection{Strong Bisimulation}\label{sec:strong bisimulation}
The definition of strong bisimulation we shall introduce in this section slightly generalizes
the one introduced in \cite{MartinK2007}. The reason is that we adopt the notion of combined transitions, used in~\cite{SegalaL95} to define \textit{strong probabilistic bisimulation}
for $\PA$s. Combined transitions allow transitions induced by convex combinations of several transitions. 
We shall lift its definition to the setting of $\CTMDP$s.
Let $s\TRANP{\lambda}\mu$ iff there exists $\{s\TRAN{\lambda}\mu_j\}_{j\in J}$ and 
$\{p_j\}_{j\in J}$ such that $\sum_{j\in J}p_j=1$, and $\sum_{j\in J}p_j\cdot\mu_j=\mu$.
The combined transitions of a $\CTMDP$ are almost the same as those for $\PA$s except we need to
take care of the rate of each transition. Here we only allow to combine transitions with the same rate,
otherwise we may change non-trivial properties of a $\CTMDP$, which we will explain soon. 
Below follows the definition of strong bisimulation:

\begin{definition}[Strong Bisimulation]\label{def:strong bisimulation}
Let $\mathcal{R}\subseteq S\times S$ be an equivalence relation. $\MC{R}$ is a
strong bisimulation iff $s~\mathcal{R}~r$ implies that $L(s)=L(r)$ and for each $s\TRAN{\lambda}\mu$, 
there exists $r\TRANP{\lambda}\mu'$ such that $\mu~\MC{R}~\mu'$. 

We write $s~\iBS{}~r$ 
whenever there exists a strong bisimulation $\MC{R}$ such that $s~\MC{R}~r$.
Let strong bisimilarity $\iBS{}$ denote the largest strong bisimulation, 
which is equal to the union of all strong bisimulation relations.
\end{definition}

For $s$ and $r$ to be strong bisimilar, the same set of atomic
propositions should hold at $s$ and $r$.  Furthermore, $s$ should be
able to mimic $r$ stepwise and vice versa, that is, whenever $s$ has a
transition with label $\lambda$ leading to a distribution $\mu$, $r$
should also be able to perform a (combined) transition with the same
label to a distribution $\nu$ such that $\mu$ and $\nu$ match with
each other, i.e., $\mu$ and $\nu$ assign the same probability to each
equivalence class $C\in S/\MC{R}$.  Strong bisimulation defined in
Definition~\ref{def:strong bisimulation} is a conservative extension
of strong probabilistic bisimulation for $\PA$s defined
in~\cite{SegalaL95}, in the sense that it coincides with strong
probabilistic bisimulation if we replace $\lambda$ with actions.

The relation defined above is slightly coarser than the one considered in \cite{MartinK2007}, 
where the combined transition $r\TRANP{\lambda}\mu'$ is replaced by the normal transition $r\TRAN{\lambda}\mu'$. 
In~\cite{MartinK2007}, it was also shown that strong bisimulation is only sound but not complete 
with respect to $\CSL$ equivalence. Even though our definition of strong bisimulation is slightly
coarser, it is still too fine for $\CSL$ equivalence as shown in the following theorem:

\begin{theorem}[\cite{MartinK2007}]\label{thm:martin}
$\iBS{}~\subsetneq~\sim_{\CSL}$.
\end{theorem}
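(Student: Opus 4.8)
The statement has two parts: the inclusion $\iBS{}~\subseteq~\sim_{\CSL}$ (soundness) and its strictness (incompleteness). Since this is quoted as Theorem~\ref{thm:martin} from~\cite{MartinK2007}, I would first note that soundness is essentially inherited: strong bisimilarity as defined here is coarser than the bisimulation of~\cite{MartinK2007}, so one might worry that coarsening could break soundness. The plan is therefore to re-verify soundness directly for the combined-transition version. First I would argue by induction on the structure of $\CSL$ state formulas $\phi$ that $s~\iBS{}~r$ implies $s\models\phi \iff r\models\phi$. The Boolean and atomic cases are immediate from $L(s)=L(r)$ and the induction hypothesis. For the case $\phi=\MC{P}_{\bowtie p}(\psi)$, the crux is to show that for every scheduler $\Sch$ from $s$ there is a scheduler $\Sch'$ from $r$ with $\PM_{\Sch,s}(\{\omega\models\psi\}) = \PM_{\Sch',r}(\{\omega\models\psi\})$, and vice versa; combined with the universal quantification over schedulers in the semantics of $\MC{P}_{\bowtie p}$, this gives the equivalence. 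The key technical point is that combined transitions $r\TRANP{\lambda}\mu'$ with $\mu~\iBS{}~\mu'$ can be used to match $s\TRAN{\lambda}\mu$ step by step; the resulting copied scheduler preserves, at each step, the probability of reaching each $\iBS{}$-equivalence class, and since $\psi \in \{\XI\phi', \phi'\UI\phi''\}$ only refers to labels (hence $\iBS{}$-classes by the inductive hypothesis) and sojourn times (whose exponential densities $\lambda e^{-\lambda t}$ depend only on the matched rate $\lambda$), the measures of $\{\omega\models\psi\}$ agree. This is exactly where combining transitions with the \emph{same} rate matters: if we allowed mixing rates, the sojourn-time distribution on the first step would change and $\XI\phi'$ (and time-bounded $\UI$) could be distinguished.

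For strictness, I would exhibit a concrete $\CTMDP$ with two states $s,r$ that are $\CSL$-equivalent but not strong bisimilar. The standard witness in this setting (as in~\cite{MartinK2007}) uses the fact that $\CSL$ can only observe, via nested $\MC{P}_{\bowtie p}$ operators, the \emph{set} of reachable probability values under all schedulers, which under the quantifier $\forall\Sch$ collapses to checking extremal (sup/inf) probabilities; a convex combination of transitions that is not itself among the available transitions can be invisible to $\CSL$. Concretely I would take a state $s$ with two transitions $s\TRAN{\lambda}\mu_1$, $s\TRAN{\lambda}\mu_2$ and a state $r$ which, besides $r\TRAN{\lambda}\mu_1$ and $r\TRAN{\lambda}\mu_2$, also has the genuine transition $r\TRAN{\lambda}\tfrac12\mu_1+\tfrac12\mu_2$ (with the successor states chosen so that $\mu_1,\mu_2$ are not $\iBS{}$-related and the midpoint is not $\iBS{}$-related to either). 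Then $r$ can match $s$'s transitions but $s$ cannot match $r$'s combined-looking genuine transition with any single or combined transition landing in the right union of classes, so $s\not\iBS{} r$; yet since $\CSL$'s $\MC{P}$ operator ranges over \emph{all} schedulers and the extra transition of $r$ yields probability values already sandwiched between those of $\mu_1$ and $\mu_2$, no $\CSL$ formula separates $s$ and $r$. I would verify $s\sim_{\CSL}r$ by induction, showing that for any $\psi$ whose constituent $\phi$'s have been handled, the set of achievable probabilities of $\{\omega\models\psi\}$ from $s$ and from $r$ have the same supremum and infimum.

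The main obstacle I anticipate is the $\CSL$-equivalence direction of strictness: one must show that \emph{no} $\CSL$ formula — including arbitrarily deeply nested ones with time-bounded until — distinguishes $s$ and $r$. This requires a careful argument that nesting $\MC{P}_{\bowtie p}$ does not help, i.e. that the "downward" sub-behaviours from the supports of $\mu_1,\mu_2$ are already $\CSL$-indistinguishable in a strong enough sense (ideally, genuinely bisimilar, so that the only difference between $s$ and $r$ is localized to the one extra transition at the top). Choosing the gadget so that all successor states are pairwise either identical or robustly $\CSL$-equivalent — and then reducing the top-level comparison to the observation that $\forall\Sch$ only sees the convex hull's extreme points — is the delicate part; the rest is routine. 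Since the excerpt attributes this to~\cite{MartinK2007}, in the write-up I would cite their construction and merely indicate how it adapts to the (coarser) combined-transition definition, noting that coarsening the equivalence only makes the strict-inclusion claim easier to maintain on the $\iBS{}\subseteq\sim_{\CSL}$ side and that the separating example is unaffected because it already distinguishes states up to the finer notion.
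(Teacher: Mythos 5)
Your soundness sketch is fine and matches the paper's treatment (the paper simply notes that the proof of~\cite{MartinK2007} adapts directly to the combined-transition definition). The gap is in your strictness witness. You propose to give $r$ an extra transition $r\TRAN{\lambda}\tfrac12\mu_1+\tfrac12\mu_2$, i.e.\ a point of the convex hull of $s$'s distributions. But the whole point of Definition~\ref{def:strong bisimulation} is that $s$ may answer with a \emph{combined} transition: taking weights $\tfrac12,\tfrac12$ on $s\TRAN{\lambda}\mu_1$ and $s\TRAN{\lambda}\mu_2$ yields $s\TRANP{\lambda}\tfrac12\mu_1+\tfrac12\mu_2$, which matches $r$'s extra transition exactly (the two distributions are identical, so they agree on every equivalence class). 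Hence in your gadget $s~\iBS{}~r$ holds and no strictness is witnessed. Your gadget would only refute the finer bisimulation of~\cite{MartinK2007} (without combined transitions), and your closing remark that ``the separating example is unaffected'' by the coarsening is backwards: a counterexample against the finer relation need not survive for the coarser one, and strictness here must be exhibited for the coarser relation, which is strictly harder.

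What is needed is a distribution that lies \emph{outside} the convex hull of $s$'s available distributions, yet whose mass on every union of successor equivalence classes is sandwiched between the extremal values attainable from $s$; only then is it invisible to $\CSL$ (which, via $\forall\Sch$, observes only sup/inf reachability probabilities to $\MC{R}$-closed sets) while being unmatchable by any combined transition. With only two non-bisimilar successor classes this is impossible (probabilities on two complementary classes determine the distribution, so sandwiching forces membership in the hull -- this is exactly the paper's Example~\ref{ex:successors}), so at least three pairwise non-equivalent silent successors are required. The paper's Example~\ref{ex:counterexample completeness} does precisely this: $s_0$ has distributions $(0.3,0.3,0.4)$ and $(0.5,0.4,0.1)$ over $u_1,u_2,u_3$, and $r_0$ additionally has $(0.4,0.3,0.3)$, which is sandwiched on every subset $C\subseteq\{u_1,u_2,u_3\}$ but is not a convex combination of the two (the system $0.3w_1+0.5w_2=0.4$, $0.3w_1+0.4w_2=0.3$, $w_1+w_2=1$ has no solution). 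Your proposal is missing exactly this idea, so the strictness half of the theorem does not go through as written.
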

The proof in \cite{MartinK2007} can be directly adapted to 
prove the soundness of our slightly more general strong bisimulation. 
The inclusion in Theorem~\ref{thm:martin} is strict which is illustrated by the
following example:

\begin{figure*}[!t]
\centering
  \begin{tikzpicture}[->,>=stealth,auto,node distance=1.2cm,semithick]
	\tikzstyle{blackdot}=[circle,fill=black,minimum size=6pt,inner sep=0pt]
	\tikzstyle{state}=[minimum size=15pt,circle,draw,thick]
	\tikzstyle{label}=[minimum size=15pt,circle]
	\node[state](sl1){$u_1$};
	\node[state](sl2)[right of=sl1]{$u_2$};  
  	\node[state](sl3)[right of=sl2]{$u_3$};
  	\node[state](sm1)[right of=sl3]{$u_1$};
  	\node[state](sm2)[right of=sm1]{$u_2$};
  	\node[state](sm3)[right of=sm2]{$u_3$};
 	\node[state](sr1)[right of=sm3]{$u_1$};
  	\node[state](sr2)[right of=sr1]{$u_2$};
  	\node[state](sr3)[right of=sr2]{$u_3$};
	\node[blackdot,yshift=0.8cm](d1)[above of=sl2]{}; 
	\node[blackdot,yshift=0.8cm](d2)[above of=sm2]{}; 
	\node[blackdot,yshift=0.8cm](d3)[above of=sr2]{};
	\node[state](r0)[above of=d2]{$r_0$};   
  \path (r0) edge[-]              node[left,yshift=5pt] {1} (d1)
            	   edge[-]              node {1} (d2)
                   edge[-]              node {1} (d3)
        (d1) edge [bend right, dashed,->]  node[left,yshift=5pt] {0.3} (sl1)
               edge [dashed,->] node {0.3} (sl2)
               edge [bend left, dashed,->] node {0.4} (sl3)
        (d2) edge [bend right, dashed,->] node[left,yshift=5pt] {0.4} (sm1)
               edge [dashed,->] node {0.3} (sm2)
               edge [bend left, dashed,->] node {0.3} (sm3)
		(d3) edge [bend right, dashed,->] node[left,yshift=5pt] {0.5} (sr1)
               edge [dashed,->] node {0.4} (sr2)
               edge [bend left, dashed,->] node {0.1} (sr3); 
     \node[label](b)[below of=sm2,yshift=0.3cm]{(b)};
     \node[label](a)[above of=r0, node distance=1cm]{(a)};
     \node[state](s0)[above of=r0, node distance=5cm]{$s_0$};
     \node[blackdot](d4)[above of=d1, node distance=5cm]{};
      \node[blackdot](d5)[above of=d3, node distance=5cm]{};
     \node[state](rl1)[above of=sl1, node distance=5cm]{$u_1$};
     \node[state](rl2)[right of=rl1]{$u_2$};
     \node[state](rl3)[right of=rl2]{$u_3$};
     \node[state](rr2)[above of=sr2, node distance=5cm]{$u_2$};
     \node[state](rr1)[left of=rr2]{$u_1$};
     \node[state](rr3)[right of=rr2]{$u_3$};        
     \path (s0) edge[-]              node[left,yshift=5pt] {1} (d4)
            	   edge[-]              node {1} (d5)
        (d4) edge [bend right, dashed,->]  node[left,yshift=5pt] {0.3} (rl1)
               edge [dashed,->] node {0.3} (rl2)
               edge [bend left, dashed,->] node {0.4} (rl3)
        (d5) edge [bend right, dashed,->] node[left,yshift=5pt] {0.5} (rr1)
               edge [dashed,->] node {0.4} (rr2)
               edge [bend left, dashed,->] node {0.1} (rr3); 
\end{tikzpicture}
  \caption{\label{fig:counterexample completeness}%
 Counterexample of the completeness of strong bisimulation.}
\end{figure*}
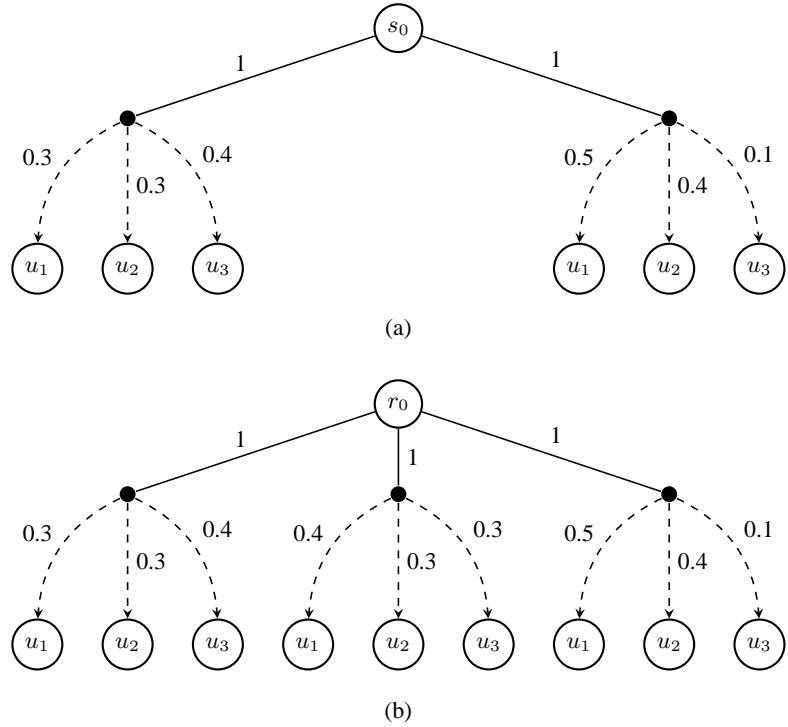

\begin{example}\label{ex:counterexample completeness}
Suppose we are given two states $s_0$ and $r_0$ of a $\CTMDP$ depicted in Fig.~\ref{fig:counterexample completeness}~(a)
and (b) respectively, where all states have different atomic propositions except $L(s_0)=L(r_0)$.
Assume $u_i$ are silent for $i=1,2,3$, our aim is to show that $s_0$ and $r_0$ 
satisfy the same set of $\CSL$ formulas, while they are not strong bisimilar by Definition~\ref{def:strong bisimulation}.

We first show that $s_0~\sim_{\CSL}~r_0$, i.e., $s_0\models\phi$ implies 
$r_0\models\phi$ for any $\phi$ and vice versa. The only non-trivial cases are
the time-bounded reachabilities from $s_0$ and $r_0$ to states in $C\subseteq\{u_1,u_2,u_3\}$. For instance
the maximal probability from $s_0$ and $r_0$ to $\{u_2,u_3\}$ in time interval $[a,b]$ is equal to
 $0.7\cdot(e^{-a}-e^{-b})$, irrelevant of the middle transition of $r_0$. Similarly, we can check that
for other $C$, the maximal (or minimal) probabilities from $s_0$ and $r_0$ to $C$ in time interval $I$
are all independent from the middle transition of $r_0$. Therefore we conclude that $s_0~\sim_{\CSL}~r_0$.

Secondly, we show that it does not hold that $s_0~\sim~r_0$ according to Definition~\ref{def:strong bisimulation}. 
We prove by contradiction. Assume that there exists a strong bisimulation $\MC{R}$ such that $s_0~\MC{R}~r_0$.
By Definition~\ref{def:strong bisimulation}, for the middle transition of $r_0$, i.e., $r_0\TRAN{1}\mu'$ where
$\mu'(u_1)=0.4,\mu'(u_2)=0.3,$ and $\mu'(u_3)=0.3$, we need to find a transition $s_0\TRANP{1}\mu$ of $s_0$
such that $\mu~\MC{R}~\mu'$. Since $u_1,u_2,$ and $u_3$ have different atomic propositions, $(u_i,u_j)\not\in\MC{R}$
for any $1\le i\neq j\le 3$. Therefore the only possibility is that $\mu(u_1)=0.4,\mu(u_2)=0.3,$ and $\mu(u_3)=0.3$. However
that is impossible, such $\mu$ cannot be the resulting distribution of any (combined) transition of $s_0$.
Otherwise there would exist $w_1,w_2>0$ such that $w_1+w_2=1$, $0.3\cdot w_1 + 0.5\cdot w_2=0.4$, and 
$0.3\cdot w_1+0.4\cdot w_2=0.3$ according to the definition of combined transition, which
is clearly not possible. Hence we conclude that $s_0~\not\sim~r_0$, and $\sim$ is finer than
$\sim_{\CSL}$.\qed
\end{example}

In~\cite{RabeS11} randomized schedulers allow to combine transitions with
different rates, i.e., the combined transition is defined as: 
$s\TRANP{\lambda}\mu$ iff there exist 
$\{s\TRAN{\lambda_i}\mu_i\}_{i\in I}$ and $\{p_i\}_{i\in I}$ such that  
$\sum_{i\in I}p_i\cdot\lambda_i=\lambda$  and $\sum_{i\in I}p_i\cdot\mu_i=\mu$, 
where $p_i\in[0,1]$  for each $i\in I$ and $\sum_{i\in I}p_i=1$. 
By adopting this definition of combined transition in Definition~\ref{def:strong bisimulation},
we will obtain a coarser strong bisimulation. However it turns out that
this new definition of strong bisimulation is too coarse for $\CSL$ equivalence, 
since there exist two states which are strong bisimilar according to
the new definition, but they satisfy different $\CSL$ formulas. Refer
to the following example:

\iffalse
\begin{figure*}[!t]
\centering
  \begin{tikzpicture}[->,>=stealth,auto,node distance=2cm,semithick]
	\tikzstyle{state}=[minimum size=15pt,circle,draw,thick]
	\tikzstyle{label}=[minimum size=15pt,circle]
        \tikzstyle{invisiblestate}=[]
     \node[state](s11){$u_1$};
     \node[invisiblestate](s12)[right of=s11]{};
     \node[state](s13)[right of=s11, node distance=4cm]{$u_1$};
     \node[state](s0)[above of=s12]{$s_1$};
     \node[state](s21)[right of=s13]{$u_1$};
     \node[state](r0)[right of=s0,node distance=6cm]{$r_1$};
     \node[state](s22)[right of=s21]{$u_1$};
     \node[state](s23)[right of=s22]{$u_1$};
     \path (s0) edge              node[left,yshift=5pt] {1} (s11)
                edge              node {4} (s13)
           (r0) edge              node[left,yshift=5pt] {1} (s21)
            	edge              node {2} (s22)
                edge              node {4} (s23);              
\end{tikzpicture}
  \caption{\label{fig:combined transition}
 Transitions with different rates cannot be combined.}
\end{figure*}
\fi

\begin{example}\label{ex:2-step recurrent ex2}
Suppose that we have two states $s_1$ and $r_1$ %depicted as in Fig.~\ref{fig:combined transition},
such that $s_1$ has two non-deterministic transitions which can evolve
into $u_1$ with rates 1 or 4 respectively. The state $r_1$ is the same as $s_1$ except
that it can evolve into $u_1$ with an extra transition of rate $2$. 
Also we assume that $L(s_1)=L(r_1)$ and $u_1$
is a silent state with $L(u_1)\not\subseteq L(s_1)$.
Suppose that we adopt the new definition of combined transition in Definition~\ref{def:strong bisimulation}
by allowing to combine transitions with different rates,
we shall show that $s_1$ and $r_1$ are strong bisimilar, but 
they are not $\CSL$-equivalent.

We first show that $s_1$ and $r_1$ are strong bisimilar. Let $\MC{R}$ be an equivalence relation only
equating $s_1$ and $r_1$,
it suffices to prove that $\MC{R}$ is a strong bisimulation. The only non-trivial
case is when $r_1\TRAN{2}\DIRAC{u_1}$, we need to find a matching transition of $s_1$.
Since we allow to combine transitions of different rates, a combined transition
$s_1\TRANP{2}\DIRAC{u_1}$ can be obtained by assigning weights $\frac{2}{3}$ and $\frac{1}{3}$
to transitions $s_1\TRAN{1}\DIRAC{u_1}$ and $s_1\TRAN{4}\DIRAC{u_1}$ respectively.
Therefore we conclude that $s_1$ and $r_1$ are strong bisimilar.

Secondly, we show that $s_1$ and $r_1$ are not $\CSL$ equivalent. It suffices
to find a formula $\phi$ such that $s_1\models\phi$ but $r_1\not\models\phi$.
Let $\psi=\X^{[a,b]} L(u_1)$ where $0\leq a < b$.
The probabilities for paths starting from $s_1$ and satisfying $\psi$
by choosing the transitions with rates 1, 2, and 4 are equal to
$e^{-a}-e^{-b}$, $e^{-2a}-e^{-2b}$, and $e^{-4a}-e^{-4b}$ respectively.
We need only to find $a$ and $b$ such that
$e^{-2a}-e^{-2b} > \max\{e^{-a}-e^{-b}, e^{-4a}-e^{-4b}\}$.
Let $a=0.2$ and $b=1$, then $e^{-a}-e^{-b}\approx0.45$, $e^{-2a}-e^{-2b}\approx0.53$, 
and $e^{-4a}-e^{-4b}\approx 0.43$. Let $\phi=\MC{P}_{\leq 0.46}(\X^{[0.2,1]}L(u_1))$,
obviously $s_1\models\phi$, but $r_1\not\models\phi$, which means that
$s_1$ and $r_1$ are not $\CSL$-equivalent.\qed
\end{example}

Example~\ref{ex:2-step recurrent ex2} also shows that in order for
two states satisfying the same $\CSL$ formulas,
it is necessary for them to have transitions with the same exit rates,
otherwise we can always find $\CSL$ formulas distinguishing them,
which also justifies that we only allow to combine transitions with the same rate
in Definition~\ref{def:strong bisimulation}.

We have shown in 
Example~\ref{ex:counterexample completeness} that $\iBS{}$ is not complete
with respect to $\sim_{\CSL}$. However in the sequel we shall identify a special class
of $\CTMDP$s, in which the completeness holds. 
We first give two examples for inspiration:

\begin{example}\label{ex:successors}
  In this example, we show that,
  it is impossible to construct similar states as $s_0$ and $r_0$ in 
  Example~\ref{ex:counterexample completeness} such that they  are not strong bisimilar but only have
  2 distinct successors. 
  
  Let $s_2$ and $r_2$ denote the two states depicted in Fig.~\ref{fig:successors}, where $x\in[0,1]$ denotes
  an arbitrary or unknown probability and all states
  have different atomic propositions except that $L(s_2)=L(r_2)$. Our aim is to show
  that states in form of $s_2$ and $r_2$ must be strong bisimilar, provided that $s_2~\sim_{\CSL}~r_2$.
  First we show that $x\in[\frac{1}{4},\frac{1}{2}]$ in order that $s_2~\sim_{\CSL}~r_2$.
  This is done by contradiction. Assume that $x>\frac{1}{2}$ and let $\psi=\X^{[0,\infty)}(L(u_1))$.
  Then the maximal probability of paths starting from $s_2$ and satisfying $\psi$ is equal to $\frac{1}{2}$,
  while the maximal probability of paths starting from $r_2$ and satisfying $\psi$ is equal to $x$.
  Since $x>\frac{1}{2}$, $s_2\models\MC{P}_{\le \frac{1}{2}}(\psi)$, while $r_2\not\models\MC{P}_{\le\frac{1}{2}}(\psi)$,
  therefore $s_2~\not\sim_{\CSL}~r_2$. Similarly, we can show that it is not possible for $x<\frac{1}{4}$, hence
  it holds that $x\in[\frac{1}{4},\frac{1}{2}]$.
  
  Secondly, we show that $s_2~\sim~r_2$ given that $x\in[\frac{1}{4},\frac{1}{2}]$. Let
  $\MC{R}$ be an equivalence relation only equating $s_2$ and $r_2$, it suffices to show that $\MC{R}$ is
  a strong bisimulation according to Definition~\ref{def:strong bisimulation}.
  Let $\mu_1,\mu_2,$ and $\mu_3$ be distributions defined in Fig.~\ref{fig:successors}.
  The only non-trivial case is when $r_2\TRAN{1}\mu_2$,
  we need to show that there exists $w_1$ and $w_2$ such that $w_1+w_2=1$, $(w_1\cdot\mu_1+w_2\cdot\mu_3)~\MC{R}~\mu_2$.
  Let $w_1=2-4x$ and $w_2=4x-1$, it is easy to verify that $w_1,w_2\in[0,1]$ and $w_1+w_2=1$, since $x\in[\frac{1}{4},\frac{1}{2}]$.
  Moreover, $w_1\cdot\mu_1 + w_2\cdot\mu_3=\mu_2$, since $w_1\cdot\frac{1}{4}+w_2\cdot\frac{1}{2}=x$ and $w_1\cdot\frac{3}{4}+w_2\cdot\frac{1}{2}=1-x$.
  Therefore $s_2\TRANP{1}\mu_2$ as desired, and $\MC{R}$ is indeed a strong bisimulation.    \qed
\end{example}

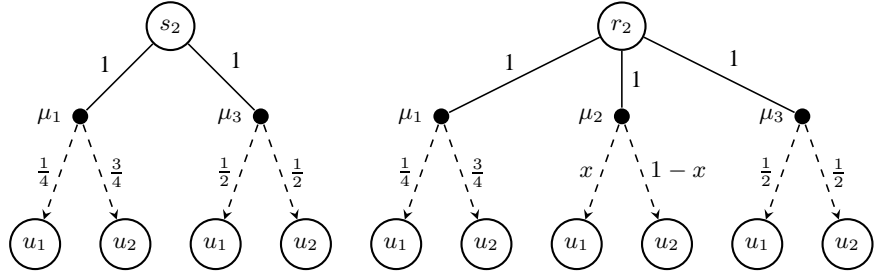
\begin{figure*}[!t]
\centering
  \begin{tikzpicture}[->,>=stealth,auto,node distance=1.2cm,semithick]
	\tikzstyle{state}=[minimum size=15pt,circle,draw,thick]
	every label/.style=draw
        \tikzstyle{blackdot}=[circle,fill=black, minimum size=6pt,inner sep=0pt]
     \node[state](u11){$u_1$};
     \node[state](u21)[right of=u11]{$u_2$};
     \node[state](u12)[right of=u21]{$u_1$};
     \node[state](u22)[right of=u12]{$u_2$};
     \node[state](u13)[right of=u22]{$u_1$};
     \node[state](u23)[right of=u13]{$u_2$};
     \node[state](u14)[right of=u23]{$u_1$};
     \node[state](u24)[right of=u14]{$u_2$};
     \node[state](u15)[right of=u24]{$u_1$};
     \node[state](u25)[right of=u15]{$u_2$};
     \node[blackdot,label={[label distance=0pt]180:{$\mu_1$}}](d1)[above of=u11,xshift=0.6cm,yshift=0.5cm]{};
     \node[blackdot,label={[label distance=0pt]180:{$\mu_3$}}](d2)[above of=u12,,xshift=0.6cm,yshift=0.5cm]{};
     \node[blackdot,label={[label distance=0pt]180:{$\mu_1$}}](d3)[above of=u13,,xshift=0.6cm,yshift=0.5cm]{};
     \node[blackdot,label={[label distance=0pt]180:{$\mu_2$}}](d4)[above of=u14,,xshift=0.6cm,yshift=0.5cm]{};
     \node[blackdot,label={[label distance=0pt]180:{$\mu_3$}}](d5)[above of=u15,,xshift=0.6cm,yshift=0.5cm]{};
     \node[state](s2)[above of=d1,xshift=1.2cm]{$s_2$};
     \node[state](r2)[above of=d4]{$r_2$};
     \path (s2) edge[-]              node[left,yshift=5pt] {1} (d1)
                edge[-]              node {1} (d2)
           (r2) edge[-]              node[left,yshift=5pt] {1} (d3)
            	edge[-]              node {1} (d4)
                edge[-]              node {1} (d5)
           (d1) edge[dashed]         node[left] {$\frac{1}{4}$} (u11)
                edge[dashed]         node[yshift=-9pt] {$\frac{3}{4}$} (u21)
           (d2) edge[dashed]         node[left] {$\frac{1}{2}$} (u12)
                edge[dashed]         node[yshift=-9pt] {$\frac{1}{2}$} (u22)
           (d3) edge[dashed]         node[left] {$\frac{1}{4}$} (u13)
                edge[dashed]         node[yshift=-9pt] {$\frac{3}{4}$} (u23)
           (d4) edge[dashed]         node[left] {$x$} (u14)
                edge[dashed]         node[yshift=-6pt] {$1-x$} (u24)
           (d5) edge[dashed]         node[left] {$\frac{1}{2}$} (u15)
                edge[dashed]         node[yshift=-9pt] {$\frac{1}{2}$} (u25);
  \end{tikzpicture}
  \caption{\label{fig:successors}
 $s_2$ can always simulate the middle transition of $r_2$, as long as $\frac{1}{4}\le x\le\frac{1}{2}.$}
\end{figure*}

In order for Example~\ref{ex:counterexample completeness} being a valid 
counterexample for $\sim_{\CSL}~\subseteq~\sim$, we have made another assumption that $u_i$ ($i=1,2,3$)
are silent, i.e., they cannot evolve into other states not equivalent to themselves with positive probability.
This assumption is also crucial which can be seen by the following example:
\begin{example}\label{ex:2-step recurrent ex1}
Consider again the two states $s_0$ and $r_0$ introduced in Example~\ref{ex:counterexample completeness},
where we prove that $s_0$ and $r_0$ are $\CSL$ equivalent.
Now suppose that $u_3$ is not silent, but can evolve into some state $u'_3$ with rate 1, where $u'_3$
is a state with different atomic propositions from all the others. We are going to show
that $s_0$ and $r_0$ are not $\CSL$ equivalent anymore with this slight change.
Consider the path formula: 
$\psi=(L(s_0)\lor L(u_3)) \U^{[0,b]} (L(u_2)\lor L(u'_3))$, we can show that the 
probabilities of paths starting from $r_0$ and satisfying $\psi$ by choosing the left, middle, and right transitions
are equal to:
$L = 0.3\cdot w_1 + 0.4\cdot w_2,$
$M = 0.3\cdot w_1 + 0.3\cdot w_2,$ and
$R = 0.4\cdot w_1 + 0.1\cdot w_2$
respectively, where $w_1 = 1 - e^{-b}$ and $w_2 = 1 - e^{-b} - b\cdot e^{-b}$.
It suffices to find a $b$ such that $M < \min\{L,R\}$,
which means that the middle transition of $r_0$ dominates the minimal probability of
satisfying $\psi$. Such $b$ exists, for instance, by letting $b=1$ 
we obtain: $L \approx 0.295$, $M \approx 0.269$, and $R\approx 0.279$, apparently, $M < \min\{L,R\}$.
In other words, let $b=1$ in $\psi$, we have $s_0\models\MC{P}_{\geq R}(\psi)$, but
$r_0\not\models\MC{P}_{\geq R}(\psi)$, since there exists a scheduler of $r_0$, i.e., the one choosing 
the middle transition of $r_0$ such that the probability of satisfying $\psi$ is equal to $M$, which
is strictly less than $R$. Therefore $s_0~\not\sim_{\CSL}~r_0$. \qed
\end{example}

In Example~\ref{ex:counterexample completeness}, we have shown that 
$s_0$ and $r_0$ satisfy the same $\CSL$ formulas, but they are not strong bisimilar.
However in Examples~\ref{ex:successors} and \ref{ex:2-step recurrent ex1}, 
we show that without the two assumptions:
\begin{itemize}
\item $s_0$ and $r_0$ should have more than 2 states among their successors;
 \item there exists no successor which can evolve into a state
 not $\CSL$ equivalent to other states with positive probability, 
\end{itemize}
we can guarantee that either $s_0$ and $r_0$ are strong bisimilar,
or they are not $\CSL$ equivalent.
These intuitions lead us to the special class of $\CTMDP$s, which we call
\emph{non 2-step recurrent} $\CTMDP$s in the sequel.
\begin{definition}[2-step Recurrent]\label{def:2-step recurrent}
Let $\MC{R}$ be an equivalence relation on $S$.
A state $s$ is said to be
\emph{2-step recurrent} with respect to $\MC{R}$ iff 
$s$ is not silent, $\ABS{\SUC(s)}>2$, and 
\begin{equation}\label{eq:2-step recurrent}\tag{r1}
\exists(s\TRAN{\lambda}\mu).(\forall s'\in(\SUPP(\mu)\setminus\ESET[\MC{R}]{s}).\forall(s'\TRAN{\lambda'}\nu).\nu(C)=1),
\end{equation}
where $C=(\ESET[\MC{R}]{s}\cup\ESET[\MC{R}]{s'})$.

We say $\MC{C}$ is \emph{2-step recurrent} with respect to $\MC{R}$,
iff there exists $s\in S$ such that $s$ is 2-step recurrent with respect to $\MC{R}$, 
otherwise it is non 2-step recurrent with respect to $\MC{R}$.
Moreover, we say that $s$ (or $\MC{C}$) is
(non) 2-step recurrent iff it is (non) 2-step recurrent with respect to $\sim_{\CSL}$.
\end{definition}
In other words, for a state $s$ to be 2-step recurrent,
it must be not silent and have more than 2 successors.
Remind that each silent state can be replaced by a single state without
changing properties of a $\CTMDP$. After doing so, each silent state
will only have one successor which is itself, so the requirement
of non silence can be subsumed by $\ABS{\SUC(s)}>2$ in this case. 
Let us explain the more involved condition given in Eq.~(\ref{eq:2-step recurrent}).
Eq.~(\ref{eq:2-step recurrent}) says that a 2-step recurrent state $s$ must also satisfy:
There exists $s\TRAN{\lambda}\mu$ such that for all states in  $\SUPP(\mu)$ 
except those in $\ESET[\MC{R}]{s}$, they can only
evolve into states equivalent to $s$ or themselves.

\begin{example}\label{ex:2-step recurrent CTMDPs}
  We show some examples of (non) 2-step recurrent states. 
  First of all, states $s_0$ and $r_0$ in Example~\ref{ex:counterexample completeness}
  are 2-step recurrent, since they are not silent and have more than 2 successors. 
  Moreover all successors $u_i$ ($i=1,2,3$) are silent, i.e., can only evolve into
  states which are $\CSL$ equivalent to themselves. 
  However if we add an extra transition to $u_3$ as in Example~\ref{ex:2-step recurrent ex1},
  $s_0$ will be non 2-step recurrent, since $u_3$ can reach the state $u'_3$
  with probability 1, where $u'_3$ is not $\CSL$ equivalent to either $u_3$ or $s_0$.
  For similar reasons, $r_0$ is also non 2-step recurrent.
  
  Secondly, States $s_1$ and $r_1$ in Example~\ref{ex:2-step recurrent ex2} and $s_2$ and $r_2$
  in Example~\ref{ex:successors} are trivially non 2-step recurrent, since
  the number of their successors is $\le 2$. \qed
\end{example}

Definition~\ref{def:2-step recurrent} seems tricky, however,
we shall show that there exists an efficient scheme to check whether a given $\CTMDP$
is 2-step recurrent or not. More importantly, we shall see later in Remark~\ref{remark:2-step recurrent} 
that the class of non 2-step recurrent $\CTMDP$s contains an important part of $\CTMDP$ models, 
in particular those found in practice.

Now we are ready to show the main contribution of this paper. By restricting to the set of 
non 2-step recurrent $\CTMDP$s, we are able to prove that the classical strong bisimulation
defined in Definition~\ref{def:strong bisimulation} is both sound and complete with respect to the
$\CSL$ equivalence, which is formalized in the following theorem.
\begin{theorem}\label{thm:equivalent strong}
If $\MC{C}$ is non 2-step recurrent, $\sim~=~\sim_{\CSL}$.
\end{theorem}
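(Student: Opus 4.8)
## Proof Plan for Theorem~\ref{thm:equivalent strong}

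The plan is to prove the two inclusions $\sim\ \subseteq\ \sim_{\CSL}$ and $\sim_{\CSL}\ \subseteq\ \sim$ separately. The first inclusion is just soundness, which is Theorem~\ref{thm:martin} and holds for arbitrary $\CTMDP$s, so nothing new is needed there; the content of the theorem lies entirely in the converse inclusion for non 2-step recurrent $\CTMDP$s. So I would immediately reduce to showing: if $\MC{C}$ is non 2-step recurrent and $s_0\sim_{\CSL}r_0$, then $s_0\sim r_0$. Since each silent state can be collapsed to a self-looping state without affecting either $\sim$ or $\sim_{\CSL}$ (as argued in the text after Definition~\ref{def:ctmdp}), I would first perform that normalization so that non-silence is equivalent to having more than one successor.

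The strategy is to show that $\sim_{\CSL}$ itself is a strong bisimulation on a non 2-step recurrent $\MC{C}$. Take $s_0\sim_{\CSL}r_0$; then $L(s_0)=L(r_0)$ is immediate. The core step is: given any transition $r_0\TRAN{\lambda}\mu'$, produce a combined transition $s_0\TRANP{\lambda}\mu$ with $\mu\sim_{\CSL}\mu'$ (and symmetrically). First I would argue that the exit rate $\lambda$ must be matchable: by the argument of Example~\ref{ex:2-step recurrent ex2}, using $\X$-formulas $\X^{[a,b]}\phi$ with varying time bounds, two $\CSL$-equivalent states must have the same multiset of available exit rates (up to the combined-transition closure restricted to equal rates), so it suffices to work rate by rate. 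Fix $\lambda$ and let $\{s_0\TRAN{\lambda}\mu_i\}_{i\in I}$ be all the $\lambda$-rate transitions of $s_0$. I want to show $\mu'$ (projected onto $S/\!\sim_{\CSL}$) lies in the convex hull of the $\{\mu_i\}$ (so projected). Suppose not; then by separating-hyperplane there is a linear functional, realized as a weighted sum $\sum_C c_C\cdot(\cdot)(C)$ over equivalence classes $C$, that strictly separates $\mu'$ from the hull. The task is to encode this functional as a $\CSL$ path formula whose probability from $s_0$ versus $r_0$ under the appropriate (max or min) scheduler differs — this is where the structural hypotheses enter.

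Here the two forbidden configurations ruled out by ``non 2-step recurrent'' do the work, exactly as illustrated by Examples~\ref{ex:successors} and~\ref{ex:2-step recurrent ex1}. If $s_0$ (equivalently $r_0$) has at most two successor classes, the separation is one-dimensional along the simplex edge, and one can always realize the separating class-weight vector by a time-bounded-until or next formula of the form $\phi_1\,\U^{[0,b]}\,\phi_2$ built from the atomic propositions of the successor classes, whose max/min reachability probability is an affine function of the class probabilities of the chosen transition; tuning $b$ (as in Example~\ref{ex:2-step recurrent ex1}) amplifies the separation into a strict one, contradicting $s_0\sim_{\CSL}r_0$. If instead there are $>2$ successor classes, condition~(\ref{eq:2-step recurrent}) fails for every transition, meaning for each $r_0\TRAN{\lambda}\mu'$ there is some $s'\in\SUPP(\mu')\setminus\ESET{r_0}$ and some transition $s'\TRAN{\lambda'}\nu$ with $\nu(C)<1$, i.e.\ $s'$ reaches a class strictly outside $\ESET{r_0}\cup\ESET{s'}$; this ``second step'' gives extra room — one can append a nested probabilistic operator $\MC{P}_{\bowtie q}(\cdot)$ observing this further transition to turn the $>2$ successor classes into distinguishable targets, reducing back to the separating-functional argument with enough observable coordinates. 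In both branches the conclusion is that no separating functional exists, so $\mu'$ is in the (rate-restricted) convex hull, giving the required $s_0\TRANP{\lambda}\mu$ with $\mu\sim_{\CSL}\mu'$.

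The main obstacle I anticipate is the precise translation from a separating linear functional over $S/\!\sim_{\CSL}$ into an honest $\CSL$ formula together with a scheduler witnessing the strict probability gap — in particular, controlling that the max/min over \emph{all} schedulers (not just the one picking a single transition) behaves affinely, and that nested $\MC{P}_{\bowtie p}$ subformulas in the $>2$-successor case do not wash out the separation. One clean way to handle the affine/convexity bookkeeping is to prove a lemma: for a non 2-step recurrent $\MC{C}$ and any tuple of pairwise $\sim_{\CSL}$-inequivalent classes $C_1,\dots,C_k$ appearing among $\SUC(s_0)$, and any target rationals, there is a $\CSL$ path formula $\psi$ such that the maximal probability $\sup_\Sch\PM_{\Sch,s_0}(\psi)$ equals $\max_i\sum_j a_{ij}\mu_i(C_j)$ over the $\lambda$-transitions, with the $a_{ij}$ realizing any prescribed positive weights. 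Granting that lemma, the theorem follows by the separation argument above plus soundness. Finishing touches: verify $\sim_{\CSL}$ is an equivalence (trivial) and symmetric in $s_0,r_0$, and note the collapse of silent states is harmless, completing the proof that $\sim_{\CSL}$ is a strong bisimulation and hence $\sim_{\CSL}\subseteq\sim$.
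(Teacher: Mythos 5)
The skeleton of your argument — dispose of soundness via Theorem~\ref{thm:martin}, show $\sim_{\CSL}$ is itself a strong bisimulation, match exit rates by time-bounded next formulas, then argue by convex-hull separation for a fixed rate, with the silent and $\le 2$-successor cases handled directly — is exactly the paper's decomposition (the paper inherits the rate-matching claim and the easy cases from the $\CSLstar$ proof, since those only use single next formulas, which are legal $\CSL$). The genuine gap is the only case that actually needs the non 2-step recurrent hypothesis, namely $\ABS{\SUC(s)}>2$, and there your proposed mechanism does not work. Appending a nested probabilistic operator $\MC{P}_{\bowtie q}(\cdot)$ to ``observe the further transition'' cannot create new separating coordinates: a nested $\MC{P}_{\bowtie q}$ subformula is a \emph{state} formula, hence constant on $\sim_{\CSL}$-equivalence classes, so it refines nothing beyond the classes you already have. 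Likewise, the lemma you ask to be granted (realizing an arbitrary positive weight matrix $a_{ij}$ over many successor classes by the max-probability of a single $\CSL$ path formula) is both unproved and stronger than what $\CSL$ can deliver, since a $\CSL$ path formula is a single next or a single until.

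What the paper does instead is: first reduce the failure of convex matching to a two-coordinate ``crossing'' configuration, i.e.\ for a fixed pair of transitions $r\TRAN{\lambda_1}\mu'_1$, $r\TRAN{\lambda_1}\mu'_2$ and suitable classes indexed by $j,k$ one has $a_k\in(b_k,c_k)$ and $a_j\in(c_j,b_j)$ (the non-crossing configurations are separated by a single next formula); then use non 2-step recurrence to pick $s_k\in\SUPP(\mu)$, a transition $s_k\TRAN{\lambda_2}\nu$ and a state $t\notin\ESET[]{s}\cup\ESET[]{s_k}$ with $\nu(t)>0$, and take the single until formula $\psi=(s\lor s_k)\U^{[a,b]}(s_j\lor t)$, where states abbreviate class-characterizing formulas. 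Its probability under the scheduler choosing $\mu$ and then $\nu$ is the affine combination $a_j\cdot\rho_2+a_k\cdot\rho_1$, where $\rho_2$ comes from the one-jump event into $\ESET[]{s_j}$ within $[a,b]$ and $\rho_1$ from the two-jump convolution through $s_k$ into $t$; the analytic heart of the proof is showing that $\frac{\rho_1}{\rho_2}$ sweeps all of $(0,\infty)$ as $[a,b]$ varies, with a case analysis on $\lambda_2=\lambda_1$, $\lambda_2<\lambda_1$, $\lambda_2>\lambda_1$. It is this timing-induced pair of tunable weights on two distinct classes inside one until formula — not nesting of $\MC{P}$ operators — that replaces the disjunction of two next formulas available in $\CSLstar$ and yields the distinguishing formula. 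Without that construction (and the pairwise reduction preceding it), the central step of your plan remains a conjecture.
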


\subsection{Weak Bisimulation}\label{sec:weak bisimulation}
In this section we will introduce a novel notion of \emph{weak
  bisimulation} for $\CTMDP$s. Our definition of weak
bisimulation is directly motivated by the well-known fact 
that uniformization does not alter time-bounded reachabilities for $\CTMDP$s~\cite{NeuhausserSK09,RabeS11}
when TTP schedulers are considered. 
Similar as in Section~\ref{sec:strong bisimulation}, we also show that weak bisimulation is both sound and complete for 
$\CSL_{\backslash\!\X}$ over non 2-step recurrent $\CTMDP$s. 
%The section ends up with a discussion about
%why the results do not hold for general $\CTMDP$s, and motivates the
%study of a sequence of bisimulations in the next section. 
We shall introduce the definition of weak bisimulation first.
\begin{definition}[Weak bisimulation]\label{def:weak bisimulation}
We say that states $s$ and $r$ in $\MC{C}$ are weak bisimilar, denoted by $s~\WBS~r$,
whenever $\UNIFORM{s}~\sim~\UNIFORM{r}$ in the uniformized \emph{$\CTMDP$} $\UNIFORM{\MC{C}}$.
\end{definition}

The way we define weak bisimulation here is different from the definition of
weak bisimulation for $\CTMC$s in~\cite{BaierKHW05}, where a conditional measure is
considered, see Definition~\ref{def:weak bisimulation CTMC} for the detailed definition. Moreover
we will show in Section~\ref{sec:relation to ctmc} that for $\CTMC$s our weak bisimulation coincides with
weak bisimulation defined in~\cite{BaierKHW05}. Even though the resulting uniformized $\CTMDP$ depends
on the chosen rate $E$ as shown in Definition~\ref{def:uniformization}, it is worth mentioning that
weak bisimulation given in Definition~\ref{def:weak bisimulation} is independent of $E$. 
Since if two states are strong bisimilar in
a uniformized $\CTMDP$, they will be strong bisimilar in any uniformized $\CTMDP$ no matter
which value we choose for $E$.

The following lemma establishes some properties:
\begin{lemma}\label{lem:sim_properties}\  
 \begin{enumerate}
  \item $\sim~\subseteq~\WBS$,
  \item for uniformized \emph{$\CTMDP$}s, $\sim~=~\WBS$.
  \end{enumerate}
\end{lemma}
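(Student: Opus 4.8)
\textbf{Proof plan for Lemma~\ref{lem:sim_properties}.}

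The plan is to establish both claims by working directly with the definition of uniformization (Definition~\ref{def:uniformization}) and the characterization of weak bisimilarity via strong bisimilarity on the uniformized model (Definition~\ref{def:weak bisimulation}).

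For part~(1), I would take an arbitrary strong bisimulation $\MC{R}$ on $\MC{C}$ with $s~\MC{R}~r$ and show that the ``same'' relation $\UNIFORM{\MC{R}} := \{(\UNIFORM{s},\UNIFORM{r}) \mid s~\MC{R}~r\}$ is a strong bisimulation on $\UNIFORM{\MC{C}}$; then $\UNIFORM{s}~\sim~\UNIFORM{r}$, hence $s~\WBS~r$ by definition. The label condition is immediate since $\UNIFORM{L}(\UNIFORM{s}) = L(s)$. For the transition condition, suppose $\UNIFORM{s}~\TRAN{E}~\UNIFORM{\mu}$ in $\UNIFORM{\MC{C}}$. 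By clause~3 of Definition~\ref{def:uniformization} this arises from some $s~\TRAN{\lambda}~\mu$ in $\MC{C}$ with $\UNIFORM{\mu} = \frac{\lambda}{E}\cdot\mu' + (1-\frac{\lambda}{E})\cdot\DIRAC{\UNIFORM{s}}$. Since $\MC{R}$ is a strong bisimulation, there is a combined transition $r~\TRANP{\lambda}~\nu$ with $\mu~\MC{R}~\nu$. Uniformizing the constituent transitions of this combined transition (all having rate $\lambda$) and combining with the same weights gives a combined transition $\UNIFORM{r}~\TRANP{E}~\UNIFORM{\nu}$ where $\UNIFORM{\nu} = \frac{\lambda}{E}\cdot\nu' + (1-\frac{\lambda}{E})\cdot\DIRAC{\UNIFORM{r}}$. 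It then remains to check $\UNIFORM{\mu}~\UNIFORM{\MC{R}}~\UNIFORM{\nu}$, i.e., that the two distributions agree on every $\UNIFORM{\MC{R}}$-class $\UNIFORM{C}$: for a class not containing $\UNIFORM{s}$ (equivalently $\UNIFORM{r}$, since $s~\MC{R}~r$) this follows from $\mu(C) = \nu(C)$ scaled by $\frac{\lambda}{E}$; for the class containing $\UNIFORM{s}$ one adds the common term $1-\frac{\lambda}{E}$ to both sides. This requires the (routine) observation that $\UNIFORM{\MC{R}}$-classes correspond bijectively to $\MC{R}$-classes.

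For part~(2), the inclusion $\sim~\subseteq~\WBS$ is part~(1), so it suffices to prove $\WBS~\subseteq~\sim$ on a uniformized $\CTMDP$ $\MC{C}$. The key point is that uniformizing an already-uniformized $\CTMDP$ (using the same rate $E$, which is legal since $E$ is already $\geq$ all rates) essentially reproduces it: every transition already has exit rate $E$, so the self-loop added has rate $0$ and contributes nothing, meaning $\UNIFORM{\UNIFORM{\MC{C}}}$ is isomorphic to $\UNIFORM{\MC{C}} = \MC{C}$ (as noted in the text, weak bisimilarity is independent of $E$, so this is harmless). Thus $s~\WBS~r$ means $\UNIFORM{s}~\sim~\UNIFORM{r}$ in a model isomorphic to $\MC{C}$ itself, which gives $s~\sim~r$. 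I should make this ``idempotence of uniformization on uniformized $\CTMDP$s'' precise — stating and using the fact that when $\lambda = E$ the formula $\UNIFORM{\mu} = \frac{\lambda}{E}\cdot\mu' + (1-\frac{\lambda}{E})\cdot\DIRAC{\UNIFORM{s}}$ collapses to $\UNIFORM{\mu} = \mu'$, i.e., the transition is unchanged up to the renaming $s\mapsto\UNIFORM{s}$.

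The main obstacle I anticipate is bookkeeping the correspondence of equivalence classes and distributions across the renaming $s\mapsto\UNIFORM{s}$ and handling the self-loop mass $1-\frac{\lambda}{E}$ cleanly in the ``matching distributions'' step of part~(1) — in particular, making sure the self-loop term is assigned to the class of $\UNIFORM{s}$, which coincides with the class of $\UNIFORM{r}$ precisely because $s~\MC{R}~r$, so the extra mass cancels on both sides. Everything else is a direct unfolding of definitions; no clever construction of distinguishing formulas or schedulers is needed here, since the lemma is purely about the two bisimulation notions.
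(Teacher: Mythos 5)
Your proposal is correct and follows essentially the same route as the paper: part~(1) lifts the strong bisimulation across uniformization by matching $s\TRAN{\lambda}\mu$ with a combined transition $r\TRANP{\lambda}\nu$ and observing that the uniformized versions $\frac{\lambda}{E}\cdot\mu'+(1-\frac{\lambda}{E})\cdot\DIRAC{\UNIFORM{s}}$ and $\frac{\lambda}{E}\cdot\nu'+(1-\frac{\lambda}{E})\cdot\DIRAC{\UNIFORM{r}}$ still agree on all classes (the self-loop mass landing in the common class of $\UNIFORM{s}$ and $\UNIFORM{r}$), which is exactly the paper's argument, only spelled out in more detail. Part~(2), which the paper dismisses as immediate from the definition, you justify correctly via idempotence of uniformization on an already uniformized model (together with the $E$-independence remark), so your write-up is if anything more explicit than the paper's.
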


As we mentioned above, by uniformizing a $\CTMDP$ we will not change its satisfiability of $\CSL_{\backslash\!\X}$
provided that only TTP schedulers are considered. 
Therefore we have the following lemma saying that if two states
satisfy the same formulas in $\CSL_{\backslash\!\X}$, then they will satisfy the same formulas in
$\CSL$ after uniformization and vice versa.

\begin{lemma}\label{lem:CSL preserved}
$s~\sim_{\CSL_{\backslash\!\X}}~r$ in $\MC{C}$ iff $\UNIFORM{s}~\sim_{\CSL}~\UNIFORM{r}$ in $\UNIFORM{\MC{C}}$.
\end{lemma}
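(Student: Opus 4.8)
The plan is to combine two ingredients: that uniformization preserves satisfaction of $\CSL_{\backslash\!\X}$ formulas under TTP schedulers, and that on a \emph{uniformized} $\CTMDP$ the next operator adds no discriminating power, so that $\sim_{\CSL}$ and $\sim_{\CSL_{\backslash\!\X}}$ already coincide there. First I would record the preservation result announced after Definition~\ref{def:scheduler} (the extension to $\CSL_{\backslash\!\X}$ of the fact from \cite{NeuhausserSK09,RabeS11} that time-bounded reachability is invariant under uniformization for TTP schedulers): for every $\phi\in\CSL_{\backslash\!\X}$ and every state $s$, $s\models\phi$ in $\MC{C}$ iff $\UNIFORM{s}\models\phi$ in $\UNIFORM{\MC{C}}$. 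Applying this to both $s$ and $r$ immediately gives that $s\sim_{\CSL_{\backslash\!\X}}r$ in $\MC{C}$ iff $\UNIFORM{s}\sim_{\CSL_{\backslash\!\X}}\UNIFORM{r}$ in $\UNIFORM{\MC{C}}$. It then remains to prove, purely inside $\UNIFORM{\MC{C}}$, that $\sim_{\CSL_{\backslash\!\X}}$ coincides with $\sim_{\CSL}$; one inclusion is trivial since $\CSL_{\backslash\!\X}\subseteq\CSL$, so the real work is $\sim_{\CSL_{\backslash\!\X}}\subseteq\sim_{\CSL}$.

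For the latter I would fix $u,v$ in $\UNIFORM{\MC{C}}$ with $u\sim_{\CSL_{\backslash\!\X}}v$ and prove $u\models\Phi\iff v\models\Phi$ by induction on $\Phi\in\CSL$. The atomic and Boolean cases are immediate. In the until case the induction hypothesis makes $\SAT(\phi_1),\SAT(\phi_2)$ unions of $\sim_{\CSL_{\backslash\!\X}}$-classes, so (as $S$ is finite, hence $\sim_{\CSL_{\backslash\!\X}}$ has finite index) each is definable by a characteristic formula $\chi_i\in\CSL_{\backslash\!\X}$, whence $\MC{P}_{\bowtie p}(\phi_1\UI\phi_2)$ is equivalent to the genuine $\CSL_{\backslash\!\X}$ formula $\MC{P}_{\bowtie p}(\chi_1\UI\chi_2)$ on which $u,v$ agree by definition. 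The only new case is $\Phi=\MC{P}_{\bowtie p}(\XI\phi)$. Here I would exploit that in $\UNIFORM{\MC{C}}$ every transition carries the same exit rate $E$: a TTP scheduler fixes, upon entering a state at time $0$, a distribution over $\NEXT(u)$ that is held constant during the sojourn, so writing $A=\SAT(\phi)$ and $K_I=\int_I Ee^{-Et}\,dt$ one gets $\PM_{\Sch,u}(\XI\phi)=K_I\sum_{(E,\mu)\in\NEXT(u)}\Sch(u,0)(E,\mu)\,\mu(A)$. Consequently $\sup_\Sch$ and $\inf_\Sch$ of this probability equal $K_I\cdot\max_{\mu\in\NEXT(u)}\mu(A)$ and $K_I\cdot\min_{\mu\in\NEXT(u)}\mu(A)$, so $u\models\MC{P}_{\bowtie p}(\XI\phi)$ is decided by the optimal one-step probability $\mathrm{opt}_{\mu\in\NEXT(u)}\mu(A)$, and it suffices to show this quantity is $\sim_{\CSL_{\backslash\!\X}}$-invariant.

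To recover these one-step optima from $\CSL_{\backslash\!\X}$ I would use the uniform-rate (Poisson) structure of $\UNIFORM{\MC{C}}$: for a target set $B=\SAT(\chi)$ with $u\notin B$, reaching $B$ in two or more steps within a small time $b$ has probability $O(b^2)$, so $\sup_\Sch\PM_{\Sch,u}(\mathit{true}\,\U^{[0,b]}\chi)=Eb\cdot\max_{\mu\in\NEXT(u)}\mu(B)+O(b^2)$, and dividing by $Eb$ and letting $b\to0$ extracts $\max_\mu\mu(B)$; the same computation with $\inf_\Sch$ (via thresholds $\MC{P}_{\geq p}$) extracts $\min_\mu\mu(B)$. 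Since agreement of $u$ and $v$ on all $\CSL_{\backslash\!\X}$ formulas $\MC{P}_{\bowtie p}(\mathit{true}\,\U^{[0,b]}\chi)$ forces $\sup_\Sch\PM_{\Sch,u}=\sup_\Sch\PM_{\Sch,v}$ (and likewise for $\inf$) at every $b$, passing to the limit shows $u$ and $v$ share the same one-step optima towards every such $B$. The edge case $u\in A$, where $\mathit{true}\,\U^{[0,b]}\chi$ is trivially satisfied, I would dispatch by applying the extraction to the complement $A^c$, which is again $\CSL_{\backslash\!\X}$-definable and satisfies $u\notin A^c$, using $\max_\mu\mu(A)=1-\min_\mu\mu(A^c)$ and $\min_\mu\mu(A)=1-\max_\mu\mu(A^c)$. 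Combining all cases gives $u\sim_{\CSL}v$, closing the chain of equivalences.

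The main obstacle, I expect, is precisely this extraction of the one-step optima $\mathrm{opt}_{\mu\in\NEXT(u)}\mu(A)$ from time-bounded until probabilities: one must justify that maximising over TTP schedulers commutes with taking the leading-order term in $b$, i.e.\ that the multi-step corrections are bounded by $O(b^2)$ \emph{uniformly} in the scheduler (they are dominated by the probability of two or more jumps of a rate-$E$ Poisson process within $b$), so that they vanish after dividing by $Eb$, and one must route the $u\in A$ case through complementation as above. Everything else is bookkeeping: the factorisation of the $\XI$ probability is immediate from the constant exit rate, and the Boolean and until cases reduce to $\CSL_{\backslash\!\X}$ by the induction hypothesis and the finiteness of $S$.
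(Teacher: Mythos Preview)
Your proof is correct and takes a genuinely different route from the paper's. Both arguments start from the preservation of $\CSL_{\backslash\!\X}$ under uniformization and then establish $\sim_{\CSL_{\backslash\!\X}}=\sim_{\CSL}$ on $\UNIFORM{\MC{C}}$, but the paper does the latter by first treating the $\CTMC$ case: it shows that $\sim_{\CSL_{\backslash\!\X}}$ is a strong bisimulation on a uniformized $\CTMC$ by using specific until formulas $\phi_{\ESET{\UNIFORM{s}}}\UI[{[a,b]}]\phi_C$ (first to pin down the rate of leaving $\ESET{\UNIFORM{s}}$, then to match $\mu(C)=\nu(C)$ for each class $C$), and finally invokes the known result of Baier et al.\ that strong bisimulation coincides with $\sim_{\CSL}$ on $\CTMC$s; the extension to $\CTMDP$s is then argued by observing that each scheduler induces a $\CTMC$. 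Your approach instead works directly on the uniformized $\CTMDP$ by structural induction on $\CSL$ formulas, exploiting the uniform rate $E$ to factor $\PM_{\Sch,u}(\XI\phi)=K_I\cdot\sum_\mu\Sch(u,0)(E,\mu)\,\mu(A)$ (which is exactly what the measure definition in the paper gives, since $\Sch(u,tt)$ is evaluated at the entry time), and then recovers the one-step optima $\mathrm{opt}_\mu\mu(A)$ from the leading-order behaviour of $\mathit{true}\,\UI[{[0,b]}]\chi$ as $b\to 0$. The paper's route buys reuse of the $\CTMC$ characterization and keeps the probability calculations explicit rather than asymptotic; your route is self-contained, avoids the detour through bisimulation, and handles the nondeterministic case directly rather than by the somewhat terse ``each scheduler is a $\CTMC$'' reduction.
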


The following theorem says that our weak bisimulation is sound for $\sim_{\CSL_{\backslash\!\X}}$,
and particularly when the given $\CTMDP$ is non 2-step recurrent, weak bisimulation can be used
to fully characterize $\CSL_{\backslash\!\X}$ equivalence.
\begin{theorem}\label{thm:equivalent weak}
$\WBS~\subsetneq~\sim_{\CSL_{\backslash\!\X}}$. If $\UNIFORM{\MC{C}}$ is non 2-step recurrent,
$\WBS~=~\sim_{\CSL_{\backslash\!\X}}$.
\end{theorem}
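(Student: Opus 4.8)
The strategy is to transport the whole question to the uniformized $\CTMDP$ $\UNIFORM{\MC{C}}$, where, by Definition~\ref{def:weak bisimulation}, weak bisimilarity of $s,r$ is literally strong bisimilarity of $\UNIFORM{s},\UNIFORM{r}$, and where, by Lemma~\ref{lem:CSL preserved}, $\CSL_{\backslash\!\X}$-equivalence of $s,r$ in $\MC{C}$ is literally $\CSL$-equivalence of $\UNIFORM{s},\UNIFORM{r}$ in $\UNIFORM{\MC{C}}$. Once this translation is in place, the statement follows by applying the corresponding strong/$\CSL$ results to the single $\CTMDP$ $\UNIFORM{\MC{C}}$.

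\emph{Soundness} $\WBS\,\subseteq\,\sim_{\CSL_{\backslash\!\X}}$ (which I expect to hold for arbitrary $\CTMDP$s). Suppose $s\WBS r$. By Definition~\ref{def:weak bisimulation} this means $\UNIFORM{s}\sim\UNIFORM{r}$ in $\UNIFORM{\MC{C}}$. Applying Theorem~\ref{thm:martin} to the $\CTMDP$ $\UNIFORM{\MC{C}}$ gives $\UNIFORM{s}\sim_{\CSL}\UNIFORM{r}$, and Lemma~\ref{lem:CSL preserved} then yields $s\sim_{\CSL_{\backslash\!\X}} r$ in $\MC{C}$. For \emph{strictness} I would reuse the gap already exhibited in Example~\ref{ex:counterexample completeness}: take the states $s_0,r_0$ there on a $\CTMDP$ in which every transition carries the same exit rate (the construction can be taken uniform, since all displayed rates already coincide and the successors $u_i$ are silent, so their self-loop rates are irrelevant to every $\CSL$ reachability computation). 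On a uniform $\CTMDP$ one has $\WBS\,=\,\sim$ by Lemma~\ref{lem:sim_properties}(2), hence $s_0\not\WBS r_0$ because $s_0\not\sim r_0$, while $s_0\sim_{\CSL} r_0$ implies a fortiori $s_0\sim_{\CSL_{\backslash\!\X}} r_0$; so the inclusion is proper.

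\emph{Completeness under the hypothesis} that $\UNIFORM{\MC{C}}$ is non 2-step recurrent. Let $s\sim_{\CSL_{\backslash\!\X}} r$ in $\MC{C}$. By Lemma~\ref{lem:CSL preserved}, $\UNIFORM{s}\sim_{\CSL}\UNIFORM{r}$ in $\UNIFORM{\MC{C}}$. By the convention in Definition~\ref{def:2-step recurrent}, ``$\UNIFORM{\MC{C}}$ non 2-step recurrent'' means non 2-step recurrent with respect to $\sim_{\CSL}$ \emph{of} $\UNIFORM{\MC{C}}$, so Theorem~\ref{thm:equivalent strong} applies verbatim to the $\CTMDP$ $\UNIFORM{\MC{C}}$ and gives $\sim\,=\,\sim_{\CSL}$ there; hence $\UNIFORM{s}\sim\UNIFORM{r}$, i.e.\ $s\WBS r$ by Definition~\ref{def:weak bisimulation}. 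Together with soundness this yields $\WBS\,=\,\sim_{\CSL_{\backslash\!\X}}$.

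\textbf{Main obstacle.} Essentially all of the genuine technical content is inherited: the hard work lives in Lemma~\ref{lem:CSL preserved} (which rests on uniformization preserving time-bounded reachability under TTP schedulers, lifted to all of $\CSL_{\backslash\!\X}$) and in Theorem~\ref{thm:equivalent strong}. At the level of this theorem the only point that genuinely requires care is that the hypothesis must be stated on $\UNIFORM{\MC{C}}$, not on $\MC{C}$: uniformization inserts self-loops, which can enlarge $\ABS{\SUC(\cdot)}$ past the threshold $2$ and also affects the reachability condition~(\ref{eq:2-step recurrent}), and moreover the reference equivalence changes from $\sim_{\CSL}$ (in $\MC{C}$) to $\sim_{\CSL}$ of $\UNIFORM{\MC{C}}$, so 2-step recurrence is not invariant under uniformization and has to be read off the uniformized model. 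One must therefore be explicit that Theorem~\ref{thm:equivalent strong}, whose hypothesis is phrased for a $\CTMDP$ relative to its own $\sim_{\CSL}$, is being invoked for $\UNIFORM{\MC{C}}$ with exactly that matching notion, so that no further argument is needed.
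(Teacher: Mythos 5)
Your proposal is correct and follows essentially the same route as the paper: by Definition~\ref{def:weak bisimulation} and Lemma~\ref{lem:CSL preserved} the whole question is transported to $\UNIFORM{\MC{C}}$, where Theorem~\ref{thm:martin} gives soundness and Theorem~\ref{thm:equivalent strong} (applied to $\UNIFORM{\MC{C}}$ with its own $\sim_{\CSL}$) gives completeness, exactly as in the paper's chain of equivalences. Your added strictness witness via Example~\ref{ex:counterexample completeness} with all rates equal (so that $\WBS$ reduces to $\sim$ by Lemma~\ref{lem:sim_properties}) is a sound way to justify the $\subsetneq$ part, which the paper leaves implicit.
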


Theorem~\ref{thm:equivalent weak} works if we restrict to only TTP schedulers. However, this is not a 
restriction. Since it has been proved in~\cite{RabeS11,Buchholz2011MCA} that
there always exists an optimal scheduler in TTP for any path property in $\CSL_{\backslash\!\X}$.

\subsection{Determining 2-step Recurrent $\CTMDP$s}\label{sec:determining 2-step}
In Theorem~\ref{thm:equivalent strong} and \ref{thm:equivalent weak}, the completeness holds only for
$\CTMDP$s which are non 2-step recurrent. Hence it is important that 2-step recurrent $\CTMDP$s
can be checked efficiently. This section discusses a simple
procedure for determining (non) 2-step recurrent $\CTMDP$s. 
Before presenting the decision scheme, we shall introduce
the following lemma, which holds by applying the
definition of 2-step recurrent $\CTMDP$s directly:
\begin{lemma}\label{lem:2-step recurrent}
Given two equivalence relations $\MC{R}$ and $\MC{R}'$ over $S$ such that $\MC{R}\subseteq\MC{R}'$,  
if $\MC{C}$ is 2-step recurrent with respect to $\MC{R}$, then it is 2-step recurrent with respect to 
$\MC{R}'$, or equivalently if $\MC{C}$ is non 2-step recurrent with respect to $\MC{R}'$, then it is non 2-step 
recurrent with respect to $\MC{R}$.
\end{lemma}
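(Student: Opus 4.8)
The plan is to prove Lemma~\ref{lem:2-step recurrent} directly from Definition~\ref{def:2-step recurrent} by a monotonicity argument on the equivalence classes. The key observation is that the defining condition of a 2-step recurrent state depends on $\MC{R}$ only through the equivalence classes $\ESET[\MC{R}]{s}$ and $\ESET[\MC{R}]{s'}$, and that enlarging $\MC{R}$ to $\MC{R}'$ only enlarges these classes. So if $\MC{C}$ is 2-step recurrent with respect to the finer relation $\MC{R}$, there is a witness state $s$ satisfying all three clauses of the definition; I would then argue the same $s$ (with possibly the coarser witness class) still works for $\MC{R}'$.

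Concretely, suppose $s$ is 2-step recurrent with respect to $\MC{R}$. First, the condition ``$s$ is not silent'' does not mention $\MC{R}$ at all, so it still holds. Second, $\ABS{\SUC(s)} > 2$ is likewise independent of the equivalence relation, so it carries over verbatim. Third, I would take the witnessing transition $s\TRAN{\lambda}\mu$ from Eq.~(\ref{eq:2-step recurrent}) for $\MC{R}$ and show it witnesses Eq.~(\ref{eq:2-step recurrent}) for $\MC{R}'$ as well. Here I must be a little careful about quantifier directions: the clause quantifies universally over $s'\in\SUPP(\mu)\setminus\ESET[\MC{R}]{s}$. Since $\MC{R}\subseteq\MC{R}'$ implies $\ESET[\MC{R}]{s}\subseteq\ESET[\MC{R}']{s}$, the set $\SUPP(\mu)\setminus\ESET[\MC{R}']{s}$ is a subset of $\SUPP(\mu)\setminus\ESET[\MC{R}]{s}$, so any property holding for all $s'$ in the larger set holds for all $s'$ in the smaller set. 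Then for each such $s'$ and each $s'\TRAN{\lambda'}\nu$, the hypothesis gives $\nu(\ESET[\MC{R}]{s}\cup\ESET[\MC{R}]{s'})=1$; since $\ESET[\MC{R}]{s}\cup\ESET[\MC{R}]{s'}\subseteq\ESET[\MC{R}']{s}\cup\ESET[\MC{R}']{s'}=C'$ (the corresponding union of $\MC{R}'$-classes), monotonicity of the measure $\nu$ yields $\nu(C')\ge\nu(\ESET[\MC{R}]{s}\cup\ESET[\MC{R}]{s'})=1$, hence $\nu(C')=1$. Thus $s$ is 2-step recurrent with respect to $\MC{R}'$, and therefore $\MC{C}$ is 2-step recurrent with respect to $\MC{R}'$.

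The equivalent contrapositive formulation (if $\MC{C}$ is non 2-step recurrent with respect to $\MC{R}'$ then it is non 2-step recurrent with respect to $\MC{R}$) then follows immediately by pure logic, so no separate argument is needed. I do not anticipate a genuine obstacle here; the only subtle point worth spelling out carefully is the interaction of the set-difference $\SUPP(\mu)\setminus\ESET{s}$ with the inclusion $\MC{R}\subseteq\MC{R}'$ — the difference shrinks while the class $C$ grows, and both directions of this change happen to be favourable for preserving Eq.~(\ref{eq:2-step recurrent}). Everything else is a direct unfolding of Definition~\ref{def:2-step recurrent}, which is exactly why the lemma was stated as holding ``by applying the definition directly.''
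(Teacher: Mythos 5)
Your proof is correct and follows essentially the same route as the paper's: a direct monotonicity argument from Definition~\ref{def:2-step recurrent}, noting that the first two clauses are independent of the relation and that $\ESET[\MC{R}]{s}\cup\ESET[\MC{R}]{s'}\subseteq\ESET[\MC{R}']{s}\cup\ESET[\MC{R}']{s'}$ forces $\nu(C')=1$. In fact you are slightly more careful than the paper's sketch, since you also spell out that shrinking the quantification domain $\SUPP(\mu)\setminus\ESET[\MC{R}']{s}\subseteq\SUPP(\mu)\setminus\ESET[\MC{R}]{s}$ works in the right direction for the universal clause.
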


Lemma~\ref{lem:2-step recurrent} suggests a simple way to check
whether a given $\CTMDP$ $\MC{C}$ is 2-step recurrent. Given an arbitrary equivalence relation $\MC{R}$ such that
$\sim~\subseteq~\sim_{\CSL}~\subseteq~\MC{R}$, by Lemma~\ref{lem:2-step
  recurrent}, we can first check whether $\MC{C}$ is 2-step recurrent
with respect to $\MC{R}$. Proper candidates for $\MC{R}$ should be as
fine as possible, but also can be determined efficiently. 
For instance, we can let $\MC{R}=\{(s,r)\mid L(s)=L(r)\}$, or a finer  
equivalence relation defined as follows: 
$s~\MC{R}~r$ iff for each $C\in S/\MC{R}$ and $s\TRAN{\lambda}\mu$,
there exists $r\TRAN{\lambda}\mu'$ such that $\mu'(C)\ge\mu(C)$. 
Such $\MC{R}$ is coarser than $\sim_{\CSL}$, and can be computed efficiently 
in polynomial time.

If $\MC{C}$ is not 2-step recurrent
with respect to $\MC{R}$, we know that $\MC{C}$ is non 2-step
recurrent with respect to $\sim_{\CSL}$ either.  Otherwise we
continue to check whether $\MC{C}$ is 2-step recurrent with respect to $\sim$, 
if the answer is yes, then $\MC{C}$ is 2-step recurrent with respect to $\sim_{\CSL}$ too. 
Note that $\sim$ can also be
computed in polynomial time, see \cite{ZhangHEJ08} for details.  In the
remaining cases, namely when $\MC{C}$ is 2-step recurrent
with respect to $\MC{R}$, but not for $\sim$, we cannot conclude anything,
instead the relation $\sim_{\CSL}$ shall be computed
first for a definite answer.

As we discussed above, sometimes we need to use $\sim_{\CSL}$ to decide whether
a given $\CTMDP$ is 2-step recurrent or not. But it turns out that
$\sim_{\CSL}$ is hard to compute in general.
Actually, we can prove the following lemma showing that the decision of $\sim_{\CSL}$ and $\sim_{\CSL_{\backslash\!\X}}$ is NP-hard.
\begin{lemma}\label{lem:NP result}
It is NP-hard to decide whether $s~\sim_{\CSL}~r$ and $s~\sim_{\CSL_{\backslash\!\X}}~r$.
\end{lemma}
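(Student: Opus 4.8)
It is NP-hard to decide whether $s \sim_{\CSL} r$ and $s \sim_{\CSL_{\backslash\!\X}} r$.

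The plan is to reduce from a known NP-hard problem — the most natural candidate is \textbf{Subset-Sum} (or equivalently \textbf{Partition}), since the obstructions to $\CSL$-equivalence identified in the paper (Examples~\ref{ex:counterexample completeness} and \ref{ex:2-step recurrent ex2}) are ultimately about whether a target probability vector lies in the convex hull of the resulting distributions of the available transitions. Given a Subset-Sum instance with weights $a_1,\dots,a_n$ and target $t$, I would build a small $\CTMDP$ with a state $s$ having a transition for each of the $2^n$ subsets — no, that is exponential; instead I would encode the choice bit-by-bit. The cleaner route: construct two states $s$ and $r$ so that $s \sim_{\CSL} r$ \emph{iff} the Subset-Sum instance has \emph{no} solution (or has one — whichever orientation makes the reduction cleanest, noting NP-hardness is closed under complement only if we are careful; Subset-Sum's complement is coNP-hard, so I would instead aim to make ``$s \not\sim_{\CSL} r$'' the NP-statement and conclude NP-hardness of the complement of $\sim_{\CSL}$, hence — since deciding a property and its complement are polynomial-time equivalent — NP-hardness of deciding $\sim_{\CSL}$ itself).

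Concretely, first I would recall from Example~\ref{ex:counterexample completeness}'s analysis and from the proof sketch of Theorem~\ref{thm:equivalent strong} that, for states whose successors all have pairwise distinct atomic propositions and are silent, $s \sim_{\CSL} r$ holds precisely when the set of resulting distributions reachable from $s$ (under combined transitions, i.e. the convex hull of $s$'s transition distributions) and from $r$ induce the same achievable probabilities for every time-bounded ``next''/``until'' formula — which, because of the scheduler quantification in $\MC{P}_{\bowtie p}$, reduces to equality of the two convex hulls projected onto equivalence classes. So the task becomes: encode Subset-Sum as the question ``does a specified rational point lie in the convex hull of a polynomially-sized, but combinatorially-structured, set of points.'' The standard trick is a gadget with $n$ layered binary choices: state $s$ branches (non-deterministically) at layer $i$ into ``take $a_i$'' or ``skip $a_i$,'' accumulating the running sum into the identity of the reached silent successor. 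Because the number of distinct reachable sums can be exponential, I would instead keep the successors few (so the $\CTMDP$ stays polynomial) but exploit the \emph{rates}: following Example~\ref{ex:2-step recurrent ex2}, different exit rates $\lambda$ yield the distinguishing quantities $e^{-\lambda a} - e^{-\lambda b}$, and a path formula of the form $\X^{I_1}\dots$ or nested $\U^I$ can read off a \emph{weighted sum} of such exponentials, whose comparison with a threshold $p$ encodes an arithmetic constraint. The reduction would set up $r$ to have one ``extra'' transition (as in Example~\ref{ex:2-step recurrent ex2}) whose distribution is matchable by a combined transition of $s$ exactly when a subset of the $a_i$ sums to $t$.

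The \textbf{main obstacle} is controlling the size of the construction: a literal encoding of Subset-Sum by distinct successor states is exponential, so the hardness must come either from the interplay of polynomially-many rates with the exponential weighting $\lambda e^{-\lambda t}$ inside the path measure (Equation~\eqref{eq:measure}), or from a layered gadget where the NP-witness is a scheduler choice rather than a set of successors — i.e. the nondeterminism of the $\CTMDP$ itself carries the ``guess.'' I would therefore route the reduction through the scheduler: $r \not\models \MC{P}_{\bowtie p}(\psi)$ holds iff \emph{there exists} a scheduler of $r$ (a guessed subset/branch selection) achieving probability on the wrong side of $p$, and I would design $\psi$ and the gadget so that the achievable probabilities are exactly $\{\sum_{i\in T} c_i : T\subseteq\{1,\dots,n\}\}$ for suitable constants $c_i$ derived from the $a_i$ (e.g. $c_i = a_i \cdot \varepsilon^i$ for small $\varepsilon$ to guarantee distinctness and a clean threshold), making ``$\exists$ scheduler hitting value $p$'' equivalent to Subset-Sum. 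Verifying that the constants can be chosen with polynomially-many bits, that no spurious distinguishing formula exists on the ``no'' instances, and that the argument survives the restriction to $\CSL_{\backslash\!\X}$ (where, via Lemma~\ref{lem:CSL preserved}, one works in the uniformized $\CTMDP$ and uses nested until-formulas instead of $\X$) are the remaining technical points; the $\CSL_{\backslash\!\X}$ case should follow by the same gadget after uniformization, replacing the rate-discrimination trick with a discrimination via number of intermediate steps within a time bound.
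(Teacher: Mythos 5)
There is a genuine gap here. You have the right reduction source (Subset-Sum) and the right orientation (make ``$s\not\sim_{\CSL} r$'' the statement witnessed by the subset), but your plan rejects the direct encoding for the wrong reason and then defers exactly the step that carries the whole proof. Encoding the $n$ weights as probabilities to $n$ \emph{distinct, absorbing, distinctly-labelled} successor states is not exponential: the paper's gadget has only $n+3$ states. One state $s_0$ has a single transition $s_0\TRAN{1}\mu$ with $\mu(s_i)=|w_i|+\epsilon$ (plus a remainder state $r$), and $s'_0$ has two transitions $s'_0\TRAN{1}\nu_1$, $s'_0\TRAN{1}\nu_2$ with $\nu_1(s_i)=w_i+|w_i|$ and $\nu_2(s_i)=-w_i+|w_i|$. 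The exponential ``guess'' is not placed in the scheduler at all; it is placed in the \emph{formula}: the only nontrivial distinguishing formulas are $\MC{P}_{\geq p}(\top\U^{[a,b]}\bigvee_{s\in E}s)$ for subsets $E$ of the successors, the induced probabilities are $\mu(E)\cdot(e^{-a}-e^{-b})$ versus $\nu_1(E)\cdot(e^{-a}-e^{-b})$ and $\nu_2(E)\cdot(e^{-a}-e^{-b})$, and a short arithmetic argument (using $\epsilon=10^{-2n}$ and the normalization $|w_i|\le\frac{1}{4n}$) shows $\mu(E)>\max\{\nu_1(E),\nu_2(E)\}$ for some $E$ iff some subset of the $w_i$ sums to $0$. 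Because the successors are silent with pairwise distinct labels, this simple one-step structure is also what makes the converse direction (no solution $\Rightarrow$ no distinguishing formula at all, hence $s_0\sim_{\CSL} s'_0$) provable; and since the distinguishing formulas are until formulas, the same construction settles $\sim_{\CSL_{\backslash\!\X}}$ without any detour through uniformization.

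By contrast, your proposed route --- a layered gadget of $n$ binary nondeterministic choices, or rate-based discrimination, with the NP-witness carried by a scheduler --- leaves the crucial soundness direction unaddressed: to conclude $s\sim_{\CSL} r$ on no-instances you must show that \emph{no} $\CSL$ formula distinguishes the two states, including formulas that probe the intermediate layers of your gadget, and nothing in the proposal controls this. The rate-based variant is additionally suspect because, as Example~\ref{ex:2-step recurrent ex2} shows, states whose available exit rates differ are immediately $\CSL$-distinguishable, so rates cannot be used to encode arithmetic without breaking equivalence on no-instances. So the missing idea is concrete: keep the $\CTMDP$ flat (one step to labelled absorbing successors), put the subset-guess into the target set $E$ of a single until formula, and make the non-existence of a convex combination of $\nu_1,\nu_2$ matching $\mu$ equivalent to the Subset-Sum instance; without that (or an equally explicit gadget plus a full ``no spurious formula'' argument), the proposal is a plan rather than a proof.
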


\begin{remark}\label{remark:2-step recurrent}
We have implemented the above described scheme to check whether some models in practice are 2-step 
recurrent or not. Even though the implemented classification scheme is not complete since we 
do not compute $\CSL$ equivalence, it has been shown quite useful in practice.
Our initial experiments show that the
non 2-step recurrent $\CTMDP$s consist of most models in practice. For instance the models of ``Erlang 
Stages"~\cite{ZhangN10},
``Stochastic Job Scheduling"~\cite{BrunoDF81}, ``Fault-Tolerant Work Station 
Cluster"~\cite{HaverkortHK00,KatoenZHHJ09}, and
``European Train Control System"~\cite{BoedeHHJPPWB06} are all non 2-step recurrent, which means that 
strong bisimulation coincides with $\sim_{\CSL}$ on these models. To be more confident, we also checked $\MDP$ models from
the PRISM~\cite{KNP11} benchmark interpreted as $\CTMDP$ models by interpreting all probabilities as rates.
We found that all of them are non 2-step recurrent. \qed
\end{remark}

\section{Bisimilarity  and  $\CSLstar$ Equivalence}
\label{sec:mtl}
In this section we study the relation between bisimilarity and $\CSLstar$ equivalence.
We first introduce $\CSLstar$, then show that strong bisimulation can be
fully characterized by $\CSLstar$ for arbitrary $\CTMDP$s. Then we 
extend the work to weak bisimulation.

\subsection{$\CSLstar$}
As $\CTL^*$ and $\PCTL^*$ can be seen as extensions of $\CTL$ and $\PCTL$ respectively, 
$\CSLstar$ can also be seen as an extension of $\CSL$, where the path formula is defined by the Metric Temporal Logic 
($\MTL$)~\cite{Koymans1990SRP}. $\MTL$ extends linear temporal logic~\cite{Pnueli1977TLP} by associating 
each temporal operator with a time interval. It is a popular logic used to specify
properties of real-time systems and has been extensively studied 
in the literature~\cite{Alur1994RTL,Ouaknine2005DMT,bouyer2007cost,jenkins2010alternating}.
The logic $\MTL$ was also extended to $\CTMC$s in~\cite{Chen2011TVC},
where the authors studied the problem of model checking $\CTMC$s 
against $\MTL$ specifications. 
Formally, the syntax of $\CSLstar$ is defined by the following BNFs:
\begin{align*}
  \phi ::=& a\mid \neg\phi \mid \phi\land\phi \mid \MC{P}_{\bowtie p}(\psi), \\
  \psi ::=& \phi \mid \neg\psi \mid \psi\land\psi \mid \X^I\psi \mid \psi\U^I\psi. 
\end{align*}
The semantics of state formulas is the same as $\CSL$, while the semantics of
path formulas is more involved, since we may have different and embedded time bounds.
As for \MTL{}, there are two different semantics for the path formulas:
continuous semantics and pointwise semantics. These two semantics make non-trivial
differences in real-time systems, see~\cite{Ouaknine2005DMT} for details. 
We shall focus on the pointwise semantics as for $\CSL$ in this paper. 
Given a path $\omega$ and  a path formula $\psi$ of $\CSLstar$, 
the satisfiability $\omega\models\psi$ is defined inductively as follows:
$\omega\models a$ iff $a\in L(\omega[0])$, 
$\omega\models \neg\psi$ iff $\omega\not\models\psi$,
$\omega\models \psi_1\land\psi_2$ iff
$\omega\models\psi_1\land\omega\models\psi_2$,
$\omega\models\X^I\psi$ iff $\SUF{1}\models\psi\land\Time(\omega,0)\in I$, and
$$
    \omega\models \psi_1\U^I\psi_2 
    \text{ iff }\exists i.(\SUF{i}\models\psi_2\land\sum_{0\le j<i}\Time(\omega,j)\in I\land
    (\forall 0\le j<i.\SUF{j}\models\psi_1)).
$$

\subsection{Strong Bisimulation}\label{sec:mtl strong}
In this section we prove the soundness and completeness of strong bisimulation
with respect to $\CSLstar$ equivalence. Different from $\CTL$ and its extension $\CTL^*$,
whose equivalences coincide on labelled transition systems~\cite{Browne1988CFK}, 
the extension from $\CSL$ to $\CSLstar$ is non-trivial, as we shall show in this section that $\CSLstar$
can fully characterize strong bisimulation for arbitrary $\CTMDP$s.  
We reconsider Example~\ref{ex:counterexample completeness} for inspiration:
\begin{example}\label{ex:csl 1}
  Let $s_0$ and $r_0$ be the states introduced in
  Example~\ref{ex:counterexample completeness}, where we have shown that
  $s_0$ and $r_0$ are not bisimilar, but satisfy the same $\CSL$
  formula. However if we consider $\CSLstar$, $s_0$ and $r_0$ are
  not $\CSLstar$ equivalent. It suffices to find a formula $\phi$ in $\CSLstar$
  such that $s_0\models\phi$, but $r_0\not\models\phi$.
  Let $\psi :=
  (L(s_0)\UI[{[0.6,\infty)}]L(u_1))\lor (L(s_0)\UI[{[1,\infty)}]
  L(u_3)),$ 
  then the maximal probability of paths starting from $s_0$
  and satisfying $\psi$ is equal to
  $\max\{0.3\cdot e^{-0.6} + 0.4\cdot e^{-1},0.5\cdot e^{-0.6} + 0.1\cdot e^{-1}\}<0.312,$
  while the probability for $r_0$ is equal to 
  $\max\{0.3\cdot e^{-0.6} + 0.4\cdot e^{-1},0.4\cdot e^{-0.6} + 0.3\cdot e^{-1},0.5\cdot e^{-0.6} + 0.1\cdot e^{-1}\}>0.312,$
  thus $s_0\models\MC{P}_{\le 0.312}(\psi)$, while $r_0\not\models\MC{P}_{\le 0.312}(\psi)$, which 
  indicates $s_0~\not\sim_{\CSLstar}~r_0$. Note $\psi$ is not a valid formula
  in $\CSL$, since it is the disjunction of two until operators.
\qed
\end{example}

In the remainder of this section, we shall focus on the proof of $\sim~=~\sim_{\CSLstar}$.
First, we introduce the following lemma in~\cite{SharmaK2012}:
\begin{lemma}[Theorem 5~\cite{SharmaK2012}]\label{lem:mtl cylinder}
  Given a path formula $\psi$ of \emph{$\CSLstar$} and a state $s$, there exists a set of cylinder
  sets $\mathit{Cyls}$ such that $\SAT(\psi)=\cup_{C\in\mathit{Cyls}}C.$
\end{lemma}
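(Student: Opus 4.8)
The plan is to prove the statement by structural induction on the path formula $\psi$, showing that each satisfaction set $\SAT(\psi)$ lies in the family $\mathcal{U}$ of sets that can be written as an (at most countable) union of cylinder sets $C(s_0,I_0,\dots,I_k,s_{k+1})$; by rooting every cylinder at the given state we may take $\iPath{\omega}{0}=s$ throughout. The base case $\psi=\phi$ is immediate, since $\omega\models\phi$ depends only on $\iPath{\omega}{0}$, so $\SAT(\phi)=\bigcup_{s\models\phi}\{\omega\mid\iPath{\omega}{0}=s\}$ and each such set is the finite union over $s'\in S$ of the cylinders $C(s,[0,\infty),s')$. Conjunction is equally routine once one observes that the intersection of two cylinders is again a cylinder (equate the two state prefixes wherever they overlap and intersect the corresponding time intervals, using that a finite intersection of closed intervals is closed or empty); distributing the intersection over the two unions supplied by the induction hypothesis keeps us inside $\mathcal{U}$.

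For the temporal operators I would unfold one step. For $\X^I\psi'$ I fix the first state, constrain $\Time(\omega,0)\in I$, and require the one-step suffix $\SUF{1}$ to satisfy $\psi'$; since $\SAT(\psi')\in\mathcal{U}$ by hypothesis, prepending the block $(s,I,\cdot)$ to each of its cylinders yields a union of cylinders for $\SAT(\X^I\psi')$ (splitting $I$ into closed pieces where necessary). For $\psi_1\U^I\psi_2$ I decompose over the index $i$ of the first witness: $\SAT(\psi_1\U^I\psi_2)=\bigcup_{i\ge 0}A_i$, where $A_i$ requires $\SUF{j}\models\psi_1$ for $j<i$, $\SUF{i}\models\psi_2$, and $\sum_{0\le j<i}\Time(\omega,j)\in I$. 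The state and suffix requirements are handled by the induction hypothesis and finite prefixing; the genuinely new ingredient is the cumulative-time constraint $\sum_{j<i}\Time(\omega,j)\in I$, which cuts out a slab between two hyperplanes in the finitely many sojourn coordinates $t_0,\dots,t_{i-1}$. A slab is covered by countably many axis-aligned products of closed intervals (approximate the endpoints of $I$ and of each coordinate by rationals), and each such product is exactly the time part of a cylinder, so $A_i\in\mathcal{U}$ and hence so is the countable union.

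The step I expect to be the real obstacle is negation, because $\mathcal{U}$ is not visibly closed under complement: the complement of a closed time interval is only a \emph{countable} union of closed intervals, and complementing a countable union of cylinders would naively produce a countable intersection that can pin down an infinite prefix rather than a cylinder. I would break this using a bounded-clock (region) argument rather than any abstract closure property of $\mathcal{U}$. Let $B$ be the finite set of all interval endpoints occurring in $\psi$, together with $0$; these endpoints carve $[0,\infty)$ into finitely many regions, and crucially every cumulative sojourn time exceeding $\max B$ falls into the single region $(\max B,\infty)$, so only finitely many regions are ever relevant even along the unbounded horizon of an until. One then shows that whether $\omega\models\psi$ depends only on the sequence of pairs (state $\iPath{\omega}{i}$, region of the relevant individual and cumulative times), a word over a \emph{finite} alphabet; the satisfaction condition becomes a Boolean and temporal condition on such words whose accepting set and its complement are both unions of region cylinders fixing a finite prefix. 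Translating each region cylinder back into ordinary cylinders -- splitting the half-open region boundaries into countable unions of closed intervals -- keeps us in $\mathcal{U}$ and simultaneously supplies closure under complement, completing the induction.
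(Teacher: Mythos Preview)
The paper does not prove this lemma in the compiled text; it is imported as Theorem~5 of \cite{SharmaK2012}. The \LaTeX\ source does contain a suppressed proof sketch (wrapped in \texttt{\textbackslash iffalse\,\dots\,\textbackslash fi}), and on every connective except negation it proceeds essentially as you do: state formulas are immediate, $\land$ is handled by intersecting cylinders and distributing over the two unions, $\X^I$ by prepending a single $(s,I,\cdot)$ block, and the cumulative-time constraint in $\U^I$ by a countable rational discretisation of the interval endpoints (the paper indexes the approximating grid by a parameter $k$ and takes the union over all $k$ and all witness depths $n$).

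The one substantive difference is negation. You attack it head-on with a region/clock-zone argument in the spirit of MTL decidability; the paper's suppressed proof (and, one presumes, the cited reference) instead first rewrites $\psi$ into \emph{positive normal form}---negations pushed down to state subformulas, with $\lor$ and $\Box^I$ added as primitive duals---citing \cite{Chen2011TVC} for the transformation. After that the structural induction ranges only over $\phi,\lor,\land,\X^I,\U^I,\Box^I$, and path-level negation never appears, so the complement obstacle you flag simply vanishes. Your region argument might be completable, but as written it has a gap: for a negated unbounded until (equivalently $\Box^{[0,\infty)}$) the claim that both the accepting set and its complement are unions of cylinders ``fixing a finite prefix'' is not justified---a set such as $\{\omega\mid\forall i.\,\omega[i]\models\phi\}$ typically contains no non-empty cylinder whatsoever, so no finite-prefix decomposition of the kind you describe can cover it. Passing to positive normal form is the standard and far less painful route, and it is what the paper relies on.
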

As a direct result of Lemma~\ref{lem:mtl cylinder}, $\SAT(\psi)$ is measurable
for any path formula $\psi$ of $\CSLstar$, as $\SAT(\psi)$ can be represented
by a countable set of measurable cylinders.

Now we are ready to present the main result of this section, i.e.,
strong bisimulation coincides with $\CSLstar$ equivalence for arbitrary $\CTMDP$s:
\begin{theorem}\label{thm:mtl strong}
  For any \emph{$\CTMDP$}, $\sim~=~\sim_{\CSLstar}$.
\end{theorem}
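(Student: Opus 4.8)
The plan is to prove $\sim\ =\ \sim_{\CSLstar}$ by establishing the two inclusions separately. For soundness, $\sim\ \subseteq\ \sim_{\CSLstar}$, I would show that bisimilar states assign equal measures to the satisfaction set of every $\CSLstar$ path formula. Concretely, fix a strong bisimulation $\MC{R}$ with $s\ \MC{R}\ r$. By Lemma~\ref{lem:mtl cylinder} each $\SAT(\psi)$ is a countable union of cylinder sets, so it suffices to argue that for every scheduler $\Sch$ of $s$ there is a scheduler $\Sch'$ of $r$ (and vice versa) with $\PM_{\Sch,s}(\SAT(\psi))=\PM_{\Sch',r}(\SAT(\psi))$; the $\MC{P}_{\bowtie p}$ operator then cannot distinguish $s$ and $r$, and the Boolean and $\X^I,\U^I$ cases on path formulas propagate this through the inductive definition of $\CSLstar$. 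The scheduler transfer is the standard bisimulation coupling argument, now carried out on timed paths: since each $s\TRAN{\lambda}\mu$ is matched by a combined $r\TRANP{\lambda}\mu'$ with $\mu\ \MC{R}\ \mu'$, and the sojourn-time density $\lambda e^{-\lambda t}$ is identical on matched transitions, one builds $\Sch'$ step by step along the cylinder structure used in Eq.~(\ref{eq:measure}), pushing the equality of measures through the integral. The labels agree on $\MC{R}$-equivalent states, so the atomic predicates appearing in $\psi$ are respected class-wise.

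For completeness, $\sim_{\CSLstar}\ \subseteq\ \sim$, the plan is to set $\MC{R}=\{(s,r)\mid s\ \sim_{\CSLstar}\ r\}$ and verify it is a strong bisimulation directly, \emph{without} the non-2-step-recurrent restriction that Theorem~\ref{thm:equivalent strong} needed for plain $\CSL$. Clearly $L(s)=L(r)$. Given $s\ \MC{R}\ r$ and a transition $s\TRAN{\lambda}\mu$, I must produce $r\TRANP{\lambda}\mu'$ with $\mu\ \MC{R}\ \mu'$. First, $r$ must have transitions of exactly rate $\lambda$: otherwise, as in Example~\ref{ex:2-step recurrent ex2}, a single $\X^{[a,b]}$ formula separates $s$ from $r$, and these formulas already live in $\CSL\subseteq\CSLstar$. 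So restrict attention to the polytope $P$ of distributions over classes $S/\MC{R}$ reachable by $r$ via rate-$\lambda$ combined transitions, and let $\vec{a}$ be the class-vector of $\mu$. Suppose for contradiction $\vec{a}\notin P$. Since $P$ is a convex polytope, there is a separating hyperplane, i.e. weights $\rho_C\ge 0$ over the classes with $\sum_C\rho_C\,\mu(C)$ strictly larger (say) than $\sum_C\rho_C\,\nu(C)$ for every $\nu\in P$.

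The crux — and the step I expect to be the main obstacle — is realizing this separating functional as a $\CSLstar$ path formula evaluated from $s$ and $r$. This is exactly where $\CSLstar$ beats $\CSL$: a disjunction of until-formulas with \emph{different} time bounds, as in Example~\ref{ex:csl 1}, namely $\psi=\bigvee_C\big(L(s)\,\U^{I_C}\,\Phi_C\big)$ where $\Phi_C$ is a $\CSLstar$ state formula characterizing class $C$ (such $\Phi_C$ exist because distinct $\sim_{\CSLstar}$-classes are separated by some formula, and $S/\MC{R}$ is finite), lets one weight each class $C$ by a factor of the form $e^{-\lambda\,b_C}$ under the one-step distribution. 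By choosing the thresholds $b_C$ one can make these coefficients match the desired $\rho_C$ up to common scaling, so the one-step probability of $\psi$ under transition $s\TRAN{\lambda}\mu$ equals $\sum_C\rho_C\,\mu(C)$ (times a constant), while for $r$ the optimal scheduler over its rate-$\lambda$ transitions yields the maximum $\sup_{\nu\in P}\sum_C\rho_C\,\nu(C)$, which is strictly smaller. Then $\MC{P}_{\le t}(\psi)$ for a suitable threshold $t$ separates $s$ and $r$ — a contradiction. One must handle: (i) classes reachable from $s$ but not appearing at all for $r$ (handled as $\rho_C=0$ or using an appropriate $\Phi_C$), (ii) the need for \emph{combined} transitions on $r$'s side, which is fine since the optimum of a linear functional over the convex hull is attained at the hull, matching the definition of $\TRANP{}$, and (iii) ensuring the chosen time intervals actually span the ratios needed — this is the analogue of the "$\frac{\rho_1}{\rho_2}$ ranges over $[0,\infty)$" argument in the sketch for Theorem~\ref{thm:equivalent strong}, but now with one interval per class rather than two, which is what removes the 2-step-recurrence hypothesis. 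Concluding, $\vec{a}\in P$, i.e. some combined $r\TRANP{\lambda}\mu'$ has $\mu\ \MC{R}\ \mu'$, so $\MC{R}$ is a strong bisimulation and $\sim_{\CSLstar}\ \subseteq\ \sim$.
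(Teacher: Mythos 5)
Your soundness direction coincides with the paper's (cylinder representation via Lemma~\ref{lem:mtl cylinder} plus a stepwise scheduler-transfer along cylinders), and your completeness architecture---first force matching exit rates with $\X^{[a,b]}$ formulas, then characterize the finitely many $\sim_{\CSLstar}$-classes by state formulas $\Phi_C$ and separate the class vector of $\mu$ from the polytope of class vectors achievable by rate-$\lambda$ combined transitions of $r$---is essentially the paper's route, where your separating-hyperplane formulation is a clean $n$-class generalization of the paper's two-transition, two-coordinate case analysis. However, the step you yourself call the crux has a genuine gap: you realize the separating functional with \emph{until} disjuncts $L(s)\U^{I_C}\Phi_C$. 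The probability of such a formula is \emph{not} a linear functional of the one-step class vector: a path may stay for several steps in states whose label satisfies $L(s)$ (states of $\ESET{s}$ itself, or any other class sharing that atomic label) and reach $\Phi_C$ only later, and the scheduler of $r$ retains freedom after the first step. Hence neither ``the one-step probability of $\psi$ under $s\TRAN{\lambda}\mu$ equals $\sum_C\rho_C\,\mu(C)$ up to a constant'' nor ``$r$'s optimum is $\sup_{\nu\in P}\sum_C\rho_C\,\nu(C)$'' is justified. This multi-step interference is precisely what forces the non-2-step-recurrence hypothesis in Theorem~\ref{thm:equivalent strong}, so as written your argument does not remove that hypothesis; Example~\ref{ex:csl 1} only gets away with untils because the successors there are silent and carry pairwise distinct labels.

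The repair is exactly what the paper does: it proves completeness for the fragment $\CSL^{\lor}$ whose path formulas are disjunctions of next operators, taking the disjuncts to be $\X^{I_C}\Phi_C$. Since the classes are pairwise disjoint, these events are disjoint, and the probability of $\bigvee_C\X^{I_C}\Phi_C$ under the choice $(\lambda,\mu)$ is exactly $\sum_C\mu(C)\bigl(e^{-\lambda a_C}-e^{-\lambda b_C}\bigr)$, independent of any future scheduling; these coefficients realize any weight vector with entries in $(0,1)$, so your separation argument then goes through. Two smaller points also need patching: a separating hyperplane for a point outside $P$ need not have non-negative coefficients, but since all class vectors sum to one you may add a common constant to every $\rho_C$ and rescale; and you must still exclude that a transition of $r$ with some rate $\lambda'\neq\lambda$ attains a larger value of the chosen formula---your rate-matching remark only shows $r$ has \emph{some} rate-$\lambda$ transition, whereas the paper's preliminary Claim and its monotonicity analysis of $e^{-\lambda a}-e^{-\lambda b}$ in $\lambda$ are what license restricting attention to the rate-$\lambda$ polytope.
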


\subsection{Weak Bisimulation}\label{sec:mtl weak}
In this section we shall discuss the relation between weak bisimulation and
the equivalence induced by $\CSLstar_{\backslash\!\X}$.
Similar as in Section~\ref{sec:mtl strong} for strong bisimulation,
weak bisimulation can be fully characterized by $\CSLstar_{\backslash\!\X}$.

Since our weak bisimulation is defined as strong bisimulation
on the uniformized $\CTMDP$s, foremost we shall make sure that
$\CSLstar_{\backslash\!\X}$ is preserved by uniformization under TTP schedulers, that is,
we shall prove the following lemma:
\begin{lemma}\label{lem:uniformization mtl}
$s~\sim_{\CSLstar_{\backslash\!\X}}~r$ in $\MC{C}$ iff $\UNIFORM{s}~\sim_{\CSLstar}~\UNIFORM{r}$ in 
$\UNIFORM{\MC{C}}$.
\end{lemma}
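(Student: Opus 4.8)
The plan is to factor the biconditional through the $\X$-free fragment of the \emph{uniformized} model and chain two preservation steps. Step~(I): uniformization preserves $\X$-free $\CSLstar$ equivalence, i.e. $s~\sim_{\CSLstar_{\backslash\!\X}}~r$ in $\MC{C}$ iff $\UNIFORM{s}~\sim_{\CSLstar_{\backslash\!\X}}~\UNIFORM{r}$ in $\UNIFORM{\MC{C}}$. Step~(II): on a uniformized $\CTMDP$ the $\X$ operator adds no distinguishing power, i.e. $\UNIFORM{s}~\sim_{\CSLstar_{\backslash\!\X}}~\UNIFORM{r}$ iff $\UNIFORM{s}~\sim_{\CSLstar}~\UNIFORM{r}$ in $\UNIFORM{\MC{C}}$. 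Composing (I) with (II) gives the lemma. The entire burden of reaching the \emph{full} logic (with $\X$) on the uniformized side sits in (II); a pure $\X$-free preservation argument delivers only (I).

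For Step~(I) I would prove by induction on formula structure that every $\X$-free $\CSLstar$ state formula $\phi$ satisfies $s\models\phi$ in $\MC{C}$ iff $\UNIFORM{s}\models\phi$ in $\UNIFORM{\MC{C}}$. The only nontrivial case is $\MC{P}_{\bowtie p}(\psi)$ with $\psi$ an $\X$-free path formula, for which it suffices that the extremal (over TTP schedulers) probability of $\psi$ is invariant under uniformization. By Lemma~\ref{lem:mtl cylinder} the set $\SAT(\psi)$ is a countable union of cylinder sets, so the probability of $\psi$ is a countable sum of cylinder measures; each such measure is preserved by uniformization under TTP schedulers, which is exactly the extension to $\CSLstar_{\backslash\!\X}$ of the time-bounded-reachability invariance recalled in Section~\ref{sec:pre}. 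The induction hypothesis handles the embedded state subformulas that fix the state labels appearing in the cylinders.

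Step~(II) is the heart of the argument. One inclusion, $\sim_{\CSLstar}~\subseteq~\sim_{\CSLstar_{\backslash\!\X}}$, is immediate since the latter is a sublogic. For the converse I would first invoke Theorem~\ref{thm:mtl strong} on $\UNIFORM{\MC{C}}$ to replace full $\CSLstar$ equivalence by strong bisimilarity, $\UNIFORM{s}~\sim_{\CSLstar}~\UNIFORM{r}$ iff $\UNIFORM{s}~\sim~\UNIFORM{r}$, so it remains to show $\UNIFORM{s}~\sim_{\CSLstar_{\backslash\!\X}}~\UNIFORM{r}$ implies $\UNIFORM{s}~\sim~\UNIFORM{r}$, which I would prove in contrapositive form: if $\UNIFORM{s}~\not\sim~\UNIFORM{r}$ I construct an $\X$-free $\CSLstar$ formula separating them. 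The uniform structure of $\UNIFORM{\MC{C}}$ is decisive. Since every transition carries the same exit rate $E$, the timing role of $\X^I$ is vacuous: a single-step sojourn-time constraint $I=[\alpha,\beta]$ contributes only the state-independent factor $e^{-E\alpha}-e^{-E\beta}$, decoupled from the choice of transition. Moreover, since uniformization (Definition~\ref{def:uniformization}) attaches the self-loop $(1-\frac{\lambda}{E})\DIRAC{\UNIFORM{s}}$ to every transition, the one-step jump of $\UNIFORM{s}$ into a class $C$ of $\MC{R}=\sim_{\CSLstar_{\backslash\!\X}}$ is observable by a time-bounded until formula $\Phi_{\ESET{\UNIFORM{s}}}\U^I\Phi_C$, where $\Phi_{\ESET{\UNIFORM{s}}}$ and $\Phi_C$ are $\X$-free state formulas characterizing the respective classes (such formulas exist because $\MC{R}$ has finitely many classes, each separable from every other by an $\X$-free formula). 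Varying $I$ lets the accumulating self-loops expose the per-class jump probabilities $\mu(C)$, so the boolean-combination-of-until construction behind Theorem~\ref{thm:mtl strong}, in the style of Example~\ref{ex:csl 1}, can be carried out entirely within $\CSLstar_{\backslash\!\X}$.

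The main obstacle is making this $\X$-elimination in Step~(II) rigorous. Two points need care. First, a state may have non-self-loop transitions that remain inside its own $\MC{R}$-class, so ``staying in $\ESET{\UNIFORM{s}}$'' is not literally self-looping; I would therefore work with the class-characterizing formula $\Phi_{\ESET{\UNIFORM{s}}}$ rather than with $\UNIFORM{s}$ itself and argue that $\Phi_{\ESET{\UNIFORM{s}}}\U^I\Phi_C$ still isolates the first jump leaving the class. Second, TTP schedulers decide on the total elapsed time, which couples the first sojourn time with later choices, so the decoupling of timing from choice must be justified at the level of the cylinder measures of Lemma~\ref{lem:mtl cylinder}, using that the first-step time factor integrates out to the state-independent scalar above. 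Once these are settled, the separating $\X$-free formula follows from the same separating-hyperplane reasoning on the class-probability vectors $(\mu(C))_{C}$ that underlies Theorem~\ref{thm:mtl strong}.
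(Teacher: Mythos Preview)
Your two–step decomposition is correct, and Step~(I) is exactly the paper's argument: the paper represents $\SAT(\psi)$ as a countable family of cylinders via Lemma~\ref{lem:mtl cylinder} and then argues, by induction on cylinder length and by the time-bounded reachability invariance of~\cite{RabeS11,NeuhausserSK09}, that every cylinder probability is preserved by uniformization under a matched pair of TTP schedulers $(\Sch,\UNIFORM{\Sch})$. One caveat: the paper writes ``$\PM_{\Sch,s}(C)=\PM_{\UNIFORM{\Sch},\UNIFORM{s}}(\UNIFORM{C})$'' for a cylinder $C$, but the object $\UNIFORM{C}$ is not the literal one-step cylinder in $\UNIFORM{\MC{C}}$; the induction actually tracks the density of the \emph{first hitting time} of $\bar{s}_{1}$ from $\bar{s}_{0}$ (allowing intermediate self-loops), which is what the reachability-invariance result controls. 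Your formulation in terms of preservation of extremal probabilities of $\X$-free path formulas is the cleaner way to state what is really being proved.

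Where you go further than the paper is Step~(II). The paper's written proof presents only the cylinder-preservation argument and asserts that it ``suffices''; it does not spell out why $\UNIFORM{s}\sim_{\CSLstar_{\backslash\!\X}}\UNIFORM{r}$ upgrades to $\UNIFORM{s}\sim_{\CSLstar}\UNIFORM{r}$. (For the $\CSL$ analogue, Lemma~\ref{lem:CSL preserved}, this upgrade is done explicitly by showing that $\sim_{\CSL_{\backslash\!\X}}$ on $\UNIFORM{\MC{C}}$ is a strong bisimulation; the $\CSLstar$ proof leaves the corresponding step implicit.) Your route—invoke Theorem~\ref{thm:mtl strong} on $\UNIFORM{\MC{C}}$ and then argue that on a uniform-rate model the separating $\X^{I}$-disjunctions used in the completeness proof can be replaced by $\X$-free until-disjunctions—is sound and in fact dovetails with the paper's own machinery: the until-based separating formula $(s\lor s_k)\U^{[a,b]}(s_j\lor t)$ from the proof of Theorem~\ref{thm:equivalent strong} already shows how to realize the needed ratio $\rho_1/\rho_2$ without $\X$, and the uniform rate removes the rate-distinction cases entirely. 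The two obstacles you flag (class-internal transitions versus genuine self-loops, and the time-coupling under TTP schedulers) are the right ones; both are handled by working with the class-characterizing formulas $\Phi_{\ESET{\UNIFORM{s}}}$ and by the factorization of the single-jump density into the state-independent factor $e^{-E\alpha}-e^{-E\beta}$, exactly as you outline.
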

As a side contribution, we extend the result in~\cite{NeuhausserSK09,RabeS11} and 
show that uniformization also does not change properties specified by $\CSLstar_{\backslash\!\X}$,
provided TTP schedulers are considered.
Given Lemma~\ref{lem:uniformization mtl}, the soundness and completeness of $\WBS$
with respect to $\sim_{\CSLstar_{\backslash\!\X}}$ are then straightforward
from Definition~\ref{def:weak bisimulation} and the fact that $\sim$ is both sound and 
complete with respect to $\CSLstar$.
\begin{theorem}\label{thm:mtl weak}
  For any \emph{$\CTMDP$}, $\WBS~=~\sim_{\CSLstar_{\backslash\!\X}}$.
\end{theorem}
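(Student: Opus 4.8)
The plan is to derive this theorem purely from results already established, chaining two equivalences through the uniformized $\CTMDP$. First I would recall the chain of definitions and lemmas that do the work. By Definition~\ref{def:weak bisimulation}, $s \WBS r$ in $\MC{C}$ holds by definition exactly when $\UNIFORM{s} \sim \UNIFORM{r}$ in $\UNIFORM{\MC{C}}$. By Theorem~\ref{thm:mtl strong} applied to the $\CTMDP$ $\UNIFORM{\MC{C}}$, we have $\UNIFORM{s} \sim \UNIFORM{r}$ if and only if $\UNIFORM{s} \sim_{\CSLstar} \UNIFORM{r}$ in $\UNIFORM{\MC{C}}$. Finally, by Lemma~\ref{lem:uniformization mtl}, $\UNIFORM{s} \sim_{\CSLstar} \UNIFORM{r}$ in $\UNIFORM{\MC{C}}$ if and only if $s \sim_{\CSLstar_{\backslash\!\X}} r$ in $\MC{C}$. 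Composing these three biconditionals yields $s \WBS r \iff s \sim_{\CSLstar_{\backslash\!\X}} r$, which is exactly the claim.

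The only genuinely new ingredient needed is Lemma~\ref{lem:uniformization mtl}, so if a self-contained argument is wanted I would spend the bulk of the effort there, although as stated in the excerpt I may simply invoke it. Its proof would parallel Lemma~\ref{lem:CSL preserved}: the key fact is that TTP schedulers preserve time-bounded reachability under uniformization~\cite{NeuhausserSK09,RabeS11}, and this must be lifted from reachability to all path properties expressible in $\CSLstar_{\backslash\!\X}$. By Lemma~\ref{lem:mtl cylinder}, the satisfaction set of any $\CSLstar$ path formula is a countable union of cylinder sets, so it suffices to argue that for a $\X$-free formula the probabilities of such cylinder-based events are invariant under the self-loop insertion of uniformization — intuitively, inserting self-loops with the residual rate $E - \lambda$ does not change which states are visited nor the total time spent in each, only how the stay at a state is subdivided into uniformized steps, and $\X$-freeness is precisely what makes the subdivision invisible. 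One then proceeds by induction on the structure of the path formula, with the $\U^I$ case handled via the cylinder decomposition and the reachability-preservation result, and the Boolean cases being immediate.

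I expect the main obstacle to lie entirely inside Lemma~\ref{lem:uniformization mtl}, specifically in the inductive step for the until operator with a non-trivial time interval $I$: one has to carefully match a TTP scheduler of $\MC{C}$ with a TTP scheduler of $\UNIFORM{\MC{C}}$ (and conversely) so that the induced measures agree on every relevant cylinder, while respecting the measurability constraints on schedulers from Definition~\ref{def:scheduler}. The subtlety is that after uniformization a single step of $\MC{C}$ corresponds to a geometrically distributed number of steps of $\UNIFORM{\MC{C}}$, so the correspondence of schedulers is not one-to-one at the level of individual transitions; it must be set up at the level of the cumulative time-bounded measures, exactly as in the reachability proofs of~\cite{NeuhausserSK09,RabeS11}. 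Given that machinery, and given Theorem~\ref{thm:mtl strong}, the final assembly of Theorem~\ref{thm:mtl weak} is a short three-line composition of equivalences with no further difficulty.
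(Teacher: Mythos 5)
Your proposal is correct and follows exactly the paper's own argument: the same three-step chain through Definition~\ref{def:weak bisimulation}, Theorem~\ref{thm:mtl strong} applied to $\UNIFORM{\MC{C}}$, and Lemma~\ref{lem:uniformization mtl}. Your sketch of how Lemma~\ref{lem:uniformization mtl} itself would be proved (cylinder decomposition via Lemma~\ref{lem:mtl cylinder} plus preservation of time-bounded reachability under uniformization for TTP schedulers) also matches the paper's proof of that lemma, so there is nothing further to add.
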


Currently, we only prove Theorem~\ref{thm:mtl weak} with respect
to TTP schedulers. However, the optimal scheduler for a 
$\CSLstar$ formula may be not a TTP scheduler. Refer to the following example:
\begin{example}\label{ex:ttp insufficient}
Let $\MC{C}$ be a $\CTMDP$ as in Fig.~\ref{fig:cslstar}, where the letter on above of each state
denotes its label. Moreover states $s_8$ and $s_9$ only have self-loop transitions which are omitted.
Let $\psi=((a\lor b)\U^I d)\lor ((a\lor c)\U^I e)$ be a path formula of $\CSLstar$. We show
that there exists a non-TTP scheduler $\Sch$ such that
$$\PM_{\Sch,s_4}(\{\omega\in\Path{\INFPATH}(\MC{C})\mid\omega\models\psi\})>\PM_{\Sch',s_4}(\{\omega\in\Path{\INFPATH}(\MC{C})\mid\omega\models\psi\})$$
for any TTP scheduler $\Sch'$. Let $I=[0,\infty]$. Since $\Sch'$ is a TTP scheduler, it can only
make decision based on the elapsed time and the current state. When at $s_7$, 
$\Sch'$ will choose either the transition to $s_8$ or the transition to $s_9$ at each time point.
Therefore the maximal probability of satisfying $\psi$ is 0.5.
However for a general scheduler $\Sch$, it can make decision based on the full history.
For instance when at $s_7$, we can let $\Sch$ choose the transition to $s_8$, if the previous state is $s_5$,
otherwise $s_9$. Under this scheduler, the maximal probability of satisfying $\psi$ is equal to 1,
which cannot be obtained by any TTP scheduler. From this example, we can see that
an optimal scheduler for a $\CSLstar$ formula may make it decision based on the elapsed time
as well as the states visited. 
\end{example}

\begin{figure*}[!t]
\centering
  \begin{tikzpicture}[->,>=stealth,auto,node distance=2cm,semithick]
	\tikzstyle{state}=[circle,draw,thick]
	every label/.style=draw
        \tikzstyle{blackdot}=[circle,fill=black, minimum size=6pt,inner sep=0pt]
     \node[state,label={[label distance=0pt]90:{$a$}}](s4)[xshift=-1cm]{$s_4$};
     \node[blackdot](d)[right of=s4]{};
     \node[state,label={[label distance=0pt]90:{$b$}}](s5)[above right of=d,xshift=0.5cm]{$s_5$};
     \node[state,label={[label distance=0pt]90:{$c$}}](s6)[below right of=d,xshift=0.5cm]{$s_6$};
     \node[state,label={[label distance=0pt]90:{$a$}}](s7)[below right of=s5,xshift=0.5cm]{$s_7$};
     \node[state,label={[label distance=0pt]90:{$d$}}](s8)[above right of=s7,xshift=0.5cm]{$s_8$};
     \node[state,label={[label distance=0pt]90:{$e$}}](s9)[below right of=s7,xshift=0.5cm]{$s_9$};
     \path (s4) edge[-]             node {1} (d)
           (d)  edge[dashed]         node[left,yshift=5pt] {$\frac{1}{2}$} (s5)
                edge[dashed]         node[left,yshift=-5pt] {$\frac{1}{2}$} (s6)
           (s5) edge					node[right]{1} (s7)
           (s6) edge                  node[right]{1} (s7)
           (s7) edge                  node[left]{1}(s8)
           	edge                  node[left]{1}(s9);
  \end{tikzpicture}
  \caption{\label{fig:cslstar}
TTP schedulers are not enough to obtain optimal values for $\CSLstar$ properties.}
\end{figure*}
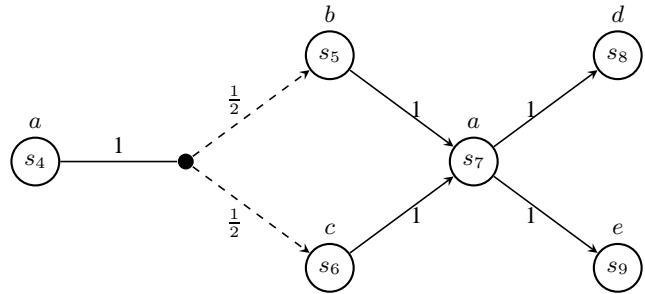

Example~\ref{ex:ttp insufficient} shows that it is not enough to
consider TTP schedulers in the setting of $\CSLstar$. 
In~\cite{NeuhausserSK09} another class of schedulers called \emph{Total
  Time History dependent schedulers} (TTH) is introduced.
We conjecture that for TTH schedulers: i) they preserve
$\CSLstar_{\backslash\!\X}$ properties after uniformization, and ii)
they are powerful enough to obtain optimal values for
$\CSLstar_{\backslash\!\X}$ properties. Condition i) guarantees
that Theorem~\ref{thm:mtl weak} is valid, while condition ii) makes 
Theorem~\ref{thm:mtl weak} general enough.
We leave the proof of the conjecture as our future work.

\begin{remark}
The expressiveness of $\CSLstar$ may be considered too powerful in certain cases.
For instance, path formulas like $\Box(a\U^{[2,10]} b)$
\footnote{$\Box\psi\equiv\neg((a\land\neg a)\U^{[0,\infty)}\neg\psi)$ for some $a$, i.e., $\psi$ holds forever.}
will be satisfied with probability 0 for any $\CTMDP$. In general, if $\psi$ can only be satisfied with
probability strictly less than 1, the probability of satisfying $\psi$ forever will be 0 for any $\CTMDP$.

In the other hand, a small fragment of $\CSLstar$ is enough to characterize strong bisimulation.
Let $\CSL^{\lor}$ denote the fragment of $\CSLstar$ whose path formulas are defined by the following syntax:
$\psi ::= \X^I\phi \mid \psi\lor\psi$. We have shown in~\cite{SongBTS2012}
that $\sim~=~\sim_{\CSL^{\lor}}$ for any $\CTMDP$. Therefore any subset of $\CSLstar$ which subsumes $\CSL^{\lor}$
will be strong enough to fully characterize strong bisimulation.
\end{remark}

\section{Relation to $\MDP$s and $\CTMC$s}\label{sec:relation}
In this section, we compare related work on other stochastic models: $\MDP$s and $\CTMC$s. 

\subsection{Relation to (Weak) Bisimulation for $\MDP$s}
For $\MDP$s, it is known that strong (probabilistic) bisimulation is only sound but not
complete with respect to $\PCTL$~\cite{SegalaL95}--the counterpart of $\CSL$ in discrete setting.
Differently, the completeness does not hold either even if we restrict to non 2-step recurrent 
$\MDP$s,   which can be defined in a straightforward way 
given Definition~\ref{def:2-step recurrent}. Refer to the following example:
\begin{example}
Let $s_0$ and $r_0$ be two states as in Example~\ref{ex:2-step recurrent ex1},
which will be viewed as two states in an $\MDP$. 
Moreover we assume that $u'_3$ only has a self loop. Since $u'_3$ has 
atomic propositions different from $s_0$ ($r_0$) and $u_3$, therefore $s_0$ and $r_0$
are not 2-step recurrent. However $s_0$ and $r_0$ satisfy the same $\PCTL$
formulas, since the maximal and minimal probabilities from $s_0$ and $r_0$ to any
subset of $\{u_1,u_2,u_3,u'_3\}$ are the same. As mentioned before, the middle transition
of $r_0$ cannot be simulated by any combined transition of $s_0$, hence
they are not strong probabilistic bisimilar. 
This indicates that strong (probabilistic) bisimulation is not
complete with respect to $\PCTL$ equivalence even that the given $\MDP$ is
non 2-step recurrent. \qed
\end{example}

The counterpart of $\CSLstar$ in discrete setting is
$\PCTL^*$~\cite{Aziz1995UWT}. Similar as in the continuous case, 
the equivalence induced by $\PCTL^*$
is strictly finer than $\sim_{\PCTL}$~\cite{SongZG2011}. 
However, different from the continuous case,
$\sim_{\PCTL^*}$ is still coarser than strong (probabilistic) bisimulation for $\MDP$s, that is,
strong (probabilistic) bisimulation is not complete with respect to $\PCTL^*$:
\begin{example}
Let $s_0$ and $r_0$ be two states as in 
Example~\ref{ex:counterexample completeness}, where we have shown that
$s_0$ and $r_0$ are neither strong bisimilar nor $\CSLstar$ equivalent.
However in~\cite{SongZG2011} $s_0$ and $r_0$ are shown to be
$\PCTL^*$ equivalent by viewing them as two states in an $\MDP$.
Therefore $\CSLstar$ gains more expressiveness by adding time bounds to the logic. 
\qed
\end{example}

The case for weak bisimulation is similar and omitted here.

\subsection{Relation to (Weak) Bisimulation for $\CTMC$s}\label{sec:relation to ctmc}

In this section we show that our bisimulations are downward compatible to those 
for $\CTMC$s. Different from $\CTMDP$s, there is no non-deterministic transitions in
$\CTMC$s, i.e., each state has only one transition, which will be
denoted by $s\TRAN{\lambda_s}\mu_s$. The notion of weak
bisimulation can be found in \cite{BaierKHW05} for $\CTMC$s, which is repeated
as follows:
\begin{definition}[Weak Bisimulation of $\CTMC$s]\label{def:weak bisimulation CTMC}
  For \emph{$\CTMC$}s, an equivalence relation $\MC{R}$ is a weak
  bisimulation iff for all $s~\MC{R}~r$ it holds: i) $L(s)=L(r)$, and
  ii) $\lambda_s\cdot\mu_s(C)=\lambda_r\cdot\mu_r(C)$ for all
  equivalence classes $C\neq \ESET{s}$.
 
 States $s,r$ are weak bisimilar, denoted by $s~\WBS_{\CTMC}~r$, iff
  there exists a weak bisimulation $\MC{R}$ such that $s~\MC{R}~r$.
\end{definition}

Strong bisimulation for $\CTMC$s is defined if in addition
$\lambda_s\cdot\mu_s(C)=\lambda_r\cdot\mu_r(C)$ holds for $C =
\ESET{s}= \ESET{r}$ as well.  States $s,r$ are strong bisimilar, denoted
by $s~\sim_{\CTMC}~r$, iff there exists a strong bisimulation $\MC{R}$
such that $s~\MC{R}~r$.

Below we prove that, restricted to $\CTMC$s, our strong and  weak bisimulations
agree with strong and weak bisimulations for $\CTMC$s, respectively:
\begin{lemma}\label{lem:weak bisimulation CTMC}
For $\CTMC$s, it holds  that $\sim~=~\sim_{\CTMC}$ and $\WBS~=~\WBS_{\CTMC}$.
\end{lemma}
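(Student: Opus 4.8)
The plan is to prove the two equalities separately, in each case showing the two inclusions. I will work throughout with the single transition $s \TRAN{\lambda_s}\mu_s$ available at every state $s$ of a $\CTMC$, and recall that for $\CTMC$s the (strong or weak) bisimulations for $\CTMDP$s in Definitions~\ref{def:strong bisimulation} and~\ref{def:weak bisimulation} collapse nicely because combined transitions are trivial: the only transition out of $s$ with rate $\lambda_s$ is $\mu_s$ itself, so $s\TRANP{\lambda_s}\mu$ forces $\mu=\mu_s$.

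First I would handle $\sim~=~\sim_{\CTMC}$. For $\sim~\subseteq~\sim_{\CTMC}$, take a strong bisimulation $\MC{R}$ (in the sense of Definition~\ref{def:strong bisimulation}) and $s~\MC{R}~r$. Then $L(s)=L(r)$, and since $s\TRAN{\lambda_s}\mu_s$ there must be $r\TRANP{\lambda_s}\mu'$ with $\mu_s~\MC{R}~\mu'$; by the triviality remark above, $\lambda_s=\lambda_r$ and $\mu'=\mu_r$, hence $\mu_s~\MC{R}~\mu_r$, i.e.\ $\mu_s(C)=\mu_r(C)$ for every class $C\in S/\MC{R}$. Multiplying by $\lambda_s=\lambda_r$ gives $\lambda_s\cdot\mu_s(C)=\lambda_r\cdot\mu_r(C)$ for all $C$, in particular for $C=\ESET{s}$, so $\MC{R}$ is a strong $\CTMC$-bisimulation and $s~\sim_{\CTMC}~r$. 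Conversely, for $\sim_{\CTMC}~\subseteq~\sim$, take a strong $\CTMC$-bisimulation $\MC{R}$ and $s~\MC{R}~r$: from $\lambda_s\cdot\mu_s(C)=\lambda_r\cdot\mu_r(C)$ for all classes $C$ (including $\ESET{s}$), summing over all $C$ gives $\lambda_s=\lambda_r$ (both sides sum to the respective exit rate), and then dividing back out yields $\mu_s(C)=\mu_r(C)$ for all $C$, i.e.\ $\mu_s~\MC{R}~\mu_r$; thus $r\TRAN{\lambda_s}\mu_r$ is the required matching transition and $\MC{R}$ is a $\CTMDP$ strong bisimulation, so $s~\sim~r$.

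Next, $\WBS~=~\WBS_{\CTMC}$. By Definition~\ref{def:weak bisimulation}, $s~\WBS~r$ iff $\UNIFORM{s}~\sim~\UNIFORM{r}$ in the uniformized chain $\UNIFORM{\MC{C}}$, which is again a $\CTMC$ (determinism is preserved by uniformization), so by the first part this is equivalent to $\UNIFORM{s}~\sim_{\CTMC}~\UNIFORM{r}$. It therefore remains to show $\UNIFORM{s}~\sim_{\CTMC}~\UNIFORM{r}$ in $\UNIFORM{\MC{C}}$ iff $s~\WBS_{\CTMC}~r$ in $\MC{C}$. I would prove this by a direct translation of witnessing relations. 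Given a weak $\CTMC$-bisimulation $\MC{R}$ on $\MC{C}$ with $s~\MC{R}~r$, let $\UNIFORM{\MC{R}}=\{(\UNIFORM{s},\UNIFORM{r})\mid s~\MC{R}~r\}$; one checks it is a strong $\CTMC$-bisimulation on $\UNIFORM{\MC{C}}$. The key computation: in $\UNIFORM{\MC{C}}$ every state has exit rate $E$ and transition distribution $\UNIFORM{\mu_s}=\frac{\lambda_s}{E}\mu_s + (1-\frac{\lambda_s}{E})\DIRAC{\UNIFORM{s}}$, so for a class $C\neq\ESET[\UNIFORM{\MC{R}}]{\UNIFORM{s}}$ we get $E\cdot\UNIFORM{\mu_s}(C)=\lambda_s\cdot\mu_s(C)$, which equals $\lambda_r\cdot\mu_r(C)=E\cdot\UNIFORM{\mu_r}(C)$ by weak $\CTMC$-bisimilarity; for $C=\ESET{\UNIFORM{s}}=\ESET{\UNIFORM{r}}$, the self-loop masses adjust so that $E\cdot\UNIFORM{\mu_s}(C) = \lambda_s\cdot\mu_s(C) + (E-\lambda_s)$ and similarly for $r$, and these agree because $\lambda_s\cdot\mu_s(C)=\lambda_r\cdot\mu_r(C)$ follows from summing the off-class equalities against $\lambda_s=\lambda_r$ — note that $\lambda_s=\lambda_r$ itself follows from weak $\CTMC$-bisimilarity only after summing off-class equalities, so the $\ESET{s}$ case needs this preliminary step. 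The converse direction is symmetric: from a strong $\CTMC$-bisimulation on $\UNIFORM{\MC{C}}$, the corresponding relation on $\MC{C}$ satisfies $\lambda_s\cdot\mu_s(C)=E\cdot\UNIFORM{\mu_s}(C)=E\cdot\UNIFORM{\mu_r}(C)=\lambda_r\cdot\mu_r(C)$ for $C\neq\ESET{s}$, which is exactly weak $\CTMC$-bisimilarity.

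The main obstacle, and the only place the argument is not purely bookkeeping, is the self-loop class $C=\ESET{s}$ in the $\WBS$ part: one must be careful that strong $\CTMC$-bisimulation in $\UNIFORM{\MC{C}}$ imposes an equality on $\ESET{\UNIFORM{s}}$ that weak $\CTMC$-bisimulation in $\MC{C}$ does not directly impose on $\ESET{s}$, and the reconciliation goes through the derived fact $\lambda_s=\lambda_r$ together with the uniformization formula for $\UNIFORM{\mu_s}$. I would also remark that $\WBS$ is independent of the uniformization rate $E$ (as noted after Definition~\ref{def:weak bisimulation}), so the statement is well-posed; everything else — determinism preservation, measurability, the triviality of combined transitions — is immediate from the definitions.
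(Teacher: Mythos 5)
Your overall route is the same as the paper's: part one by observing that combined transitions are trivial in a $\CTMC$ so that the condition $\lambda_s\cdot\mu_s(C)=\lambda_r\cdot\mu_r(C)$ for \emph{all} classes is exactly ``$\lambda_s=\lambda_r$ and $\mu_s~\MC{R}~\mu_r$''; part two by pushing a witnessing relation through the uniformization formula $\UNIFORM{\mu_s}=\frac{\lambda_s}{E}\cdot\mu_s+(1-\frac{\lambda_s}{E})\cdot\DIRAC{\UNIFORM{s}}$, so that off-diagonal classes satisfy $E\cdot\UNIFORM{\mu_s}(C)=\lambda_s\cdot\mu_s(C)$. That skeleton is fine and matches the paper's argument.

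However, there is a genuine error at exactly the point you flag as the ``main obstacle'', namely the class $\ESET{\UNIFORM{s}}$ in the direction $\WBS_{\CTMC}\Rightarrow\WBS$. You justify the equality of the diagonal masses via the claims that $\lambda_s=\lambda_r$ and that $\lambda_s\cdot\mu_s(\ESET{s})=\lambda_r\cdot\mu_r(\ESET{r})$ follow from weak $\CTMC$-bisimilarity. Neither does. Definition~\ref{def:weak bisimulation CTMC} constrains only the classes $C\neq\ESET{s}$; summing those equalities yields $\lambda_s\cdot(1-\mu_s(\ESET{s}))=\lambda_r\cdot(1-\mu_r(\ESET{r}))$, not $\lambda_s=\lambda_r$. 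Indeed two states whose transitions stay entirely inside their own class (say pure self-loops with rates $2$ and $3$, same label) are weakly bisimilar with different exit rates and with $\lambda_s\cdot\mu_s(\ESET{s})\neq\lambda_r\cdot\mu_r(\ESET{r})$ --- if your two claims held, weak and strong bisimulation would coincide on $\CTMC$s, which is precisely what weak bisimulation is designed to avoid. The step is repairable in one line, and this is how the paper closes it: either plug the summed off-class equality directly into $E\cdot\UNIFORM{\mu_s}(\ESET{\UNIFORM{s}})=(E-\lambda_s)+\lambda_s\cdot\mu_s(\ESET{s})=E-\lambda_s\cdot(1-\mu_s(\ESET{s}))$ and likewise for $r$, or simply observe that $\UNIFORM{\mu_s}$ and $\UNIFORM{\mu_r}$ agree on every class except possibly $\ESET{\UNIFORM{s}}=\ESET{\UNIFORM{r}}$ and both have total mass $1$, hence agree there as well. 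With that substitution your argument goes through; as written, the key intermediate facts are false.
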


The lemma above shows that $\sim$ and $\WBS$ are conservative extensions of strong and weak bisimulations for $\CTMC$s in \cite{BaierKHW05}, and so are their logical characterization results except that they only work on a subset of $\CTMDP$s free of 2-step recurrent states.

Since $\CTMC$s are sub-models of $\CTMDP$s,Theorem~\ref{thm:mtl strong} and \ref{thm:mtl weak} also hold for $\CTMC$s. 
Together with Lemma~\ref{lem:weak bisimulation CTMC}, we have the following result:
\begin{corollary}\label{cor:ctmc}
\begin{enumerate}
\item $\sim_{\CSLstar}~=~\sim~=~\sim_{\CTMC}~=~\sim_{\CSL}$,
\item $\sim_{\CSLstar_{\backslash\!\X}}~=~\WBS~=~\WBS_{\CTMC}~=~\sim_{\CSL_{\backslash\!\X}}$.
\end{enumerate}
\end{corollary}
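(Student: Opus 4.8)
The plan is to assemble Corollary~\ref{cor:ctmc} directly from the equivalences already established in the paper, using the fact that $\CTMC$s are a syntactic subclass of $\CTMDP$s so that every theorem proven for arbitrary $\CTMDP$s applies verbatim. First I would record the two ``logic'' equalities that hold on arbitrary $\CTMDP$s, hence in particular on $\CTMC$s: Theorem~\ref{thm:mtl strong} gives $\sim\,=\,\sim_{\CSLstar}$ and Theorem~\ref{thm:mtl weak} gives $\WBS\,=\,\sim_{\CSLstar_{\backslash\!\X}}$. Next I would invoke Lemma~\ref{lem:weak bisimulation CTMC}, which identifies the bisimulations of this paper with the classical $\CTMC$ bisimulations of \cite{BaierKHW05}: $\sim\,=\,\sim_{\CTMC}$ and $\WBS\,=\,\WBS_{\CTMC}$. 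Chaining these already yields the outer equalities $\sim_{\CSLstar}\,=\,\sim_{\CTMC}$ and $\sim_{\CSLstar_{\backslash\!\X}}\,=\,\WBS_{\CTMC}$.

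The remaining task is to pin down the $\CSL$ endpoints. For item~1 I would note that on $\CTMC$s there are no $2$-step recurrent states in the relevant sense, because each state has a unique transition and hence $|\SUC(s)|\le 1 < 2$ after the silent-state collapse discussed after Definition~\ref{def:2-step recurrent}; thus every $\CTMC$ is trivially non $2$-step recurrent, and Theorem~\ref{thm:equivalent strong} applies to give $\sim\,=\,\sim_{\CSL}$. Combining with the chain above closes item~1. For item~2, I would argue analogously that the uniformization $\UNIFORM{\MC{C}}$ of a $\CTMC$ is again a $\CTMC$ (uniformizing a deterministic $\CTMDP$ keeps it deterministic), and in particular $\UNIFORM{\MC{C}}$ is non $2$-step recurrent for the same cardinality reason; hence Theorem~\ref{thm:equivalent weak} applies and gives $\WBS\,=\,\sim_{\CSL_{\backslash\!\X}}$. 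Putting this together with $\WBS\,=\,\WBS_{\CTMC}\,=\,\sim_{\CSLstar_{\backslash\!\X}}$ closes item~2.

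The only point that requires a little care — and which I would flag as the main obstacle, though it is minor — is justifying that ``non $2$-step recurrent'' really is automatic for $\CTMC$s and their uniformizations. One must check that the collapsing of silent states (which is what makes the $\ABS{\SUC(s)}>2$ clause meaningful) is harmless here: a silent $\CTMC$ state and all its successors can be merged into a single self-looping state without affecting $\sim_{\CSL}$, $\sim$, $\WBS$, or the $\CTMC$ bisimulations, so after this normalization every state has exactly one successor and condition~(r1) in Definition~\ref{def:2-step recurrent} can never be triggered because it presupposes $\ABS{\SUC(s)}>2$. Alternatively, since $\sim_{\CSLstar}$ and $\sim$ already coincide on arbitrary $\CTMDP$s (Theorem~\ref{thm:mtl strong}) and $\sim\,\subseteq\,\sim_{\CSL}\,\subseteq\,\sim_{\CSLstar}$ always holds, one gets $\sim\,=\,\sim_{\CSL}$ on \emph{every} $\CTMDP$ for free, bypassing Theorem~\ref{thm:equivalent strong} entirely; the same squeeze argument, using $\WBS\,\subseteq\,\sim_{\CSL_{\backslash\!\X}}\,\subseteq\,\sim_{\CSLstar_{\backslash\!\X}}$ from Theorem~\ref{thm:equivalent weak} and Theorem~\ref{thm:mtl weak}, disposes of item~2. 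I would present the squeeze version as the clean proof and mention the $\CTMC$-triviality observation only as a remark.
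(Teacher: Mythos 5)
The first half of your argument is fine and is exactly what the paper does: $\sim_{\CSLstar}\,=\,\sim$ and $\sim_{\CSLstar_{\backslash\!\X}}\,=\,\WBS$ come from Theorems~\ref{thm:mtl strong} and \ref{thm:mtl weak} applied to $\CTMC$s as a subclass of $\CTMDP$s, and $\sim\,=\,\sim_{\CTMC}$, $\WBS\,=\,\WBS_{\CTMC}$ come from Lemma~\ref{lem:weak bisimulation CTMC}. The gap is in how you attach the $\CSL$ endpoints, and both of your routes fail. The ``squeeze'' you want to present as the clean proof uses the inclusion $\sim_{\CSL}\,\subseteq\,\sim_{\CSLstar}$, which is backwards: $\CSL$ is a fragment of $\CSLstar$, so $\CSLstar$-equivalence is the \emph{finer} relation, i.e.\ $\sim_{\CSLstar}\,\subseteq\,\sim_{\CSL}$. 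With the correct direction the squeeze only yields $\sim\,\subseteq\,\sim_{\CSL}$, which is just soundness; and indeed the conclusion you draw from it --- $\sim\,=\,\sim_{\CSL}$ on \emph{every} $\CTMDP$ --- contradicts Theorem~\ref{thm:martin} together with Example~\ref{ex:counterexample completeness} ($s_0\,\sim_{\CSL}\,r_0$ but $s_0\,\not\sim\,r_0$), so this route cannot be repaired.

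Your fallback route is also flawed: a $\CTMC$ is \emph{not} automatically non 2-step recurrent. You argue $\ABS{\SUC(s)}\le 1$ because each state has a unique transition, but $\SUC(s)$ is the set of successor \emph{states}, i.e.\ the support of the outgoing distribution, not the number of transitions. A $\CTMC$ state whose single transition leads, say, uniformly to three silent absorbing states with pairwise distinct labels is not silent, has $\ABS{\SUC(s)}=3>2$, and satisfies condition~(r1) of Definition~\ref{def:2-step recurrent}, hence is 2-step recurrent; so Theorems~\ref{thm:equivalent strong} and \ref{thm:equivalent weak} cannot be invoked for arbitrary $\CTMC$s (nor for their uniformizations) on cardinality grounds. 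What is actually needed --- and what the paper relies on --- is the classical characterization for $\CTMC$s from \cite{BaierKHW05}: on $\CTMC$s, $\sim_{\CTMC}\,=\,\sim_{\CSL}$ and $\WBS_{\CTMC}\,=\,\sim_{\CSL_{\backslash\!\X}}$. Chaining those two equalities with the part of your argument that is correct closes both items; without importing them (or reproving completeness of $\sim$ with respect to $\CSL$ in the deterministic setting, where the combined-transition obstruction of Example~\ref{ex:counterexample completeness} disappears), the corollary does not follow from what you have assembled.
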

Corollary~\ref{cor:ctmc} shows that $\CSLstar$ gains no more distinguishing power
than $\CSL$ on $\CTMC$s without non-determinism, similarly for their
sub-logics without the next operator.

\section{Conclusion and Future Work}\label{sec:conclusion}
In this paper, we have proposed both strong and weak bisimulations for
$\CTMDP$s, which are shown to be able to fully characterize $\CSL$ and
$\CSL_{\backslash\!\X}$ equivalences respectively, but over non 2-step
recurrent $\CTMDP$s. For a standard extension of $\CSL$ -- $\CSLstar$,
we show that strong and weak
bisimulations are both sound and complete with respect to
$\CSLstar$ and $\CSLstar_{\backslash\!\X}$
respectively for arbitrary $\CTMDP$s. Moreover, we give a simple scheme to determine non
2-step recurrent $\CTMDP$s, and show almost all $\CTMDP$s found in practice
are non 2-step recurrent $\CTMDP$s.  
We note that the work in this paper can
be extended to the simulation setting in a straightforward way.

For future work we would like to consider the approximation of
bisimulations and simulations on $\CTMDP$s as well as their logic
characterization, along \cite{DesharnaisGJP04}. Moreover,
the model checking of $\CSLstar$ against $\CTMC$s and $\CTMDP$s
will be also worthwhile to exploit. Another
interesting direction is to consider the continuous semantics of $\CSLstar$.

\section*{Acknowledgements}
Many thanks to the anonymous referees for their valuable suggestions
on an early version of this paper.  The authors are supported by
IDEA4CPS and the VKR Center of Excellence MT-LAB. The work has
received support from the EU FP7-ICT project MEALS (295261), and the
DFG Sonderforschungsbereich AVACS. Part of the work was done while the
first author was with IT University of Copenhagen, Denmark, and the
second author was with Technical University of Denmark.

\newpage
\appendix

\section{Proofs of Section~\ref{sec:bisimulation}}\label{app:proof}
\subsection{Proof of Theorem~\ref{thm:equivalent strong}}
\begin{proof}
Theorem~\ref{thm:martin}, states the direction $\sim~\subseteq~\sim_{\CSL}$, 
so here we only prove that $\sim_{\CSL}~\subseteq~\sim$. 

Let $\MC{R}=\sim_{\CSL}$ and $s~\MC{R}~r$, where
$\MC{R}$ is obviously an equivalence relation. We need to prove that
$L(s)=L(r)$ and for each $s\TRAN{\lambda}\mu$, there exists
$r\TRANP{\lambda}\mu'$ such that $\mu~\MC{R}~\mu'$.  The proof is
along the same line as the proof of Theorem~\ref{thm:mtl strong}: we only need to consider the $\ABS{\SUC(s)}>2$, as the
formula constructed there contains disjunctions.

Recall that in this theorem $\MC{C}$ is non  2-step recurrent.
Let $s_k\in\SUPP(\mu)$ be a state such that there exists $t\not\in\ESET[]{s}\cup\ESET[]{s_k}$, $s_k\TRAN{\lambda_2}\nu$ and $\nu(t)>0$ for some $\lambda_2$ and $\nu$. 
Since $\MC{C}$ is non  2-step recurrent,  such $s_k$ always exists.
Then the formula for case when $a_k\in(b_k,c_k)$ and
$a_j\in(c_j,b_j)$  is given by:
 \begin{align*}
\psi=(s \lor s_k)\U^{[a,b]}(s_j\lor t)
\end{align*}
We also distinguish the following three sub-cases:
 \begin{enumerate}
 \item[$\lambda_1=\lambda_2$:]
 Let
\begin{align*}
\rho_1&=\rho\cdot(e^{-\lambda_1a}-e^{-\lambda_1b}+a\lambda_1
e^{-\lambda_1a}-b\lambda_1e^{-\lambda_1b})\\
\rho_2&=(e^{-\lambda_1a}-e^{-\lambda_1b})
\end{align*}
then
\begin{itemize}
\item  the
probability of paths starting from $s$ satisfying $\psi$ by choosing transitions
$s\TRAN{\lambda_1}\mu$ and $s_k\TRAN{\lambda_2	`}\nu$ is equal to
$p(s,\mu):=a_j\cdot\rho_2+a_k\cdot\rho_1$, and
\item the probabilities of paths starting from $r$ satisfying $\psi$ by
choosing transitions  $r\TRAN{\lambda_1}\mu'_1$ and
$r\TRAN{\lambda_1}\mu'_2$ and then $s_k\TRAN{\lambda_2}\nu$ are equal to
$p(r,\nu_1):=b_j\cdot\rho_2+b_k\cdot\rho_1$ and $p(r,\nu_2)=c_j\cdot\rho_2+c_k\cdot\rho_1$ respectively.
\end{itemize}

As in Theorem~\ref{thm:mtl strong}, it is sufficient to prove that
$\frac{\rho_1}{\rho_2}\in(0,\infty)$, 
which can be seen as follows:
\begin{itemize}
\item Let $b=\infty$, then $\frac{\rho_1}{\rho_2}=\rho\cdot(a\lambda_1+1)$ and
it is easy to see that there exists $a,b$ such that
$\frac{\rho_1}{\rho_2}\in[\rho,\infty)$.
\item On the other hand let $a=0$,
then $\rho_1=\rho(1-e^{-\lambda_1b}-b\lambda_1e^{-\lambda_1b})$ and $\rho_2=1-e^{-\lambda_1b}$,
so $\frac{\rho_1}{\rho_2}=\rho\cdot(1-\frac{b\lambda_1
e^{-\lambda_1b}}{1-e^{-\lambda_1b}})$, note here that $\frac{b\lambda_1
e^{-\lambda_1b}}{1-e^{-\lambda_1b}}\in(0,1)$ since $\frac{b\lambda_1
e^{-\lambda_1b}}{1-e^{-\lambda_1b}}$ can be arbitrary close to 1 when $b$ is close to 0, 
while $\frac{b\lambda_1 e^{-\lambda_1b}}{1-e^{-\lambda_1b}}$ is arbitrary close to 0 as $b$ increases. 
As a result $\frac{\rho_1}{\rho_2}\in(0,\rho)$.
\end{itemize}
\item[$\lambda_1>\lambda_2$:]Then $\rho_2$ will be the same as in the case when $\lambda_1=\lambda_2$ 
and 
$$\rho_1=\rho(\frac{\lambda_1}{\lambda_1-\lambda_2}
(e^{-\lambda_2a}-e^{\lambda_2b}) - \frac{\lambda_2}{\lambda_1-\lambda_2}(e^{-\lambda_1a}-e^{\lambda_1b})).$$
Therefore
$$\frac{\rho_1}{\rho_2}=\rho(\frac{\lambda_1}{\lambda_1-\lambda_2}(\frac{e^{-\lambda_2a}-e^{-\lambda_2b}}{e^{-\lambda_1a}-e^{-\lambda_1b}})-\frac{\lambda_2}{\lambda_1-\lambda_2}).$$
When $\lambda_1>\lambda_2$, $\frac{e^{-\lambda_2a}-e^{-\lambda_2b}}{e^{-\lambda_1a}-e^{-\lambda_1b}}\in(\frac{\lambda_2}{\lambda_1},\infty)$, thus $\frac{\rho_1}{\rho_2}\in(0,\infty)$.
The remaining arguments are the same as in the case when $\lambda_1=\lambda_2$.
\item[$\lambda_1<\lambda_2$:] This case is similar as the above case and is omitted.

%Then $\rho_2$ will be the same as in the case when $\lambda_1=\lambda_2$ 
%and 
%$$\rho_1=\rho(\frac{\lambda_2}{\lambda_2-\lambda_1}
%(e^{-\lambda_1a}-e^{\lambda_1b}) - \frac{\lambda_1}{\lambda_2-\lambda_1}(e^{-\lambda_2a}-e^{\lambda_2b})).$$
%As in the above clause, we compute the range of 
%$$\frac{\rho_1}{\rho_2}=\rho(\frac{\lambda_2}{\lambda_2-\lambda_1}-\frac{\lambda_1}{\lambda_2-\lambda_1}(\frac{e^{-\lambda_2a}-e^{-\lambda_2b}}{e^{-\lambda_1a}-e^{-\lambda_1b}})).$$
%Note that $\frac{e^{-\lambda_2a}-e^{-\lambda_2b}}{e^{-\lambda_1a}-e^{-\lambda_1b}}\in(0,\frac{\lambda_2}{\lambda_1})$, thus $\frac{\rho_1}{\rho_2}\in(0,\frac{\rho\lambda_2}{\lambda_2-\lambda_1})$. 
\end{enumerate}
Thus there always exists $0\le a\le b$ such that paths starting from $r$ will satisfy
$\psi$ with higher probability than $s$ for some $a,b$, therefore
$s~\not\sim_{\CSL}~r$, which contradict the assumption.
\qed
\end{proof}

\subsection{Proof of Lemma~\ref{lem:sim_properties}}
\begin{proof}
\begin{enumerate}
\item Let $\MC{R}=\sim$ and $s~\MC{R}~r$. 
To show that $\sim$ implies $\WBS$, it is enough to prove that $\MC{R}$
is a weak bisimulation.
Let $\UNIFORM{s}\TRAN{E}\mu$, according to Definition~\ref{def:weak bisimulation}
we need to prove that there exists $\UNIFORM{r}\TRAN{E}\nu$ such that $\mu~\MC{R}~\nu$.
By Definition~\ref{def:uniformization}, $\UNIFORM{s}\TRAN{E}\mu$ iff there exists $s\TRAN{\lambda}\mu'$
such that $\mu=\frac{\lambda}{E}\cdot\mu' + \frac{E-\lambda}{E}\cdot\DIRAC{s}$. Since $s~\sim~r$, there
exists $r\TRANP{\lambda}\nu'$ such that $\mu'~\MC{R}~\nu'$. Note that $r\TRANP{\lambda}\nu'$ implies
$r\TRANP{E}\nu\equiv\frac{\lambda}{E}\cdot\nu' + \frac{E-\lambda}{E}\cdot\DIRAC{r}$, apparently $\mu~\MC{R}~\nu$
as required.
\item The proof of Clause 2 is straightforward from Definition~\ref{def:weak bisimulation}.
\end{enumerate}
\qed
\end{proof}

\subsection{Proof of Lemma~\ref{lem:CSL preserved}}
\begin{proof}
We first prove that if $\MC{C}$ is a $\CTMC$, then 
$s~\sim_{\CSL_{\backslash\!\X}}~r$ in $\MC{C}$ 
iff $\UNIFORM{s}~\sim_{\CSL}~\UNIFORM{r}$ in $\UNIFORM{\MC{C}}$.
Since uniformization preserves the satisfiability of $\CSL_{\backslash\!\X}$, we have $\UNIFORM{s}~\sim_{\CSL_{\backslash\!\X}}~\UNIFORM{r}$.

Let $\MC{R}=\sim_{\CSL_{\backslash\!\X}}$ and $\UNIFORM{s}~\MC{R}~\UNIFORM{r}$.
According to~\cite{BaierKHW05}, $\CSL$ equivalence coincides with strong bisimulation 
on $\CTMC$s, therefore it suffices to prove that $\MC{R}$ is a strong bisimulation.
Let $\lambda$ denote the exit rate of $\UNIFORM{s}$ and $\UNIFORM{r}$,
and $\lambda_{\UNIFORM{s}}$ denote the rate from $\UNIFORM{s}$ to
states in $\ESET{\UNIFORM{s}}$ i.e. $\lambda_{\UNIFORM{s}}=\lambda\cdot\mu(\ESET{\UNIFORM{s}})$
where $\UNIFORM{s}\TRAN{\lambda}\mu$. We need to prove that there exists $\UNIFORM{r}\TRAN{\lambda}\nu$
such that $\mu~R~\nu$.

The case when $\lambda_{\UNIFORM{s}}=\lambda$ is trivial,
we assume that $\lambda>\lambda_{\UNIFORM{s}}$.

In the following proof, we let $\phi_C$ be a formula such that $\SAT(\phi_C)=C$ where $C$ is
a $\MC{R}$ closed set.
Now we are going to prove that $\lambda_{\UNIFORM{s}}=\lambda_{\UNIFORM{r}}$ i.e. the rates for $s$ and $r$
leaving to states in equivalence classes different from $\ESET{\UNIFORM{s}}$
are equal. Let $C=\UNIFORM{\MC{S}}\setminus\ESET{\UNIFORM{s}}$,
then $s\models\MC{P}_{\geq p}(\phi_{\ESET{\UNIFORM{s}}}\U^{[a,b]}\phi_C)$
where $p=e^{-\lambda'a}-e^{-\lambda'b}$ and $\lambda'=\lambda-\lambda_{\UNIFORM{s}}$.
Since $\UNIFORM{s}~\sim_{\CSL_{\backslash\!\X}}~\UNIFORM{r}$, we have
$r\models\MC{P}_{\geq p}(\phi_{\ESET{\UNIFORM{s}}}\U^{[a,b]}\phi_C)$ for any $0\leq a<b$.
Therefore $\lambda-\lambda_{\UNIFORM{s}}=\lambda-\lambda_{\UNIFORM{r}}$ which implies
$\lambda_{\UNIFORM{s}}=\lambda_{\UNIFORM{r}}$.

Let $C\in \UNIFORM{S}/\MC{R}$ be an equivalence relation such that $\UNIFORM{s}\notin C$,
we know that $\UNIFORM{s}\models\phi:=\MC{P}_{\geq p}(\phi_{\ESET{\UNIFORM{s}}}\U^{[a,b]}\phi_C)$
where
$$p=\frac{\lambda\cdot\mu(C)}{\lambda-\lambda_{\UNIFORM{s}}}\cdot(e^{-\lambda_C\cdot a} - e^{-\lambda_C\cdot b}).$$
Since $\UNIFORM{s}~\sim_{\CSL_{\backslash\!\X}}~\UNIFORM{r}$,
we have $\UNIFORM{r}\models\phi$. We show that it must be the case that
$\mu(C)=\nu(C)$. We prove by contradiction and distinguish
the following cases:
\begin{enumerate}
\item $\mu(C)<\nu(C)$. Let $a=0$ and $b=\infty$, then
$p=\frac{\lambda\cdot\mu(C)}{\lambda-\lambda_{\UNIFORM{s}}}.$
The probability of the paths starting from $r$ satisfying $(\phi_{\ESET{\UNIFORM{s}}}\U^{[a,b]}\phi_C)$ is
$\frac{\lambda\cdot\nu(C)}{\lambda-\lambda_{\UNIFORM{r}}}$ which
is apparently greater than $p$, given that we have proved that $\lambda_{\UNIFORM{s}}=\lambda_{\UNIFORM{r}}$.
Therefore $r\models\MC{P}_{\geq p'}(\phi_{\ESET{\UNIFORM{s}}}\U^{[a,b]}\phi_C)$,
but $s\not\models\MC{P}_{\geq p'}(\phi_{\ESET{\UNIFORM{s}}}\U^{[a,b]}\phi_C)$ where $p'=\frac{\lambda\cdot\nu(C)}{\lambda-\lambda_{\UNIFORM{r}}}$,
 this contradicts with our assumption.
\item $\mu(C)>\nu(C)$. This case is similar as the first case by letting $a=0$ and $b=\infty$, thus is omitted here.
\end{enumerate}
Consequently, we have that $\mu(C)=\nu(C)$ for each $C\in \UNIFORM{S}/\MC{R}$ except for $\ESET{\UNIFORM{s}}$,
moreover $\lambda_{\UNIFORM{s}}=\lambda_{\UNIFORM{r}}$, hence $\mu~\MC{R}~\nu$ and $\MC{R}$ is a strong bisimulation.
According to \cite{BaierKHW05} where it is was shown that $\sim$ is both sound and complete for $\sim_{\CSL}$ on $\CTMC$,
thus $\UNIFORM{s}~\sim_{\CSL}~\UNIFORM{r}$.

We now generalize the result to $\CTMDP$s. If $s~\sim_{\CSL_{\backslash\!\X}}~r$, then $\UNIFORM{s}~\sim_{\CSL_{\backslash\!\X}}~\UNIFORM{r}$. Since in a uniformized $\CTMDP$, every execution of $\MC{C}$ guided by a given scheduler can be seen as a $\CTMC$, thus $\UNIFORM{s}~\sim_{\CSL}~\UNIFORM{r}$ based on the above result.
\qed
\end{proof}

\subsection{Proof of Theorem~\ref{thm:equivalent weak}}
\begin{proof}
Since in Theorem~\ref{thm:equivalent strong}, we have shown that
$\sim~=~\sim_{\CSL}$ provided that $\MC{C}$ is non 2-step recurrent.
The proof is straightforward since:

$
(s~\WBS~r)~ \stackrel{Def.~\ref{def:weak bisimulation}}{\Longleftrightarrow}~
(\UNIFORM{s}~\sim~\UNIFORM{r})~ \stackrel{Thm.~\ref{thm:equivalent strong}}{\Longleftrightarrow}~
(\UNIFORM{s}~\sim_{\CSL}~\UNIFORM{r})~ \stackrel{Lem.~\ref{lem:CSL preserved}}{\Longleftrightarrow}~
(s~\sim_{\CSL_{\backslash\!\X}}~r).
$
\qed
\end{proof}

\subsection{Proof of Lemma~\ref{lem:2-step recurrent}}
\begin{proof}
Straightforward from Definition~\ref{def:2-step recurrent}.
The first two cases are simple since they do not depend on the given relation. 
We only need to check the third condition.
Since $\MC{R}~\subseteq~\MC{R}'$ implies $\ESET{s}~\subseteq~\ESET[\MC{R}']{s}$
for any $s$. Therefore if there exists
$s\TRAN{\lambda}\mu$ such that for all $s'\in\SUPP(\mu)$ and
$s'\TRAN{\lambda'}\nu$, we always have $\nu(C)=1$ where $C=\ESET{s}\cup\ESET{s'}$,
it must be the case that $\nu(C')=1$ where $C'=\ESET[\MC{R}']{s}\cup\ESET[\MC{R}']{s'}$,
since $C~\subseteq~C'$.
\qed
\end{proof}

\subsection{Proof of Lemma~\ref{lem:NP result}}
\begin{proof}
Our proof is inspired by  the reduction used in the long version of \cite{Desharnais2011computing}.  We sketch the proof here.

Consider the subset sum problem which is known to be NP-hard~\cite{Cormen2001IA}:
Given a set of $n$ integers $\{k_1,\ldots,k_n\}$, is there a non-empty subset whose sum is equal to 0. 
Note any subset sum problem can be reduced to the following problem by dividing each $k_i$ by 
$\frac{1}{4n}\cdot\max\{\ABS{k_i}\}$ where $1\le i\le n$:
Given $n$ decimal numbers $w_1,\ldots,w_n$ such that
$w_i\in[-\frac{1}{4n},\frac{1}{4n}]$ for each $i\in[1,n]$, can we find a set $I\subseteq[1,n]$ such that
$\sum_{i\in I}w_i=0$. We show that this problem can also be transformed to a problem of deciding 
the negation of $\sim_{\CSL}$ by constructing a $\CTMDP$ as follows:
Suppose we have states $s_0$, $s'_0$, $r$, and $\{s_i\}_{1\leq i\leq n}$, all of which have distinct
atomic propositions except $L(s_0)=L(s'_0)$, and moreover they only have a self loop transition with 
rate 1 except: 
$s_0\TRAN{1}\mu$, $s'_0\TRAN{1}\nu_1$, and $s'_0\TRAN{1}\nu_2$, where for each $1\leq i\leq n$
\begin{itemize}
\item $\mu(s_i)=|w_i| + \epsilon$ with $\epsilon=10^{-2n}$,
\item $\nu_1(s_i) = w_i+|w_i|$,
\item $\nu_2(s_i) = -w_i + |w_i|$.
\end{itemize}
Moreover let $\mu(r)=1-\sum_{1\leq i\leq n}(|w_i|+\epsilon)$,
$\nu_1(r)=1-\sum_{1\leq i\leq n}(w_i+|w_i|)$, and $\nu_2(r)=1-\sum_{1\leq i\leq n}(-w_i + |w_i|)$.
Clearly $\mu$, $\nu_1$, and $\nu_2$ are full distributions. 
In order to check whether $s_0~\sim_{\CSL}~s'_0$, the only non-trivial cases are formulas like
$\MC{P}_{\geq p}(\psi)$, where $\psi=\top\U^{[a,b]}(\lor_{s\in C}s)$ for some 
$C\subseteq\{s_i\}_{1\leq i\leq n}\cup\{r\}$. Since the probabilities of paths starting from $s_0$ and $s'_0$
satisfying $\psi$
by choosing transitions to $\mu$, $\nu_1$, and $\nu_2$ are equal to:
$\mu(E)\cdot(e^{-a}-e^{-b})$, $\nu_1(E)\cdot(e^{-a}-e^{-b})$, and $\nu_2(E)\cdot(e^{-a}-e^{-b})$
respectively, $s_0~\not\sim_{\CSL}~s'_0$ iff they exists $E$ such that $\mu(E)>\nu_1(E)$ and 
$\mu(E) > \nu_2(E)$. We distinguish the following two cases:
\begin{enumerate}
\item $r\not\in E$ i.e. there exists $I\subseteq[1,n]$ such that $E=\{s_i\mid i\in I\}$.\\
In this case we will have 
$$\sum_{i\in I}\mu(s_i) > \sum_{i\in I}\nu_1(s_i),\ \sum_{i\in I}\mu(s_i) > \sum_{i\in I}\nu_2(s_i),$$ 
which implies 
$$\sum_{i\in I}(\epsilon+|w_i|)>\sum_{i\in I}(w_i+|w_i|),\
\sum_{i\in I}(\epsilon+|w_i|)>\sum_{i\in I}(-w_i+|w_i|),$$ 
which implies
$$\sum_{i\in I}w_i < \epsilon\cdot|E|,\ 
-\sum_{i\in I}w_i < \epsilon\cdot |E|.$$
Since $\epsilon\cdot |E| < 10^{-2n}\cdot n < \frac{1}{4n}$, the only possibility 
for both $\sum_{i\in I}w_i < \epsilon\cdot|E|$ and $-\sum_{i\in I}w_i < \epsilon\cdot |E|$ hold
is that $\sum_{i\in I}w_i = 0$. 
\item $r\in E$ i.e. there exists $I\subseteq[1,n]$ such that $E=\{s_i\mid i\in I\}\cup\{r\}$.\\
In this case we will have 
$$\mu(r) + \sum_{i\in I}\mu(s_i) >\nu_1(r) + \sum_{i\in I}\nu_1(s_i),$$ 
$$\mu(r) + \sum_{i\in I}\mu(s_i) > \nu_2(r) + \sum_{i\in I}\nu_2(s_i),$$
which implies
$$1-\sum_{1\leq i\leq n}(\epsilon+|w_i|) + \sum_{i\in I}(\epsilon+|w_i|)>1-\sum_{1\leq i\leq n}(w_i+|w_i|)+\sum_{i\in I}(w_i+|w_i|),$$
$$1-\sum_{1\leq i\leq n}(\epsilon+|w_i|) + \sum_{i\in I}(\epsilon+|w_i|)>1-\sum_{1\leq i\leq n}(-w_i + |w_i|) + \sum_{i\in I}(-w_i+|w_i|),$$
which implies
$$-\epsilon\cdot|\bar{I}| > \sum_{i\in \bar{I}} w_i,$$
$$-\epsilon\cdot|\bar{I}| > -\sum_{i\in\bar{I}}w_i,$$
where $\bar{I}=[1,n]\setminus I$, which holds iff
$\bar{I}=\emptyset$, but this contradicts that $\mu(E)=\nu_1(E)=\nu_2(E)=1$.
\end{enumerate}
In conclusion, $s_0~\not\sim_{\CSL}~s'_0$ iff there exist $I\subseteq[1,n]$ such that
$\sum_{i\in I}w_i=0$. Since the reduction is polynomial, we can say that it is NP-hard to decide 
$\not\sim_{\CSL}$, which implies that the decision of $\sim_{\CSL}$
is also NP-hard.  

The above proof can also be applied to prove that 
deciding $\sim{\CSL_{\backslash\!\X}}$ is NP-hard.
\qed
\end{proof}

\section{Proofs of Section~\ref{sec:mtl}}

\subsection{Proof of Theorem~\ref{thm:mtl strong}}
The proof of Theorem~\ref{thm:mtl strong} is divided into the following lemmas:
\begin{lemma}\label{lem:mtl soundness}
  $s~\sim~r$ implies $s~\sim_{\CSLstar}~r$ for any $s$ and $r$ i.e. $\sim~\subseteq~\sim_{\CSLstar}$. 
\end{lemma}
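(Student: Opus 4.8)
The plan is to prove the inclusion $\sim~\subseteq~\sim_{\CSLstar}$ by fixing a strong bisimulation $\MC{R}$ with $s~\MC{R}~r$ and arguing by simultaneous induction on the structure of $\CSLstar$ formulas that: (i) for every state formula $\phi$, the set $\SAT(\phi)$ is a union of $\MC{R}$-classes; and (ii) for every path formula $\psi$, whether a path belongs to $\SAT(\psi)$ depends only on the sequence of $\MC{R}$-classes of its states and on its sojourn times -- call such a set $\MC{R}$-\emph{invariant}. Statement (i) for $a$, $\neg$, $\land$ is immediate (using $L(s)=L(r)$ for $\MC{R}$-related states in the atomic case), and statement (ii) for $\phi$, $\neg\psi$, $\psi_1\land\psi_2$, $\X^I\psi$ and $\psi_1\U^I\psi_2$ follows directly from the pointwise semantics once the state subformulas occurring in $\psi$ are known $\MC{R}$-invariant by the induction hypothesis. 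Applying (i) to $s~\MC{R}~r$ then yields $s\models\phi\iff r\models\phi$ for all $\phi$, which is the claim. The only case with real content is $\phi=\MC{P}_{\bowtie p}(\psi)$ in (i).

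For that case, $\SAT(\psi)$ is measurable by Lemma~\ref{lem:mtl cylinder} and is $\MC{R}$-invariant by (ii), so it suffices to prove that $\sup_{\Sch}\PM_{\Sch,s}(A)=\sup_{\Sch}\PM_{\Sch,r}(A)$ and likewise with $\inf$, for every $\MC{R}$-invariant measurable $A$; then $s\models\MC{P}_{\bowtie p}(\psi)\iff r\models\MC{P}_{\bowtie p}(\psi)$ for every $\bowtie$ and $p$ follows by taking $A=\SAT(\psi)$. I would establish the $\sup$ identity (the $\inf$ one and the roles of $s,r$ being symmetric) by showing that from any measurable TTP scheduler $\Sch$ of $\MC{C}$ one can build a measurable TTP scheduler $\Sch'$ with $\PM_{\Sch',r}(A)=\PM_{\Sch,s}(A)$ for all $\MC{R}$-invariant $A$. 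The construction is the natural stepwise mimicking: pair $r$ with $s$; at a reachable state $r'$ paired with an $s$-state $s'$ (with $s'~\MC{R}~r'$), look at the distribution $\Sch$ chooses at $s'$, group its transitions by rate to obtain for each rate $\lambda$ a combined transition $s'\TRANP{\lambda}\bar\mu_\lambda$; by Definition~\ref{def:strong bisimulation}, together with the fact that $\MC{R}$-equivalence of distributions is preserved under convex combinations, there is a matching combined transition $r'\TRANP{\lambda}\bar\nu_\lambda$ with $\bar\mu_\lambda~\MC{R}~\bar\nu_\lambda$; let $\Sch'$ at $r'$ choose rate $\lambda$ with the probability $\Sch$ uses at $s'$ and resolve it via the decomposition of $\bar\nu_\lambda$; and extend the pairing along a coupling of $\bar\mu_\lambda$ with $\bar\nu_\lambda$ supported on $\MC{R}$. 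One then checks, by induction on cylinder length and directly from the recursive definition~\eqref{eq:measure} of the measure -- crucially, the sojourn-time density $\lambda e^{-\lambda t}$ depends only on the chosen rate, which is matched -- that $\PM_{\Sch',r}$ and $\PM_{\Sch,s}$ agree on every $\MC{R}$-invariant cylinder set, hence by a monotone-class argument on every $\MC{R}$-invariant measurable set.

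The step I expect to be the main obstacle is ensuring that $\Sch'$ is a genuine \emph{total time positional} scheduler: its choice at a state $r'$ must depend only on $r'$ and the elapsed time, whereas $r'$ may be reached along several $r$-histories that the construction pairs with different -- though pairwise $\MC{R}$-equivalent -- states $s'$ at which $\Sch$ behaves differently. The fix I would use is to define the choice of $\Sch'$ at $(r',t)$ not from one paired $s$-state but from the mixture, over the $s$-states paired with $r'$ when $r'$ is reached at elapsed time $t$, of the combined transitions that $\Sch$ induces there, weighted by the corresponding (regular) conditional probabilities of the measure $\PM_{\Sch',r}$ being built; such a mixture is again an $\MC{R}$-matching combined transition of $r'$, and the delicate point is to verify that the resulting mutually recursive equations for $\Sch'$ and $\PM_{\Sch',r}$ have a consistent measurable solution, which they do because both are governed by the same step-indexed recursion~\eqref{eq:measure}. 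Equivalently, one can package the whole argument through the quotient CTMDP $\MC{C}/\MC{R}$, whose states are the $\MC{R}$-classes so that positionality is automatic on one side. The remaining points -- measurability of $\Sch'$, inherited from measurability of $\Sch$ and Lemma~\ref{lem:mtl cylinder}, and the symmetric $\inf$ and $s\leftrightarrow r$ arguments -- are routine, completing the proof that $\sim~\subseteq~\sim_{\CSLstar}$.
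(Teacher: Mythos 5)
Your proposal is correct and follows essentially the same route as the paper's proof: structural induction on formulas, reduction of the $\MC{P}_{\bowtie p}(\psi)$ case to $\sim$-closed cylinder sets via Lemma~\ref{lem:mtl cylinder}, and a stepwise scheduler-mimicking construction verified by induction on cylinder length through the recursion~(\ref{eq:measure}). The only difference is that you make explicit, and repair via mixing over paired states (or the quotient construction), the point that the mimicking scheduler must itself be a measurable TTP scheduler, an issue the paper's proof passes over with the informal ``let $\Sch_2$ mimic exactly what $\Sch_1$ does''.
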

\begin{proof}
    We shall show that $s~\sim~r$ implies $s~\sim_{\CSLstar}~r$ for any $s$ and $r$, that is,
    $s~\sim~r$ and $s\models\phi$ implies that $r\models\phi$ for any $\phi$.     
    Given two cylinders $C_1$ and $C_2$, 
    we say that $C_1$ and $C_2$ are strong bisimilar, written as
    $C_1~\sim~C_2$, iff $\Len{C_1}=\Len{C_2}$, $\iPath{C_1}{i}~\sim~\iPath{C_2}{i}$ 
    for each $0\le i\le\Len{C_1}$, and $\Time(C_1,i)=\Time(C_2,i)$ for each $0\le i<\Len{C_1}$.
    Similarly, we can define strong bisimulation of paths.

    As usual we prove the following two things simultaneously:
    \begin{enumerate}
	\item \label{itm:first}$s\models\phi$ iff $r\models\phi$ for any $\phi$, provided that $s~\sim~r$;
	\item \label{itm:second}$\omega_1\models\psi$ iff $\omega_2\models\psi$ for any $\psi$, provided that
	$\omega_1~\sim~\omega_2$.
    \end{enumerate}
    We only show the proof for case when $\phi=\MC{P}_{\ge q}(\psi)$ and 
    $\psi=\psi_1\U^I\psi_2$, since all the other cases are either trivial or similar.
    Suppose that $s\models\phi$ i.e. for all schedulers $\Sch$,
    $\PM_{\Sch,s}(\SAT(\psi))\ge q$, we shall prove that $\PM_{\Sch,r}(\SAT(\psi))\ge q$ for
    any scheduler $\Sch$ of $r$.
    According to Lemma~\ref{lem:mtl cylinder}, the set of paths starting from $s$
    and satisfying $\psi$ can be represented by a set of cylinders $\mathit{Cyls}$. 
    By induction hypothesis, $\SAT(\psi)$ is $\sim$ closed, thus for any $C\in\mathit{Cyls}$,
    $\ESET[\sim]{C}\subseteq\SAT(\psi)$. 
    Since for any $C_1,C_2\in\mathit{Cyls}$ such that $C_1\cap C_2\neq\emptyset$,
    there exists a set of disjoint cylinders $\{C'_i\}$ such that $\cup\{C'_i\}=C_1\cup C_2$,
    so any $\mathit{Cyls}$ can be transformed to an equivalent set of disjoint cylinders.
    In the sequel we assume that $\mathit{Cyls}$ contains only disjoint cylinders,
    therefore 
    $$\PM_{\Sch,s}(\{\omega\in\Path{\infty}\mid\omega\models\psi\})=\sum_{C\in\mathit{Cyls}}\PM_{\Sch,s}(C),$$
    for any scheduler $\Sch$.
    As a result, it suffices to prove that for each scheduler $\Sch_1$ of $s$, 
    there exists a scheduler $\Sch_2$ of $r$ such that
    $\PM_{\Sch_1,s}(\ESET[\sim]{C})=\PM_{\Sch_2,r}(\ESET[\sim]{C})$ for each $C\in\mathit{Cyls}$.
    Let $C=C(s_0,I_0,\ldots,I_{n-1},s_n)$ where $s_0=s$, we prove by induction on $n$. 
    The base case when $n=0$ is trivial. Assume that $n>0$, then
    according to Eq.~(\ref{eq:measure}), $\PM_{\Sch_1,s}(\ESET[\sim]{C})=\PM_{\Sch_1,s}(\ESET[\sim]{C},0)=$
    $$
    \int\limits_{t\in I_0}\sum_{(\lambda,\mu)\in\MI{tr}}\Sch(s,0)(\lambda,\mu)\cdot\sum_{s'\in\ESET[\sim]{s_1}}\mu(s')
    \cdot\lambda e^{-\lambda t}\cdot\PM_{\Sch,s'}(\ESET[\sim]{C'},t)dt,
    $$
    where $\MI{tr}=\NEXT(s)$ and $C'=C(s_1,I_1,\ldots,s_n)$.
    Since $s~\sim~r$, for each $(\lambda,\mu)\in\MI{tr}$ there exists $r\TRANP{\lambda}\nu$
    such that $\mu~\sim~\nu$. Let $\Sch_2$ mimic exactly what $\Sch_1$ does when at state $r$. 
    Moreover $\PM_{\Sch_1,s'}(\ESET[\sim]{C'},t)=\PM_{\Sch_2,r'}(\ESET[\sim]{C'},t)$
    for each $C'\in\mathit{Cyls}$ such that $\Len{C'}<n$,
    provided $s'~\sim~r'$. By induction hypothesis, such $\Sch_2$ always exists, and
    $\PM_{\Sch_1,s}(\ESET[\sim]{C})=\PM_{\Sch_2,r}(\ESET[\sim]{C})$ for each $C$. Consequently, we have
    $r\models\phi$. \qed
\end{proof}

\begin{lemma}\label{lem:mtl completeness}
$s~\sim_{\CSLstar}~r$ implies $s~\sim~r$ for any $s$ and $r$ i.e. $\sim_{\CSLstar}~\subseteq~\sim$.
\end{lemma}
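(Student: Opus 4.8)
The plan is to prove that $\MC{R} := \{(s,r) \mid s \sim_{\CSLstar} r\}$, which is clearly an equivalence, is a strong bisimulation; combined with Lemma~\ref{lem:mtl soundness} and the maximality of $\sim$ this gives $\sim_{\CSLstar} = \sim$. That $s~\MC{R}~r$ implies $L(s)=L(r)$ is immediate since atomic propositions are $\CSLstar$ state formulas. Since $S$ is finite and any two $\MC{R}$-inequivalent states are told apart by some $\CSLstar$ state formula, for every class $C \in S/\MC{R}$ we can fix a $\CSLstar$ state formula $\phi_C$ with $\SAT(\phi_C)=C$ (take, over all $u \notin C$, the conjunction of formulas separating $C$ from $u$). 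It then remains to show: for every $s\TRAN{\lambda}\mu$ there is $r\TRANP{\lambda}\mu'$ with $\mu~\MC{R}~\mu'$. Suppose not; passing to the quotient, write $\hat\mu \in \DIST(S/\MC{R})$ for the image of $\mu$ and $\mathcal{Q}$ for the convex hull of $\{\hat\nu \mid r\TRAN{\lambda}\nu\}$. If $\mathcal{Q}=\emptyset$ (no $\lambda$-transition of $r$) the formula $\X^{[a,b]}\mathit{true}$, with $[a,b]$ chosen so that $\gamma \mapsto e^{-\gamma a}-e^{-\gamma b}$ is maximised at $\gamma=\lambda$ (this function of $\gamma$ has a unique maximiser over $(0,\infty)$, namely $\frac{\ln(b/a)}{b-a}$), already separates $s$ from $r$; so assume $\hat\mu \notin \mathcal{Q} \neq \emptyset$ and invoke the separating hyperplane theorem to get weights $(\rho_C)_{C}$ with $\sum_C\rho_C\hat\mu(C) > \sum_C\rho_C\hat\nu(C)$ for all $\hat\nu \in \mathcal{Q}$. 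Since all these are probability vectors, adding a common constant to the $\rho_C$ keeps the inequality, so assume every $\rho_C > 0$.

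The distinguishing formula will be a disjunction of next-operators with tailored time windows,
\[
\psi := \bigvee_{C\in S/\MC{R}} \X^{[a_C,b_C]}\phi_C,
\]
where each $(a_C,b_C)$ is chosen so that (i) $e^{-\lambda a_C}-e^{-\lambda b_C} = \rho_C$ and (ii) $\gamma \mapsto e^{-\gamma a_C}-e^{-\gamma b_C}$ is maximised exactly at $\gamma=\lambda$; both can be met at once by sliding $(a_C,b_C)$ along the "peaked-at-$\lambda$" curve $\{\ln(b/a)=\lambda(b-a)\}$, along which the value $e^{-\lambda a}-e^{-\lambda b}$ takes arbitrarily small positive numbers. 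Because distinct $\phi_C$ cannot both hold at the second state of a path, using Lemma~\ref{lem:mtl cylinder} (measurability) and Eq.~(\ref{eq:measure}) one computes that for any transition $t\TRAN{\gamma}\rho'$ the probability of satisfying $\psi$ on a path starting with it equals $\sum_C\hat{\rho'}(C)\,g_C(\gamma)$, where $g_C(\gamma):=e^{-\gamma a_C}-e^{-\gamma b_C}$, and that the maximal probability $p^{\max}_t(\psi)$ over all schedulers is the maximum of this quantity over the finitely many transitions out of $t$. Thus $p^{\max}_s(\psi) \geq \sum_C\hat\mu(C)\rho_C =: p^*$ (take $s\TRAN{\lambda}\mu$), whereas every $\lambda$-transition $r\TRAN{\lambda}\rho'$ of $r$ yields $\sum_C\hat{\rho'}(C)\rho_C < p^*$ by the separating inequality.

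The remaining — and only genuinely delicate — point is to bound the contribution of transitions $r\TRAN{\gamma}\rho'$ with $\gamma\neq\lambda$: condition (ii) gives $g_C(\gamma) < \rho_C$ for each $C$, but that only yields the bound $\max_C\rho_C$, which need not lie below $p^*$. The fix is that as a window along the curve shrinks to the point $\{1/\lambda\}$ the ratio $g_C(\gamma)/\rho_C$ tends to $(\gamma/\lambda)e^{1-\gamma/\lambda}$, which is strictly below $1$ for $\gamma\neq\lambda$ and bounded away from $1$ uniformly over the finite set of exit rates of $r$. So one first replaces $(\rho_C)$ by $(\rho_C + t)$ for a large constant $t$ — still a valid separator — which forces $p^*$ to exceed $\kappa_0\cdot\max_C\rho_C$, where $\kappa_0<1$ is that uniform bound, and then rescales all weights by a small $\theta>0$, which makes every window narrow enough that $g_C(\gamma) \leq \kappa_1\rho_C$ for a fixed $\kappa_1\in(\kappa_0,1)$, for all $C$ and all $\gamma\neq\lambda$; together these give $\sum_C\hat{\rho'}(C)g_C(\gamma) \leq \kappa_1\max_C\rho_C < p^*$. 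Hence $p^{\max}_r(\psi) < p^* \leq p^{\max}_s(\psi)$, so $r \models \MC{P}_{\leq q}(\psi)$ while $s \not\models \MC{P}_{\leq q}(\psi)$ for $q := p^{\max}_r(\psi)$, contradicting $s\sim_{\CSLstar}r$ (if the separating inequality came with the opposite sign one first negates $(\rho_C)$, which is harmless once the additive shift has restored positivity). I expect this balancing of the additive shift against the rescaling — arranging that rate-$\lambda$ behaviour strictly dominates every other exit rate on all windows simultaneously — to be the crux; everything else is routine manipulation of the path measure.
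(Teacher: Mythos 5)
Your proposal is correct in substance and reaches the statement by a genuinely different route than the paper. The paper first isolates the sub-logic $\CSL^{\lor}$ (disjunctions of next-formulas), proves rate-matching as a separate claim via the same ``peaked at $\lambda$'' analysis of $\gamma\mapsto e^{-\gamma a}-e^{-\gamma b}$ that you use for the $\mathcal{Q}=\emptyset$ case, and then, in the main case $\ABS{\SUC(s)}>2$, argues pairwise: it fixes two (combined) rate-$\lambda$ transitions $\mu'_1,\mu'_2$ of $r$, picks two coordinates $k,j$ on which no convex combination can match $\mu$, and builds a two-disjunct formula $(\X^{[a,b]}s_j)\lor(\X^{[a',b']}s_k)$, exploiting that the ratio $\rho_1/\rho_2$ of the two window probabilities sweeps $(0,\infty)$. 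You instead separate $\hat\mu$ from the whole convex hull of the rate-$\lambda$ successors of $r$ by one hyperplane, realize the (shifted, rescaled) coefficients as window probabilities $e^{-\lambda a_C}-e^{-\lambda b_C}$ on the peaked curve, and use a single disjunction over all classes. What your route buys is twofold: it avoids the paper's reduction to pairs of transitions (which the paper justifies only briefly), and it explicitly dominates the contribution of $r$'s transitions with exit rates $\gamma\neq\lambda$ inside the same formula, via the uniform bound $(\gamma/\lambda)e^{1-\gamma/\lambda}<1$ on narrow windows --- a point the paper's main-case computation does not address (there, probabilities for $r$ are computed only along its rate-$\lambda$ transitions, other rates having been dismissed via the preliminary claim). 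The price is the extra bookkeeping you identify as the crux.

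One small repair is needed in that bookkeeping: as written, the additive shift only guarantees $p^*>\kappa_0\max_C\rho_C$, while the narrow-window bound gives $g_C(\gamma)\le\kappa_1\rho_C$ with $\kappa_1\in(\kappa_0,1)$, so the final chain $\sum_C\hat{\rho'}(C)g_C(\gamma)\le\kappa_1\max_C\rho_C<p^*$ does not follow from what you stated. The fix is just to reorder the choices: fix $\kappa_1\in(\kappa_0,1)$ first, and take the shift $t$ large enough that $p^*/\max_C\rho_C>\kappa_1$ (possible since this ratio tends to $1$ as $t\to\infty$), then choose $\theta$ small enough that all windows satisfy $g_C(\gamma)\le\kappa_1\,g_C(\lambda)$ for every $C$ and every exit rate $\gamma\neq\lambda$ of $r$. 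With that reordering the argument closes; everything else (the characteristic formulas $\phi_C$, the disjointness of the disjuncts, the identification of the optimal scheduler value with the maximum over single transitions, and the $\mathcal{Q}=\emptyset$ case) is sound.
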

\begin{proof}
First we define a sub-logic of $\CSLstar$, called $\CSL^{\lor}$, whose
state formulas are the same as $\CSLstar$, while its path formulas are
defined by the following BNFs:
$$
\psi ::= \X^I\phi \mid \psi\lor\psi,
$$
that is, the only path formula of $\CSL^{\lor}$ is the disjunction of several 
next operators. 

Secondly, we prove that $\sim_{\CSL^{\lor}}~\subseteq~\sim$. 
Let $\MC{R}=\{(s,r)\mid s~\sim_{\CSL^{\lor}}~r\}$ and $s~\MC{R}~r$, where
$\MC{R}$ is obviously an equivalence relation. The proof of
$L(s)=L(r)$ is trivial and omitted here. It suffices now to prove that
for each $s\TRAN{\lambda}\mu$, there exists
$r\TRANP{\lambda}\mu'$ such that $\mu~\MC{R}~\mu'$. 

\emph{Claim.}
Fix a $s\TRAN{\lambda}\mu$,  there exists $r\TRAN{\lambda}\mu'$
such that $\mu(C)=\mu'(C)=1$ for some $\MC{R}$-closed set $C$.

To prove the claim we let
$\{\lambda_i\mid r\TRAN{\lambda_i}\mu'_i\land\mu'_i(C)=1\}_{1\leq i\leq n}$. 
We proceed by contradiction and assume that there does not exist $i$ such that $\lambda_i=\lambda$. Without
loss of generality, we assume that $n=2$. There are three cases we should consider here:
\begin{enumerate}
\item $\lambda_1 < \lambda_2 < \lambda$. Let $\phi_C$ be a formula such that $\SAT(\phi_C)=C$, 
since $C$ is $\MC{R}$ closed, $\phi_C$ always exists. Let $\psi = \X^{[0,b]}\phi_C$, then the
maximal probability of paths starting from $s$ satisfying $\psi$ is equal to $1-e^{-\lambda_2\cdot b}$,
while the probability for $r$ is $1-e^{-\lambda\cdot b}$ which is obviously less than 
$1-e^{-\lambda_2\cdot b}$. Therefore there exists $p=1-e^{-\lambda_2\cdot b}$, such that
$s\models\MC{P}_{\leq p}(\psi)$, but $r\not\models\MC{P}_{\leq p}(\psi)$, which contradicts
the assumption that $s~\sim_{\CSL^{\lor}}~r$. 
\item $\lambda < \lambda_1 < \lambda_2$. This case is similar with the above case and omitted here. 
\item $\lambda_1 < \lambda <\lambda_2$. 
%Similarly, we can show that the probabilities 
%of paths starting from $s$ and $r$ satisfying $\psi=\X^{[a,b]}\phi_C$ by choosing transitions with rates 
%$\lambda_1$, $\lambda_2$, and $\lambda$ are equal to $e^{-\lambda_1\cdot a} - e^{-\lambda_1\cdot b}$,
%$e^{-\lambda_2\cdot a} - e^{-\lambda_2\cdot b}$, and $e^{-\lambda\cdot a} - e^{-\lambda\cdot b}$ 
%respectively. 
Let $f(x)=e^{-ax}-e^{-bx}$, then $df/dx = b\cdot e^{-bx} - a\cdot e^{-ax}$. 
We solve the inequation $df/dx > 0$, and get $x < \ln(b/a)/(b-a)$, which means that
if $x_1 < x_2 \leq \ln(b/a)/(b-a)$ or $x_1 > x_2 \geq \ln(b/a)/(b-a)$, we have 
$$e^{-ax_2}-e^{-bx_2} > e^{-ax_1}-e^{-bx_1}.$$
Let $a,b$ be two real numbers such that $\lambda=\frac{ln(b/a)}{b-a}$, thus it holds that $$e^{-\lambda\cdot a} - e^{-\lambda\cdot b} > \max\{e^{-\lambda_1\cdot a} - e^{-\lambda_1\cdot b}, e^{-\lambda_2\cdot a} - e^{-\lambda_2\cdot b}\}.$$
Therefore there also exists $p$ such that $s\models\MC{P}_{\leq p}(\psi)$, 
but $r\not\models\MC{P}_{\leq p}(\psi)$, which contradicts the assumption. Thus, we have the claim.
\end{enumerate}

To proceed with the proof of the main theorem, we show that for each $s\TRAN{\lambda_1}\mu$,
there exists $r\TRANP{\lambda_1}\mu'$ such that $\mu~\MC{R}~\mu'$. 
Due to the above proven claim, it is enough to focus on transitions with 
same rates. 
We proceed by contradiction, and assume there exists a set of transitions
$\{\mu'_i\mid r\TRAN{\lambda_1}\mu'_i\}$ with
$1\leq i\leq n$, but there does not exist $\{w_i\in [0,1]\}$ 
such that $\mu~\MC{R}~\mu'$ where $\mu'=\sum_{1\leq i\leq n} w_i\cdot\mu'_i$. 
In order to get a contradiction, we need to find a formula $\phi$ which is satisfied 
by $s$ but not $r$, or the other way around. We consider the following cases:
\begin{enumerate}
\item $\ABSORB{s}$ i.e. $s$ is a silent state. This case is impossible 
since all the derivations of $s$ will stay in the same equivalence class 
$\ESET[]{s}$, as well as $r$, thus there exists $r\TRANP{\lambda_1}\nu$ such that $\mu(\ESET[]{s})=\nu(\ESET[]{s})=1$.
\item
$\SUC(s)\leq 2$ i.e. there exists at most two equivalence classes
$C_1,C_2\subseteq C$ such that $\mu(C_1\cup C_2) = 1$, in other
words, $\mu(C_1)=1-\mu(C_2)$. In case of $\SUC(s)$ is a singleton set,
we simply set $C_2=\emptyset$. We consider the following cases:
\begin{enumerate}
\item $\mu'_1(C_1) \leq \mu'_2(C_1) < \mu(C_1)$. Let $\psi=\X^{[0,\infty)}\phi_{C_1}$,
the maximal probability of paths starting from $r$ satisfying $\psi$ is $\mu(C_1)$,
while the maximal probability for $s$ is $\mu'_2(C_1)$ less than $\mu(C_1)$,
thus there exists $p$ such that $s\models\MC{P}_{\leq p}(\psi)$, but 
$r\not\models\MC{P}_{\leq p}(\psi)$, which contradict the assumption.
\item $\mu'_2(C_1) \geq \mu'_1(C_1) > \mu(C_1)$. This case is similar with
the case above, and is omitted here.
\item $\mu'_1(C_1) \leq \mu(C_1) \leq \mu'_2(C_1)$. In this case we can make sure that
there exists $w_1,w_2$ such that $w_1+w_2=1$ and $w_1\cdot\mu'_1(C_1) +
w_2\cdot\mu'_2(C_1)=\mu(C_1)$, therefore
\begin{align*}
w_1\cdot\mu'_1(C_2) +
w_2\cdot\mu'_2(C_2) 
&=w_1\cdot(1-\mu'_1(C_1)) + w_2\cdot(1- \mu'_2(C_1))\\
&= w_1+w_2-(w_1\cdot\mu'_1(C_1) + w_2\cdot\mu'_2(C_1))\\
&=1-\mu'(C_1)=\mu'(C_2)
\end{align*} 
thus $(w_1\cdot\mu'_1+w_2\cdot\mu'_2)=\mu'$ such that $\mu~\MC{R}~\mu'$ as we expect. 
Note this cannot be generalized to the case when $\SUC(s)>2$. 
\end{enumerate}
\item
We consider the -- most involved -- remaining case:
$\SUC(s)>2$.
Note that every combined transition of $r$ can be seen as a
combined transition of two other (combined) transitions of $r$.
We fix two arbitrary (combined) transitions of $r$: 
$r\TRANP{\lambda_1}\mu'_1$ and $r\TRANP{\lambda_1}\mu'_2$, thus
\begin{eqnarray}\label{eq:all}
\forall 0\leq w_1,w_2\leq 1. & &w_1 + w_2 = 1 \\\notag
&\wedge&\mu~\not\MC{R}~(w_1\cdot\mu'_1 +
    w_2\cdot\mu'_2).
\end{eqnarray}
Let $\SUPP(\mu)=\{s_1,s_2,\ldots,s_n\}$. For simplicity we assume
that $s_1,\ldots,s_n$ belong to different equivalence classes.
For $1\leq i\leq n$, define: $\mu(s_i)=a_i, \mu'_1(s_i)=b_i, \mbox{
and }\mu'_2(s_i)=c_i .$
According to Eq.~(\ref{eq:all}), for each $k$ there must exist $1\leq j\neq k\leq n$ 
such that there does not exist $0\leq w_1,w_2\leq 1$ with
$w_1+w_2=1$ such that $w_1\cdot b_k + w_2\cdot c_k=a_k$ and
$w_1\cdot b_j + w_2\cdot c_j=a_j$, otherwise $\mu~\MC{R}~(w_1\mu'_1+w_2\mu'_2)$ 
which contradicts Eq.~(\ref{eq:all}). The idea now is then to construct a formula $\phi$ which is
satisfied by  $s$ but not $r$.
There are several cases to be considered 
depending on whether $a_k\in [b_k,c_k]$ and/or $a_j\in[b_j,c_j]$.
Most of the cases are trivial except when $a_k\in(b_k,c_k)$ and
$a_j\in(c_j,b_j)$ with $c_k\geq b_k$ and $b_j\geq
c_j$. For instance if $a_k>b_k,c_k$,
$s$ will evolve into $s_k$ with higher probability than $r$, so $\phi$
is easy to give. 

Let $\psi := (\XI[{[a,b]}]s_j)\lor (\XI[{[a',b']}]s_k),$ 
 where the names of states are used as abbreviations of the state formulas characterizing the  equivalence classes where they are located.
Then the probability of paths starting from $s$ satisfying $\psi$ by 
choosing transition
$s\TRAN{\lambda_1}\mu$ is equal to
$p(s,\mu):=a_j\cdot\rho_2+a_k\cdot\rho_1$,
where $\rho_1=(e^{-\lambda_1 a} - e^{-\lambda_1 b})$ and
$\rho_2=(e^{-\lambda_1 a'} - e^{-\lambda_1 b'})$. 
Similarly, the probabilities of paths starting from $r$ satisfying $\psi$ by
choosing transitions  $r\TRAN{\lambda_1}\mu'_1$ and
$r\TRAN{\lambda_1}\mu'_2$ are equal to
$p(r,\mu'_1):=b_j\cdot\rho_2+b_k\cdot\rho_1$ and $p(r,\mu'_2)=c_j\cdot\rho_2+c_k\cdot\rho_1$ respectively.

Now it is sufficient to prove that we can always find $0\le a\le b$ and 
$0\le a'\le b'$ such that
$p(s,\mu)>\max\{p(r,\mu'_1),p(r,\mu'_2)\}$. 
\begin{enumerate}
\item $\frac{b_j-a_j}{a_k-b_k} < \frac{a_j-c_j}{c_k-a_k}$:
Let $\frac{\rho_1}{\rho_2}\in(\frac{b_j-a_j}{a_k-b_k},\frac{a_j-c_j}{c_k-a_k})$,
then we have $a_k\cdot\rho_1 + a_j\cdot\rho_2 >\max\{b_k\cdot\rho_1 + b_j\cdot\rho_2, c_k\cdot\rho_1 + c_j\cdot\rho_2\}$
i.e. $p(s,\mu)>\max\{p(r,\mu'_1),p(r,\mu'_2)\}$ as we shall prove. 
Note that $\frac{\rho_1}{\rho_2}=\frac{e^{-\lambda_1 a} - e^{-\lambda_1 b}}{e^{-\lambda_1 a'} - e^{-\lambda_1 b'}}$ ranges over $[0,\infty)$ by choosing
different values for $a,b,a'$, and $b'$, therefore the discriminating formula
always exists, we get contradiction.   The case when $\frac{b_j-a_j}{a_k-b_k}>\frac{a_j-c_j}{c_k-a_k}$ can be proved in a similar way, and is omitted here.
\item $\frac{b_j-a_j}{a_k-b_k}=\frac{a_j-c_j}{c_k-a_k}$:
This case is impossible, otherwise there
exists $0\leq w_1,w_2\leq 1$ such that $w_1\cdot b_k + w_2\cdot
c_k=a_k$ and $w_1\cdot b_j + w_2\cdot c_j=a_j$ with $w_1+w_2=1$,
simply let $w_1=\frac{1}{k+1}$ and $w_2=\frac{k}{k+1}$ where $k=\frac{a_k-b_k}{c_k-a_k}$. 
\end{enumerate}
\end{enumerate}

Since $\CSL^{\lor}$ is a sub-logic of $\CSLstar$, trivially $\sim_{\CSLstar}~\subseteq~\sim_{\CSL^{\lor}}$,
therefore $\sim_{\CSLstar}~\subseteq~\sim$, which completes the proof.
\qed
\end{proof}

\subsection{Proof of Lemma~\ref{lem:uniformization mtl}}
\begin{proof}
Since in Lemma~\ref{lem:mtl cylinder} we have shown that for any $s$ and $\psi$,
the paths starting from $s$ and satisfying $\psi$ can be represented by
a set of disjoint cylinders. It suffices to prove that for each $\Sch$ of $s$,
$\PM_{\Sch,s}(C)=\PM_{\UNIFORM{\Sch},\UNIFORM{s}}(\UNIFORM{C})$
for each cylinder $C$, where $\UNIFORM{C}$ is a cylinder same as $C$ 
except that $\iPath{\UNIFORM{C}}{i}=\UNIFORM{\iPath{C}{i}}$ for each $0\le i\le\Len{C}$,
and $\UNIFORM{\Sch}$ is the scheduler mimicking $\Sch$ stepwise. 
Let $C=s_0,I_0,s_1,\ldots,s_n$, we shall prove by induction on $n$.
The case when $n=0$ is trivial, since $\PM_{\Sch,s}(C)$ is either 1 or 0 depending on
whether $s_0=s$. Suppose that $n>0$, $s_0=s$, and $I_0=[a,b]$,
Since it has been proved in~\cite[Sec. 6]{RabeS11} that uniformization does not 
change time-bounded reachability, that is, 
the probability from $s_0$ to $s_1$ in time interval $I$
is equal to the probability from $\UNIFORM{s_0}$ to $\UNIFORM{s_1}$ in time interval $I$
for any $I$. Let $F(t)$ denote the probability from $s_0$ to $s_1$ in time interval $[0,t]$ given scheduler $\Sch$,
and $f(t)=\frac{dF(t)}{dt}$, that is, $f(t)$ is the corresponding probability density function, 
similarly we can define $\UNIFORM{F}(t)$ and $\UNIFORM{f}(t)$.
According to Eq.~(\ref{eq:measure}), $\PM_{\Sch,s}(C)=\PM_{\Sch,s}(C,0)=
\int_{t\in I_0}f(t)\cdot\PM_{\Sch,s}(C,t)dt$
and $\PM_{\UNIFORM{\Sch},\UNIFORM{s}}(C)=\PM_{\UNIFORM{\Sch},\UNIFORM{s}}(C,0)=
\int_{t\in I_0}\UNIFORM{f}(t)\cdot\PM_{\UNIFORM{\Sch},\UNIFORM{s}}(C,t)dt.
$
Since $F(t)=\UNIFORM{F}(t)$ for any $t$, we have $f(t)=\UNIFORM{f}(t)$ for any $t$. 
By induction hypothesis, $\PM_{\Sch,s}(C,t)=\PM_{\UNIFORM{\Sch},\UNIFORM{s}}(C,t)$
for any $t$, thus 
$$f(t)\cdot\PM_{\Sch,s}(C,t)=\UNIFORM{f}(t)\cdot\PM_{\UNIFORM{\Sch},\UNIFORM{s}}(C,t)$$
for any $t$, which indicates that $\PM_{\Sch,s}(C)=\PM_{\UNIFORM{\Sch},\UNIFORM{s}}(C)$.
\qed
\end{proof}

\subsection{Proof of Section~\ref{thm:mtl weak}}
\begin{proof}
  The proof can be presented as the following chain:
  $$
  s~\WBS~r \stackrel{Def.~\ref{def:weak bisimulation}}{\Longleftrightarrow}
  \UNIFORM{s}~\sim~\UNIFORM{r} \stackrel{Thm.~\ref{thm:mtl strong}}{\Longleftrightarrow}
  \UNIFORM{s}~\sim_{\CSLstar}~\UNIFORM{r} \stackrel{Lem.~\ref{lem:uniformization mtl}}{\Longleftrightarrow}
  s~\sim_{\CSLstar_{\backslash\!\X}}~r.
  $$
\end{proof}

\section{Proofs of Section~\ref{sec:relation to ctmc}}
\subsection{Proof of Lemma~\ref{lem:weak bisimulation CTMC}}
\begin{proof}
The proof of $\sim~=~\sim_{\CTMC}$ is trivial, since in a $\CTMC$ there is only one transition for each state, thus we can simply replace $\TRANP{}$ with $\TRAN{}$. The condition $\lambda_s\cdot\mu_s(C)=\lambda_r\cdot\mu_r(C)$ for each $C$ coincides with the condition: i) $\lambda_s=\lambda_r$, and ii) $\mu_s~\MC{R}~\mu_r$.

We first prove that $\WBS$ implies $\WBS_{\CTMC}$. Let $\MC{R}=\WBS$ and $s~\MC{R}~r$. We shall prove
that $\MC{R}$ is a weak bisimulation as defined in Definition~\ref{def:weak bisimulation CTMC}. Suppose that $s\TRAN{\lambda_s}\mu_s$, we need to prove that $r\TRAN{\lambda_r}\mu_r$ such that $\lambda_s\cdot\mu_s(C)=\lambda_r\cdot\mu_r(C)$ for all $C\in S/\MC{R}$ with $C\neq\ESET{s}=\ESET{r}$. According to Definition~\ref{def:weak bisimulation}, $s~\WBS~r$ if $\UNIFORM{s}~\iBS{}~\UNIFORM{r}$. By Definition~\ref{def:uniformization}, if $s\TRAN{\lambda_s}\mu_s$, then $\UNIFORM{s}\TRAN{E}\mu$ such that $\mu=\frac{E-\lambda_s}{E}\cdot\DIRAC{\UNIFORM{s}} + \frac{\lambda_s}{E}\cdot\UNIFORM{\mu_s}$ where $\UNIFORM{\mu_s}$ is defined as expected. Therefore there exists $\UNIFORM{r}\TRAN{E}\nu$ such that $\mu~\iBS{}~\nu$ where $\nu=\frac{E-\lambda_r}{E}\cdot\DIRAC{\UNIFORM{r}} + \frac{\lambda_r}{E}\cdot\UNIFORM{\mu_r}$. Obviously if there exists $C\in S/\MC{R}$ with $C\neq\ESET{s}=\ESET{r}$ such that $\lambda_s\cdot\mu_s(C)\neq\lambda_r\cdot\mu_r(C)$, then $\mu(\UNIFORM{C})\neq\nu(\UNIFORM{C})$ since $\mu(\UNIFORM{C})=\frac{\lambda_s}{E}\cdot\mu_s(C)$ and $\nu(\UNIFORM{C})=\frac{\lambda_r}{E}\cdot\mu_r(C)$, thus it is impossible for $\mu~\iBS{}~\nu$.

To show that $\WBS_{\CTMC}$ implies $\WBS$, it is enough to show that $\MC{R}=\WBS_{\CTMC}$ is a weak bisimulation according to Definition~\ref{def:weak bisimulation}, that is, we need show that $\MC{R}=\{(\UNIFORM{s},\UNIFORM{r})\mid s~\WBS_{\CTMC}~r\}$ is a strong bisimulation by Definition~\ref{def:strong bisimulation}. Suppose that $\UNIFORM{s}\TRAN{E}\mu$, then there exists $s\TRAN{\lambda_s}\mu_s$ such that $\mu=\frac{E-\lambda_s}{E}\cdot\DIRAC{\UNIFORM{s}} + \frac{\lambda_s}{E}\cdot\UNIFORM{\mu_s}$. Since $s~\WBS_{\CTMC}~r$, there exists $r\TRAN{\lambda_r}\mu_r$ such that $\lambda_s\cdot\mu_s(C)=\lambda_r\cdot\mu_r(C)$ for all equivalence class $C\neq \ESET[\WBS_{\CTMC}]{s}=\ESET[\WBS_{\CTMC}]{r}$. Therefore there exists $\UNIFORM{r}\TRAN{E}\nu$ such that $\nu=\frac{E-\lambda_r}{E}\cdot\DIRAC{\UNIFORM{r}} + \frac{\lambda_r}{E}\cdot\UNIFORM{\mu_r}$ and $\mu(\UNIFORM{C})=\nu(\UNIFORM{C})$ for all equivalence class $\UNIFORM{C}\neq \ESET{\UNIFORM{s}}=\ESET{\UNIFORM{r}}$, since $\mu(\UNIFORM{C})=\frac{\lambda_s}{E}\cdot\mu_s(C)$ and $\nu(\UNIFORM{C})=\frac{\lambda_r}{E}\cdot\mu_r(C)$ i.e. $\mu~\MC{R}~\nu$.
\qed
\end{proof}


\begin{thebibliography}{10}

\bibitem{Alur1994RTL}
R.~Alur and T.~A. Henzinger.
\newblock A really temporal logic.
\newblock {\em J. ACM}, 41(1):181--203, 1994.

\bibitem{Aziz1996VCT}
A.~Aziz, K.~Sanwal, V.~Singhal, and R.~K. Brayton.
\newblock Verifying continuous time {M}arkov chains.
\newblock In {\em CAV}, volume 1102 of {\em LNCS}, pages 269--276.
  Springer-Verlag, 1996.

\bibitem{Aziz1995UWT}
A.~Aziz, V.~Singhal, and F.~Balarin.
\newblock It usually works: The temporal logic of stochastic systems.
\newblock In {\em CAV}, volume 939 of {\em LNCS}, pages 155--165.
  Springer-Verlag, 1995.

\bibitem{Baier2003MAC}
C.~Baier, B.~Haverkort, H.~Hermanns, and J.-P. Katoen.
\newblock Model-checking algorithms for continuous-time {M}arkov chains.
\newblock {\em IEEE Trans. Softw. Eng.}, 29(6):524--541, 2003.

\bibitem{BaierHKH05}
C.~Baier, H.~Hermanns, J.-P. Katoen, and B.~R. Haverkort.
\newblock Efficient computation of time-bounded reachability probabilities in
  uniform continuous-time {M}arkov decision processes.
\newblock {\em Theor. Comput. Sci.}, 345(1):2--26, 2005.

\bibitem{BaierKHW05}
C.~Baier, J.-P. Katoen, H.~Hermanns, and V.~Wolf.
\newblock Comparative branching-time semantics for {M}arkov chains.
\newblock {\em Inf. Comput.}, 200(2):149--214, 2005.

\bibitem{BoedeHHJPPWB06}
E.~B{\"o}de, M.~Herbstritt, H.~Hermanns, S.~Johr, T.~Peikenkamp, R.~Pulungan,
  R.~Wimmer, and B.~Becker.
\newblock Compositional performability evaluation for {STATEMATE}.
\newblock In {\em QEST}, pages 167--178. IEEE, 2006.

\bibitem{bouyer2007cost}
P.~Bouyer, N.~Markey, J.~Ouaknine, and J.~Worrell.
\newblock The cost of punctuality.
\newblock In {\em LICS}, pages 109--120. IEEE, 2007.

\bibitem{Browne1988CFK}
M.~C. Browne, E.~M. Clarke, and O.~Gr\"{u}mberg.
\newblock Characterizing finite {K}ripke structures in propositional temporal
  logic.
\newblock {\em Theor. Comput. Sci.}, 59(1-2):115--131, July 1988.

\bibitem{BrunoDF81}
J.~Bruno, P.~Downey, and G.~N. Frederickson.
\newblock Sequencing tasks with exponential service times to minimize the
  expected flow time or makespan.
\newblock {\em J.\ ACM}, 28(1):100--113, 1981.

\bibitem{Buchholz2011MCA}
P.~Buchholz, E.~M. Hahn, H.~Hermanns, and L.~Zhang.
\newblock Model checking algorithms for {CTMDP}s.
\newblock In {\em Proceedings of the 23rd international conference on Computer
  aided verification}, LNCS, pages 225--242. Springer-Verlag, 2011.

\bibitem{BuSc10}
P.~Buchholz and I.~Schulz.
\newblock Numerical analysis of continuous time {M}arkov decision processes
  over finite horizons.
\newblock {\em Computers \& Operations Research}, 38(3):651--659, 2011.

\bibitem{Chen2011TVC}
T.~Chen, M.~Diciolla, M.~Kwiatkowska, and A.~Mereacre.
\newblock Time-bounded verification of {CTMC}s against real-time
  specifications.
\newblock In {\em FORMATS}, volume 6919 of {\em LNCS}, pages 26--42.
  Springer-Verlag, 2011.

\bibitem{Cormen2001IA}
T.~H. Cormen, C.~Stein, R.~L. Rivest, and C.~E. Leiserson.
\newblock {\em Introduction to Algorithms}.
\newblock McGraw-Hill Higher Education, 2nd edition, 2001.

\bibitem{DArgenioWTC09}
P.~R. D'Argenio, N.~Wolovick, P.~S. Terraf, and P.~Celayes.
\newblock Nondeterministic labeled {M}arkov processes: Bisimulations and
  logical characterization.
\newblock In {\em QEST}, pages 11--20. IEEE, 2009.

\bibitem{DesharnaisGJP04}
J.~Desharnais, V.~Gupta, R.~Jagadeesan, and P.~Panangaden.
\newblock Metrics for labelled {M}arkov processes.
\newblock {\em Theor. Comput. Sci.}, 318(3):323--354, 2004.

\bibitem{DesharnaisGJP10}
J.~Desharnais, V.~Gupta, R.~Jagadeesan, and P.~Panangaden.
\newblock Weak bisimulation is sound and complete for p{CTL}$^{\mbox{*}}$.
\newblock {\em Inf. Comput.}, 208(2):203--219, 2010.

\bibitem{DesharnaisP03}
J.~Desharnais and P.~Panangaden.
\newblock Continuous stochastic logic characterizes bisimulation of
  continuous-time {M}arkov processes.
\newblock {\em J. Log. Algebr. Program.}, 56(1-2):99--115, 2003.

\bibitem{hansson1994logic}
H.~Hansson and B.~Jonsson.
\newblock {A logic for reasoning about time and reliability}.
\newblock {\em Formal aspects of computing}, 6(5):512--535, 1994.

\bibitem{HaverkortHK00}
B.~R. Haverkort, H.~Hermanns, and J.-P. Katoen.
\newblock On the use of model checking techniques for dependability evaluation.
\newblock In {\em SRDS}, pages 228--237, 2000.

\bibitem{HermannsPSWZ11}
H.~Hermanns, A.~Parma, R.~Segala, B.~Wachter, and L.~Zhang.
\newblock Probabilistic logical characterization.
\newblock {\em Inf. Comput.}, 209(2):154--172, 2011.

\bibitem{jenkins2010alternating}
M.~Jenkins, J.~Ouaknine, A.~Rabinovich, and J.~Worrell.
\newblock Alternating timed automata over bounded time.
\newblock In {\em LICS}, pages 60--69. IEEE, 2010.

\bibitem{Jonsson}
B.~Jonsson, K.~Larsen, and Y.~Wang.
\newblock Probabilistic extensions of process algebras.
\newblock In J.~Bergstra, A.~Ponse, and S.~Smolka, editors, {\em Handbook of
  Process Algebra}, pages 685--710. Elsevier, 2001.

\bibitem{KatoenZHHJ09}
J.-P. Katoen, I.~S. Zapreev, E.~M. Hahn, H.~Hermanns, and D.~N. Jansen.
\newblock The ins and outs of the probabilistic model checker {MRMC}.
\newblock In {\em QEST}, pages 167--176, 2009.

\bibitem{Koymans1990SRP}
R.~Koymans.
\newblock Specifying real-time properties with metric temporal logic.
\newblock {\em Real-Time Syst.}, 2(4):255--299, Oct. 1990.

\bibitem{KNP11}
M.~Kwiatkowska, G.~Norman, and D.~Parker.
\newblock {PRISM} 4.0: verification of probabilistic real-time systems.
\newblock In {\em CAV}, volume 6806 of {\em LNCS}, pages 585--591.
  Springer-Verlag, 2011.

\bibitem{MartinK2007}
M.~R. Neuh\"{a}u{\ss}er and J.-P. Katoen.
\newblock Bisimulation and logical preservation for continuous-time {M}arkov
  decision processes.
\newblock In {\em CONCUR}, volume 4703 of {\em LNCS}, pages 412--427.
  Springer-Verlag, 2007.

\bibitem{NeuhausserSK09}
M.~R. Neuh{\"a}u{\ss}er, M.~Stoelinga, and J.-P. Katoen.
\newblock Delayed nondeterminism in continuous-time {M}arkov decision
  processes.
\newblock In {\em FOSSACS}, volume 5504 of {\em LNCS}, pages 364--379, 2009.

\bibitem{Ouaknine2005DMT}
J.~Ouaknine and J.~Worrell.
\newblock On the decidability of metric temporal logic.
\newblock In {\em LICS}, pages 188--197. IEEE, 2005.

\bibitem{Pnueli1977TLP}
A.~Pnueli.
\newblock The temporal logic of programs.
\newblock In {\em FOCS}, pages 46--57. IEEE, 1977.

\bibitem{RabeS11}
M.~N. Rabe and S.~Schewe.
\newblock Finite optimal control for time-bounded reachability in {CTMDP}s and
  continuous-time {M}arkov games.
\newblock {\em Acta Inf.}, 48(5-6):291--315, 2011.

\bibitem{SegalaL95}
R.~Segala and N.~A. Lynch.
\newblock Probabilistic simulations for probabilistic processes.
\newblock {\em Nord. J. Comput.}, 2(2):250--273, 1995.

\bibitem{SharmaK2012}
A.~Sharma and J.-P. Katoen.
\newblock Weighted lumpability on {M}arkov chains.
\newblock In {\em Perspectives of Systems Informatics}, volume 7162 of {\em
  LNCS}, pages 322--339. Springer-Verlag, 2012.

\bibitem{SongZG2011}
L.~Song, L.~Zhang, and J.~Godskesen.
\newblock Bisimulations meet {PCTL} equivalences for probabilistic automata.
\newblock In {\em CONCUR}, volume 6901 of {\em LNCS}, pages 108--123, 2011.

\bibitem{SongBTS2012}
L.~Song, L.~Zhang, and J.~C. Godskesen.
\newblock The branching time spectrum for continuous-time mdps.
\newblock {\em CoRR}, abs/1204.1848, 2012.

\bibitem{Desharnais2011computing}
M.~Tracol, J.~Desharnais, and A.~Zhioua.
\newblock Computing distances between probabilistic automata.
\newblock In {\em QAPL}, pages 148--162, 2011.

\bibitem{Glabbeek93}
R.~J. van Glabbeek.
\newblock The linear time - branching time spectrum ii.
\newblock In {\em CONCUR}, volume 715 of {\em LNCS}, pages 66--81, 1993.

\bibitem{Glabbeek1}
R.~J. van Glabbeek.
\newblock The linear time - branching time spectrum i.
\newblock In J.~Bergstra, A.~Ponse, and S.~Smolka, editors, {\em Handbook of
  Process Algebra}, pages 3--99. Elsevier, 2001.

\bibitem{WolovickJ2007}
N.~Wolovick and S.~Johr.
\newblock A characterization of meaningful schedulers for continuous-time
  {M}arkov decision processes.
\newblock In {\em FORMATS}, volume 4202 of {\em LNCS}. Springer-Verlag, 2006.

\bibitem{ZhangHEJ08}
L.~Zhang, H.~Hermanns, F.~Eisenbrand, and D.~N. Jansen.
\newblock Flow faster: Efficient decision algorithms for probabilistic
  simulations.
\newblock {\em Logical Methods in Computer Science}, 4(4), 2008.

\bibitem{ZhangN10}
L.~Zhang and M.~R. Neuh{\"a}u{\ss}er.
\newblock {Model Checking Interactive Markov Chains}.
\newblock In {\em TACAS}, volume 6015 of {\em LNCS}, pages 53--68.
  Springer-Verlag, 2010.

\end{thebibliography}
\end{document}